\def\dOi{11(1:4)2015}
\keywords{proof theory, proof complexity, deep inference, propositional logic, atomic flows, normalisation, graph rewriting}
\theoremstyle{definition}
\newcommand{\SKS}{\mathsf{SKS}}
\newcommand{\KS}{\mathsf{KS}}
\newcommand{\fff}{\bot}
\newcommand{\ttt}{\top}
\newcommand{\aSKS}{\mathsf{KS}^+}
\newcommand{\GNorm}{\mathsf{norm}}
\newcommand{\ai  }{\mathsf{ai}}
\newcommand{\aw  }{\mathsf{aw}}
\newcommand{\ac  }{\mathsf{ac}}
\newcommand{\aid }{{\ai{\downarrow}}}
\newcommand{\awd }{{\aw{\downarrow}}}
\newcommand{\acd }{{\ac{\downarrow}}}
\newcommand{\aiu }{{\ai{\uparrow}}}
\newcommand{\awu }{{\aw{\uparrow}}}
\newcommand{\acu }{{\ac{\uparrow}}}
\newcommand{\swi }{\mathsf{s}}
\newcommand{\med }{\mathsf{m}}
\newcommand  {\gw  }{\mathsf w}
\newcommand  {\gwd }{{\gw{\downarrow}}}
\newcommand  {\gwu }{{\gw{\uparrow}}}
\newcommand  {\gc  }{\mathsf c}
\newcommand  {\gcdown }{{\gc{\downarrow}}}
\newcommand  {\gcu }{{\gc{\uparrow}}}
\newcommand  {\gi  }{\mathsf i}
\newcommand  {\giu }{{\gi{\uparrow}}}
\newcommand  {\gidown }{{\gi{\downarrow}}}
\newcommand{\res}{\mathsf{res}}
\newcommand{\one}{}
\newcommand{\two}{}
\newcommand{\rwdcd}{{{\mathsf w}{\downarrow}{\hbox{-}}{\mathsf c}{\downarrow}}}
\newcommand{\rwdwu}{{{\mathsf w}{\downarrow}{\hbox{-}}{\mathsf w}{\uparrow  }}}
\newcommand{\rwdcu}{{{\mathsf w}{\downarrow}{\hbox{-}}{\mathsf c}{\uparrow  }}}
\newcommand{\rcuwu}{{{\mathsf c}{\uparrow  }{\hbox{-}}{\mathsf w}{\uparrow  }}}
\newcommand{\rcdwu}{{{\mathsf c}{\downarrow}{\hbox{-}}{\mathsf w}{\uparrow  }}}
\newcommand{\rcdcu}{{{\mathsf c}{\downarrow}{\hbox{-}}{\mathsf c}{\uparrow  }}}
\newcommand{\ridwu}{{{\mathsf i}{\downarrow}{\hbox{-}}{\mathsf w}{\uparrow  }}}
\newcommand{\ridcu}{{{\mathsf i}{\downarrow}{\hbox{-}}{\mathsf c}{\uparrow  }}}
\newcommand{\RD}[1]{{\color{Red}#1}}
\newcommand{\GR}[1]{{\color{Green}#1}}
\newcommand{\DO}[1]{{\color{DarkOrchid}#1}}
\newcommand{\PB}[1]{{\color{ProcessBlue}#1}}
\newcommand{\MG}[1]{{\color{Magenta}#1}}
\newcommand{\SG}[1]{{\color{SpringGreen}#1}}
\newcommand{\RS}[1]{{\color{RawSienna}#1}}
\newcommand{\YO}[1]{{\color{YellowOrange}#1}}
\newcommand{\PW}[1]{{\color{Periwinkle}#1}}
\newcommand{\three}{}
\newcommand{\four }{}
\newcommand{\flow}{fl}
\begin{document}
\title[On The Relative Proof Complexity of Deep Inference via Atomic Flows]{On the Relative Proof Complexity of Deep Inference via Atomic Flows}
\author{Anupam Das}
\address{\'Ecole Normale Sup\'erieure de Lyon (ENS Lyon), France}
\email{anupam.das@ens-lyon.fr}

\begin{abstract}
We consider the proof complexity of the minimal complete fragment, $\KS$, of standard deep inference systems for propositional logic. To examine the size of proofs we employ \emph{atomic flows}, diagrams that trace structural changes through a proof but ignore logical information. As results we obtain a polynomial simulation of versions of Resolution, along with some extensions. We also show that these systems, as well as bounded-depth Frege systems, cannot polynomially simulate $\KS$, by giving polynomial-size proofs of certain variants of the propositional pigeonhole principle in $\KS$.
%Additionally, polynomial-size proofs of an encoding of the Clique-Coclique principle are given, yielding a separation from Cutting Planes with small coefficients.
\end{abstract}

\maketitle

\section{Introduction}
Deep inference is a relatively recent proof methodology whose systems differ from other formalisms by allowing derivations themselves to be composed by logical connectives. One of its main features is \emph{locality}, i.e.\ inference steps can be checked in constant time, a property that is impossible to achieve in Gentzen systems \cite{Brun:03:Two-Rest:mn}. In recent years there has been an increasing interest in the proof complexity of deep inference \cite{BrusGugl:07:On-the-P:fk} \cite{Jera::On-the-C:kx} \cite{Stra:08:Extensio:kk} \cite{BrusGuglGundPari:09:Quasipol:kx} \cite{Das:11:Depth-Change} \cite{Das:13:The-Pige:fk}, in particular the weaker systems initially introduced by Br\"{u}nnler and Tiu \cite{BrunTiu:01:A-Local-:ly}. Perhaps the most notable result is that a certain system, denoted $\KS\cup\{\gcu\}$, quasipolynomially\footnote{A quasipolynomial in $n$ is a function $2^{\log^{k} n }$.} simulates Frege systems \cite{Jera::On-the-C:kx} \cite{BrusGuglGundPari:09:Quasipol:kx}. It is conjectured that this can be improved to a polynomial simulation, and so proving nontrivial lower bounds for $\KS\cup\{\gcu\}$ is likely equivalent to proving them for Frege systems, a task which has escaped proof complexity theorists for years.

However this quasipolynomial simulation relies crucially on the presence of ``dag-like behaviour'', manifested in deep inference by a particular rule, \emph{cocontraction}: $\vlinf{\gcu}{}{A\vlan A}{A}$. Without it we have a minimal complete system closed under deep inference, $\KS$. This system is free of compression mechanisms, in the sense that a proof of a conjunction can be `partitioned' into proofs of each conjunct, unlike proofs in systems that are dag-like or contain cut. This is explained further in \cite{Das:12:CompMech}.

It is conjectured that $\KS$ is unable to polynomially simulate $\KS\cup\{\gcu\}$ \cite{BrusGugl:07:On-the-P:fk} \cite{BrusGuglGundPari:09:Quasipol:kx} \cite{Das:11:Depth-Change}  \cite{Stra:08:Extensio:kk}, raising the question of where exactly it fits in the hierarchy of proof systems.

\emph{Atomic flows} are diagrams that track structural changes in a proof (duplication, creation and destruction of atoms) but ignore logical information. In this work they serve as a useful abstraction because of certain rewriting procedures on them which can be used to manipulate derivations soundly without any mention of logical syntax. Atomic flows are introduced formally in \cite{GuglGund:07:Normalis:lr} and a comprehensive account can be found in \cite{Gund:09:A-Genera:kx}.

In this paper we focus on upper bounds and simulations to demonstrate the relative strength of $\KS$. The starting points in our arguments are proofs in a system obtained by extending $\KS\cup\{\gcu\}$ by the \emph{coweakening} rule: $\vlinf{\gwu}{}{\ttt}{A}$; the resulting system $\KS\cup\{\gcu,\gwu\}$ is denoted $\aSKS$ in this paper. We then appeal to sound rewriting rules on the atomic flows of these proofs to show that cocontraction and coweakening steps can, in certain cases, be eliminated from a proof in polynomial time. 

It is worth mentioning here that the addition of coweakening makes little difference to proof complexity, indeed it is not difficult to see from the rewriting rules in Fig.\ \ref{FigRed} that coweakening steps can be eliminated in time linear in the size of the proof. Rather the real generators of complexity in our proofs are the interactions between contraction and cocontraction nodes in the atomic flows, as we show in Prop.~\ref{CocConSizePath} and Lemma~\ref{NoConLoopsQuad}.

In Sect.~\ref{DITreeTT} we give a simple example of how atomic flows can be used to normalise a na\"{i}ve	 encoding of truth table proofs in $\aSKS$ to produce a polynomial simulation in $\KS$. As a corollary we obtain a superpolynomial separation of $\KS$ from tree-like cut-free Gentzen systems, since they are unable to simulate truth tables \cite{springerlink:10.1007/BF00156916}, a new proof of a result appearing in \cite{BrusGugl:07:On-the-P:fk}.
%\footnote{In fact the proofs of variants of the propositional pigeonhole principle in later sections also imply this result.}

In Sect.~\ref{StrongViaFPHP} we consider stronger systems; we improve a result of Je{\v r}\'abek's that $\aSKS$ has polynomial-size proofs of the functional and onto versions of the propositional pigeonhole principle by showing that they can be polynomially transformed into $\KS$ proofs of the same conclusions. This immediately entails that cut-free Gentzen systems, Resolution and even bounded-depth Frege systems are exponentially separated from $\KS$ by the results of \cite{PitBeaImp:93:ExpLBPHP} and \cite{KraPudWood:95:ExpPHPbdF}. 
%We also give polynomial-size proofs of a certain encoding of the Clique-Coclique principle, yielding a separation from Cutting Planes with small coefficients. Both of these proofs are inspired by those appearing in \cite{Atserias00monotoneproofs} and \cite{AtseGalePudl:02:Monotone:yu} for the monotone sequent calculus, under the translations given in \cite{Bru06nf} and \cite{Jera::On-the-C:kx}. In the other direction we give a polynomial simulation of Resolution and some basic extensions in $\KS$.

In Sect.~\ref{VerResSim} we consider simulations in $\KS$ of other proof systems. There is already a na\"ive simulation of tree-like cut-free Gentzen sequent calculi, appearing in \cite{BrusGugl:07:On-the-P:fk}, and what amounts to a polynomial simulation of `Resolution with multisets' is outlined in \cite{Gugl:03:Resoluti:ce}. Here we formalise the latter result and also give a polynomial simulation of tree-like Resolution systems, even when sets are the basic data structure. We show that both simulations extend to so-called Resolution$(f)$ systems, introduced by Kraj{\'\i}{\v c}ek in \cite{Krajicek01onthe} and known to be strictly stronger than usual versions of Resolution \cite{Segerlind02aswitching} \cite{Esteban04onthe}.

This paper is a full version of \cite{Das:12:Complexi:kx}, and differs from that work as follows:
\begin{enumerate}
\item Full proofs are given where they were brief or omitted previously.
\item We expand on some of the preliminary work on the complexity of normalisation induced by flow rewriting in Sect.~\ref{prelim}. In particular we provide full proofs of termination and confluence for the rewriting system $\GNorm$ and give explicit reduction strategies that achieve the complexity bounds given in the previous work.
\item In Sect.~\ref{prelim} we use the definitions and notations found in \cite{GuglGund:07:Normalis:lr} and \cite{Gund:09:A-Genera:kx} for atomic flows, to maintain consistency with the existing literature, whereas in \cite{Das:12:Complexi:kx} self-contained definitions were preferred for brevity.
\item There were some errors in the statements of results in the previous work, which have been corrected here. In particular the previously stated simulations of dag-like cut-free Gentzen systems and dag-like Resolution systems are incorrect as presented and we have not been able to amend them. Here we only obtain such simulations for certain cases, namely the versions of Resolution in Sect.~\ref{VerResSim}.
%\item In Sect.~\ref{StrongViaFPHP} we have added polynomial-size proofs of the Clique-Coclique principle, yielding a separation from Cutting Planes with small coefficients. The relative complexity of $\KS$ and Cutting Planes is a question that was raised in the conclusions of the previous work.
\end{enumerate}

\section{Deep inference, atomic flows and normalisation}\label{prelim}
In this section we introduce deep inference systems for propositional logic and \emph{atomic flows}, diagrams that trace the structural changes in a proof. We consider a graph rewriting system $\GNorm$ on flows, corresponding to sound manipulations on proofs, and analyse the complexity of termination in this system.

\subsection{Deep inference}
We consider propositional logic with formulae constructed from literals (propositional variables and their duals), also called \emph{atoms}, over the basis $\{\ttt, \fff, \wedge,\vee \}$, and use the infix symbol $\equiv$ to denote equivalence of expressions. The variables $a,b,c,d$ range over literals, with $\bar{a}, \bar{b},\dots$ denoting their duals, and $A,B,C,D$ range over formulae; both sets of variables may include subscripts or superscripts as necessary.

For clarity we use square brackets $[,]$ for disjunctions and round ones $(,)$ for conjunctions. We generally omit external brackets of an expression, and also internal ones under associativity. This does not cause any confusion when it comes to proofs, since any valid bracketing can be reduced to any other by the $=$ rule in Dfn.~\ref{RulesSystems}.

Note that we do not have a symbol for negation in our language, formulae are always in negation normal form. We may however write $\bar{A}$ to denote the De Morgan dual of a formula $A$, obtained by the following rules: 
\[
\bar{\fff}\equiv \ttt, \quad \bar{\ttt}\equiv \fff,\quad  \bar{\bar{a}}\equiv a, \quad \overline{A\vlan B} \equiv \bar{A}\vlor \bar{B}, \quad \overline{A\vlor B} \equiv \bar{A}\vlan \bar{B}
\]

%A \emph{context} $\xi\vlhole$ is just a formula with a hole occurring in place of a subformula. For example, we could define $\xi\vlhole \equiv \vls[a.(b.[c.\vlhole])]$. This hole can be filled with any formula, or a proof once that is defined 

\newcommand{\prem}{\mathsf{pr}}
\newcommand{\conc}{\mathsf{cn}}

%\begin{defi}[Propositional Logic]
%The language of deep inference consists of propositional variables e.g.\ $a$, $b$, $c$, $d$, units $\ttt$ and $\fff$, binary connectives $\wedge$ and $\vee$ and an involution $\bar\cdot$, defined only on atoms, interpreted as negation. Formulae are built freely over this basis and are denoted by metavariables $A$, $B$, $C$, $D$ etc. We extend negation to all formulae by defining $\bar{A}$ to be the negation normal form of $A$. Syntactic equivalence of formulae is denoted by $\equiv$.
%\end{defi}

\begin{defi}[Rules and systems]\label{RulesSystems}
An \emph{inference rule} is a sound binary relation on formulae decidable in polynomial time, and a \emph{system} is a set of rules. We define the rules we use below, and the systems $\KS=\{\aid,\awd,\acd,\swi,\med\}$, $\aSKS=\KS\cup\{\awu,\acu\}$ and $\SKS=\aSKS\cup\{\aiu\}$.

\[
\begin{array}{@{}c@{}c@{}c@{}c@{}}
   \multispan3{\hfil Atomic structural rules\hfil}%
      &\quad\mbox{Linear logical rules}                                        \\
\noalign{\bigskip}

      \vlinf{\aid}{}{\vls[a.{\bar a}]}{\ttt}&
\qquad\vlinf{\awd}{}a\fff&
\qquad\vlinf{\acd}{}a{\vls[a.a]} &
\qquad\vlinf{\swi}{}{\vls[(A.B).C]}{\vls(A.[B.C])}\\
\noalign{\smallskip}
      \emph{identity}&
\qquad\emph{weakening}&
\qquad\emph{contraction}&\qquad\emph{switch} \\
\noalign{\bigskip}
      \vlinf{\aiu}{}\fff{\vls(a.{\bar a})}&
\qquad\vlinf{\awu}{}\ttt a&
\qquad\vlinf{\acu}{}{\vls (a.a)}a  &\qquad
\vlinf{\med}{}{\vls([A.C].[B.D])}
              {\vls[(A.B).(C.D)]}\\
\noalign{\smallskip}
      \emph{cut}&
\qquad\emph{coweakening}&
\qquad\emph{cocontraction}&\qquad\emph{medial}
\end{array}\quad
\]

%We also have a logical rule $=$ which allows us to apply laws of associativity, commutativity and basic equations with units \cite{BrusGugl:07:On-the-P:fk}.
Note in particular our distinction between variables for literals and formulae in the above rules, and between `structural' and `logical' rules.

We also have the logical rule $=$ which is obtained by closing the equations below under reflexivity, symmetry, transitivity and by applying context closure. We implicitly assume that it is contained in every system.
\[
\begin{array}{cc}
\begin{array}{cc}
\begin{array}{c}
%---------------------------------------
\mbox{\emph{Commutativity}}\\
\noalign{\bigskip}
\vls[A.B]=[B.A]\\
\vls(A.B)=(B.A)
\end{array}
&\qquad
\begin{array}{c}
\mbox{\emph{Associativity}}\\
\noalign{\bigskip}
\vls[[A.B].C]=[A.[B.C]]\\
\vls((A.B).C)=(A.(B.C)) \\
\end{array} \\
\multispan2{$\begin{array}{c}\noalign{\bigskip}\mbox{\emph{Context closure: }} \mbox{if $A=B$, $\star\in\{\wedge,\vee\}$, then $C\star A = C\star B$}\end{array}$}
\end{array}
&\qquad
\begin{array}{c}
\mbox{\emph{Units}}\\
\noalign{\bigskip}
\vls[A.\fff]=A\\
\vls(A.\ttt)=A\\
\vls[\ttt.\ttt]=\ttt\\
\vls(\fff.\fff)=\fff
\end{array}
\end{array}
\]\smallskip
\end{defi}

\noindent A proof that $=$ is decidable in polynomial time can be found in \cite{BrusGugl:07:On-the-P:fk}. Essentially both formulae are just reduced to some canonical form and then compared. Consequently we often omit occurrences of the $=$ rule in proofs and derivations.

\begin{defi}[Proofs and derivations]\label{PrfDer}
We define derivations and premiss and conclusion functions ($\prem$ and $\conc$ respectively) inductively. 
%Every formula $A$ is a derivation with $\prem(A)\equiv A\equiv\conc(A)$. For derivations $\Phi$, $\Psi$: if $\star\in\{\wedge,\vee\}$, $\Phi\star\Psi$ is a derivation with premiss $\prem(\Phi)\star\prem(\Psi)$, conclusion $\conc(\Phi)\star\conc(\Psi)$; if $\vlinf{\rho}{}{\prem(\Psi)}{\conc(\Phi)}$ is an instance of a rule $\rho$, $\vlupsmash{\vlinf{\rho}{}{\Psi}{\Phi}}$ is a derivation with premiss $\prem(\Phi)$, conclusion $\conc(\Psi)$.
\begin{enumerate}
\item Each formula $A$ is a derivation with premiss and conclusion $A$.
\item If $\Phi$ and $\Psi$ are derivations and $\star\in\{\wedge,\vee\}$ then $(\Phi\star\Psi)$ is a derivation with premiss $\prem(\Phi)\star\prem(\Psi)$ and conclusion $\conc(\Phi)\star\conc(\Psi)$.  
\item If $\Phi$ and $\Psi$ are derivations and $\vlinf{\rho}{}{\prem(\Psi)}{\conc(\Phi)}$ is an instance of some inference rule $\rho$ then $\vlupsmash{\vlinf{\rho}{}{\Psi}{\Phi}}$ is a derivation with premiss $\prem(\Phi)$ and conclusion $\conc(\Psi)$.
\end{enumerate}

If $\prem(\Phi)\equiv\ttt$ then we call $\Phi$ a \emph{proof}. If $\Phi$ is a derivation where all inference steps are instances of rules in a system $\mathcal{S}$ with premiss $A$, conclusion $B$, we write $\vldownsmash{\vlder{\Phi}{\mathcal{S}}{B}{A}}$. Furthermore, if $A\equiv\ttt$, i.e.\ $\Phi$ is a proof in a system $\mathcal{S}$, we write $\vldownsmash{\vlproof{\Phi}{\mathcal{S}}{B}}$.
\end{defi}

While our structural rules only have atoms in their premisses and conclusions, the notion of derivation above allows us to extend these to arbitrary formulae, as stated in the proposition below. We often use these `generic rules' rather than their full derivations for convenience.
%
%\begin{defi}[Contexts]
%A \emph{context} $\xi\vlhole$ is just a formula with a hole occurring in place of a subformula. Formally,
%\[
%\xi\vlhole ::= \quad \vlhole \quad | \quad A\vlan\vlhole
%\]
%\end{defi}

\begin{prop}[Generic rules]\label{GenRules}
Each rule below is derivable from $\swi$, $\med$, and its respective atomic structural rule in polynomial time.
\[
\vlinf{\gidown}{}{A\vlor\bar{A}}\ttt \qquad\vlinf{\gwd}{}{A}{\fff}\qquad\vlinf{\gcdown}{}{A}{A\vlor A}\qquad \vlinf{\giu}{}{\fff}{A\vlan\bar{A}} \qquad\vlinf{\gwu}{}{\ttt}{A} \qquad\vlinf{\gcu}{}{A\vlan A}{A}
\]
\end{prop}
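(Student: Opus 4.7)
By induction on the structure of $A$. When $A$ is a literal the corresponding atomic rule is immediate. The inductive step for $A \equiv B \vlor C$ and $A \equiv B \vlan C$ combines $\swi$, $\med$, and the induction hypothesis; each of the six rules follows the same template, differing only in which decomposition is chosen.

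For $\gidown$ at $A \equiv B \vlor C$, I would start from $\ttt = \ttt \vlan \ttt$ and apply the IH in each conjunct to reach $(B \vlor \bar B) \vlan (C \vlor \bar C)$; two instances of $\swi$, together with commutativity and associativity, reroute this to $(B \vlor C) \vlor (\bar B \vlan \bar C) \equiv A \vlor \bar A$. The case $A \equiv B \vlan C$ is symmetric, the same two switches being organised to end at $(B \vlan C) \vlor \bar B \vlor \bar C$. The dual rule $\giu$ is handled analogously: at $A \equiv B \vlor C$, two switches applied to the premiss $(B \vlor C) \vlan (\bar B \vlan \bar C)$ yield $(B \vlan \bar B) \vlor (C \vlan \bar C)$, which collapses to $\fff$ via the IH and the unit $\fff \vlor \fff = \fff$; the conjunction case is derived symmetrically after re-associating the premiss to the shape $(\bar B \vlor \bar C) \vlan B \vlan C$.

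For $\gcdown$ at $A \equiv B \vlor C$ I would simply rearrange $(B \vlor C) \vlor (B \vlor C)$ into $(B \vlor B) \vlor (C \vlor C)$ by commutativity and associativity of $\vlor$ and then invoke the IH twice. The case $A \equiv B \vlan C$ is the unique place where $\med$ is essential: it reshapes $(B \vlan C) \vlor (B \vlan C)$ into $(B \vlor B) \vlan (C \vlor C)$, after which two IH applications give $B \vlan C$. The dual $\gcu$ is the same derivation read bottom-up. Weakening $\gwd$ and its dual $\gwu$ are the easiest: the units $\fff \vlan \fff = \fff$ and $\fff \vlor \fff = \fff$ (respectively the $\ttt$ versions) expand the premiss into the required conjunctive or disjunctive shape, and the IH then handles each component pointwise.

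Each inductive step contributes a constant number of rule instances of size $O(|A|)$, so every derivation has size $O(|A|^{2})$ --- polynomial as claimed. The main subtlety is routing the switches in the $\giu$ case for $A \equiv B \vlan C$: they must be sequenced so that the two $\fff$'s produced by the IH end up under the outer $\vlor$, where the units can dispose of them, rather than in conjunctive position, where without $\gwu$ available we would have no way to discharge an occurrence of $\fff \vlan X$.
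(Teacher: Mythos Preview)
Your proof is correct and follows essentially the same approach as the paper: structural induction on $A$, with $\med$ handling the $\gcdown$-on-conjunction case and duality covering the up-rules. The paper only sketches the contraction case explicitly and defers the rest to \cite{BrunTiu:01:A-Local-:ly}, whereas you spell out the identity and weakening cases as well; your size analysis $O(|A|^{2})$ is also fine.
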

\begin{proof}
See \cite{BrunTiu:01:A-Local-:ly} for full proofs. Here we just give an example of the case for contraction, since that is the only structural rule of the sequent calculus that cannot be reduced to atomic form \cite{Brun:03:Two-Rest:mn}. The proof is by induction on the depth of the conclusion of a $\gcdown$ step.
\[
\begin{array}{c}
\vlinf{\gcdown}{}{\vls[A.B]}{\vls[[A.B].[A.B]]} \quad \to\quad \vlinf{=}{}{\vls[\vlinf{\gcdown}{}{A}{A\vlor A} . \vlinf{\gcdown}{}{B}{B\vlor B}]}{\vls[[A.B].[A.B]]}
\\
\vlinf{\gcdown}{}{\vls(A.B)}{\vls[(A.B).(A.B)]} \quad \to \quad \vlinf{\med}{}{\vlinf{\gcdown}{}{A}{A\vlor A} \vlan \vlinf{\gcdown}{}{B}{B\vlor B}}{\vls[(A.B).(A.B)]}
\end{array}
\]
Note that the case for cocontraction is dual to this: one can just flip the derivations upside down and replace every formula with its De Morgan dual.
\end{proof}

%\begin{cor}
%$\KS$, $\aSKS$ and $\SKS$ are sound and complete. \cite{BrunTiu:01:A-Local-:ly}
%\end{cor}
%\begin{proof}
%Soundness follows since each rule is sound and they all occur in positive context by Dfn.~\ref{PrfDer}. Completeness follows since the rules of $\KS$ are just generalisations of the rules of a one-sided cut-free Gentzen system.
%\end{proof}

\begin{defi}[Complexity]
We define the size $|\Phi|$ of a derivation $\Phi$ to be the number of atom occurrences in $\Phi$. A system $\mathcal{S}$ \emph{polynomially simulates} a system $\mathcal{T}$ if each proof in $\mathcal{T}$ can be polynomially transformed into a proof in $\mathcal{S}$ of the same conclusion.
\end{defi}

\subsection{Atomic flows}
We give only an informal definition of atomic flows here, but refer the reader to \cite{GuglGund:07:Normalis:lr}, \cite{Gund:09:A-Genera:kx} for a formal account of atomic flows. 
%In particular, while we treat them as geometric objects here, they are formally defined as combinatorial objects.

%An atomic flow for a derivation can intuitively be thought of as the diagram obtained by tracing the path of each atom occurrence through the derivation, adding vertices each time an atom is created, destroyed or duplicated.

\begin{defi}[Atomic flows]
For an $\SKS$ derivation $\Phi$ we define its \emph{atomic flow}, $\flow(\Phi)$, to be the diagram obtained by tracing the path of each atom through the derivation, designating the creation, duplication and destruction of atoms by the following corresponding nodes: 
\[
\begin{array}{ccc}
\vlinf{\aid}{}{\vls[a.\bar{a}]}{\ttt}\quad\to\quad \aflower{\afaid{}{}{}{}{}{}} &\qquad \vlinf{\awd}{}{a}\fff \quad\to\quad \aflower{\afawd{}{}{}{}} &\qquad \vlinf{\acd}{}{a}{\vls[a.a]}\quad\to\quad \aflower{\aflower{\afacd{}{}{}{}{}{}}} \\
\vlinf{\aiu}{}{\fff}{\vls(a.\bar{a})}\quad\to\quad \aflower{\aflower{\aflower{\afaiu{}{}{}{}{}{}}}} &\qquad \vlinf{\awu}{}{\ttt}{a} \quad\to\quad \aflower{\aflower{\aflower{\afawu{}{}{}{}}}} &\qquad \vlinf{\acu}{}{\vls(a.a)}{a}\quad\to\quad \aflower{\aflower{\afacu{}{}{}{}{}{}}}
\end{array}\quad
\]
We do not have nodes for $\swi $, $\med $ or $=$ since they do not create, destroy or duplicate any atom occurrences.

More generally an atomic flow, not necessarily of a derivation, is a (vertically) directed graph embedded in the plane generated from the six types of node above. 
%We also insist that there must exist some polarity assignment of $+$ or $-$ to each edge such that:
%\begin{enumerate}
%\item The edges of $\acd $ and $\acu $ nodes have the same polarities.
%\item The edges of $\aid $ and $\aiu$ nodes have different polarities.
%\end{enumerate}
 
 Atomic flows are considered equivalent up to continuous deformation preserving the (vertical) ordering of connected edges. Note that edges may be \emph{pending} at either end. 
 
 We define the size of a flow $\phi$, denoted $|\phi|$, to be its number of edges.
% \footnote{We also extend this notion of size to any graph generated by the nodes above, even if there is no valid polarity assignment.}
\end{defi}

%Insisting the existence of a valid polarity assignment is to avoid certain situations, impossible in proofs and derivations, that are problematic for later results. For example the following graph,
%\[
%\includegraphics{./figures/ExIdCdIllegal-crop.pdf}
%\]
%does not correspond to any derivation, since the two atoms in the conclusion of an $\aid $-step must be dual, whereas the two atoms in the premiss of a $\acd $-step must be identical. It is these sorts of impossibilities that are discarded by the existence of a valid polarity assignment.
%
%We should point out that checking whether a graph has a valid polarity assignment can be done efficiently, linear time in its size. For each connected component simply choose one $\aid $ or $\aiu $ node and label one of its edges $+$. Every other edge in the connected component now has polarity determined by the properties (1) and (2) in the above definition, and then we just need to check whether there are any clashes.

In previous works atomic flows have been equipped with a labeling of the edges, or a polarity assignment, for example to avoid the following impossible situation:
\[
\atomicflow{
(0,0)*{\afacd{}{}{}{}{}{}};
(0,4)*{\afaidnw{}{}};
}
\]
Since we are only concerned with the complexity of flows and their transformations we do not include this extra structure; this does not affect the soundness or termination of our rewriting systems, and in fact is crucial in order to obtain confluence, for which labellings of edges can cause problems.

We do, however, insist that edges are vertically directed and so we are often able to talk about one node being `above' another node. Notice that this order is not generally preserved under deformation, e.g.\ if two nodes are in disconnected components. Whenever we use this notion in arguments it should be clear that it is being used correctly.

\begin{defi}\label{dfn:FlowRewSys}
A \emph{flow rewriting rule} is an ordered pair of flows, written $\phi\to\psi$. A \emph{flow rewriting system} (FRS) is a set of flow rewriting rules. A \emph{one-step reduction} of a flow $\phi$ in an FRS $\mathsf r$ is a flow $\psi$ that is obtainable from $\phi$ by replacing some induced subgraph that is the left hand side of some rule in $\mathsf r$ with its right hand side.
\end{defi}

%\begin{defi}[Flow Rewriting]
%A flow rewriting system is a set of rules...
%\end{defi}

\begin{defi}\label{GNorm}
We define a graph rewriting system $\GNorm$ on flows in Fig.~\ref{FigRed}.
\end{defi}
%---------------------------------------
\begin{figure}[tbp]
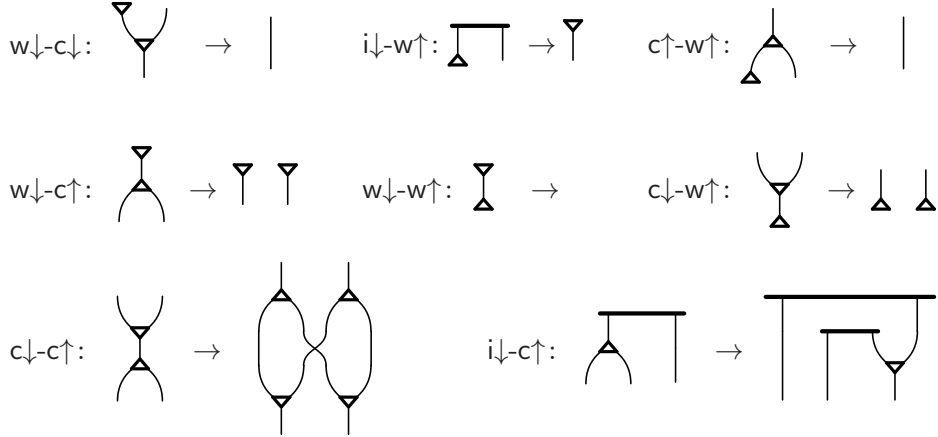

\afnegspace
\[
\begin{array}{cccccc}
%-------------------
\multispan2{$\rwdcd\colon\afraise{\atomicflow{
( 0  ,0)*{\afacd{}{}{}\one{}\two};
(-2  ,4)*{\afawdnw{}{}};
(-3.5,0)*{\invisiblemark};
( 3.5,0)*{\invisiblemark}}}
\to
\atomicflow{
( 0  ,3.3)*{\aflabelright{\one\two}};
( 0  ,3  )*{\afvj6};
(-1.5,0  )*{\invisiblemark};
( 3  ,0  )*{\invisiblemark}}$}&\multispan2{$\qquad\ridwu\colon\atomicflowinv{
( 0  ,0)*{\afaid{}{}{}\one{}{}};
(-2  ,6)*{\afawunw{}{}};
( 3.5,0)*{\invisiblemark}}
\to
\afraise{\atomicflowinv{
(0  ,0)*{\afawd{}{}{}\one};
(1.5,0)*{\invisiblemark}}}$}&\multispan2{$\qquad\rcuwu\colon\aflower{\atomicflowinv{
( 0  ,0)*{\afacu{}{}{}\two{}\one};
(-2  ,6)*{\afawunw{}{}};
( 3.5,0)*{\invisiblemark}}}
\to\quad
\atomicflow{
(0,3  )*{\afvj6};
(0,3.3)*{\aflabelright{\one\two}};
(3,0  )*{\invisiblemark}}$}
\\
\multispan2{$\rwdcu\colon\afraise{\atomicflow{
( 0  ,-4)*{\afacu\one{}{}\two{}{}};
( 0  , 0)*{\afawdnw{}{}};
(-3.5, 0)*{\invisiblemark};
( 3.5, 0)*{\invisiblemark}}}
\to
\afraise{\atomicflow{
(-2  ,-4)*{\afawd{}{}\one{}};
( 2  ,-4)*{\afawd{}{}{}\two};
(-3.5, 0)*{\invisiblemark};
( 3.5, 0)*{\invisiblemark}}}$}&\multispan2{$\qquad\rwdwu\colon\atomicflow{
( 0  ,6)*{\afawd{}{}{}{}};
( 0  ,0)*{\afawunw{}{}{}{}};
(-1.5,0)*{\invisiblemark};
( 1.5,0)*{\invisiblemark}}
\to
\atomicflow{}$}&\multispan2{$\qquad\rcdwu\colon\aflower{\atomicflowinv{
( 0  ,-4)*{\afacd\one{}{}\two{}{}};
( 0  , 2)*{\afawunw{}{}};
(-3.5, 0)*{\invisiblemark};
( 3.5, 0)*{\invisiblemark}}}
\to
\aflower{\atomicflowinv{
(-2  ,-4)*{\afawu{}{}\one{}};
( 2  ,-4)*{\afawu{}{}{}\two};
(-3.5, 0)*{\invisiblemark};
( 3.5, 0)*{\invisiblemark}}}$}
\\
\multispan3{$\rcdcu\colon\atomicflow{
( 0,6)*{\afacd\one{}{}\two{}{}};
( 0,0)*{\afacunw\three{}{}\four};
(-4,0)*{\invisiblemark};
( 4,0)*{\invisiblemark}}
\to\quad
\atomicflow{
(0,12)*{\afacu{}{}{}{}\one{}};
(6,12)*{\afacu{}{}{}{}{}\two};
( 0,0)*{\afacd{}{}{}{}\three{}};
( 6,0)*{\afacd{}{}{}{}{}\four};
(-2,6)*{\afvj4};
( 8,6)*{\afvj4};
( 3,6)*{\afex24}}\qquad$}&\multispan3{$\qquad\ridcu\colon\afraise{\atomicflowinv{
(   6,6)*{\afvju4{}\three{}{}};
(   3,0)*{\afaidex {}{}{}{}{}{}32};
(   0,6)*{\afacunw\one{}{}\two};
(-3.5,0)*{\invisiblemark};
( 7.5,0)*{\invisiblemark}}}
\to
\afraise{\atomicflowinv{
(  10,8)*{\afacd{}{}{}{}{}\three};
(   0,8)*{\afvju8\one{}};
(   4,8)*{\afvju8{}\two};
(   6,4)*{\afaidnw{}{}};
(   6,0)*{\afaidex{}{}{}{}{}{}31}}}$}
\end{array}
\]
\afnegspace
\caption{Local rewriting rules for the system $\GNorm$.}
\label{FigRed}
\end{figure}

The system $\GNorm$ is essentially the system $\mathsf{c}\cup\mathsf{w}$ in \cite{GuglGund:07:Normalis:lr}, without the rules for $\giu$. The proof of termination that follows is similar to that for cycle-free flows in \cite{GuglGund:07:Normalis:lr}, and the proof of confluence is similar to that in \cite{GuglGundStra::Breaking:uq}.

\newcommand{\normform}{\hspace{-2pt}\downarrow}

\begin{nota}
Let $\mathsf{r}$ be a flow rewriting system. We use the following notation:
\begin{itemize}
\item We write $\phi\underset{\mathsf{r}}{\rightarrow}\psi$ if there is a one-step reduction from a flow $\phi$ to a flow $\psi$ using a rule in $\mathsf{r}$.
\item  We denote by $\underset{\mathsf{r}}{\overset{*}{\rightarrow}}$ the reflexive transitive closure of $\underset{\mathsf{r}}{\rightarrow}$. 
\item  If a flow $\phi$ has a unique normal form under $\mathsf r$ we denote it by $\phi\normform_\mathsf{r}$.\footnote{Note that we are using downward arrows in both the names of deep inference rules and to denote normal forms under rewriting systems. Unfortunately both notations are standard in their respective literature, however there should be no ambiguity in their usage so hopefully this will cause little confusion.}
\end{itemize}
In all cases we might omit the subscript $\mathsf{r}$ if it is clear from context.
\end{nota}

\begin{exa}
We give an example of a flow associated with a derivation in Fig.~\ref{ExRed}, as well as a reduction under $\GNorm$, as defined in Dfn.~\ref{GNorm}, applying $\gwu$ rules first.

The first equality follows by the definition of a flow, the second by deformation and the final by definition again. The intermediate steps are as follows:
\begin{enumerate}
\item Apply $\rcdwu$ on the left and $\rcdcu$ on the right.
\item Apply $\rwdwu$ on the left, $\ridwu$ in the middle and $\ridcu$ on the right.
\item Apply $\rwdcu$ on the left.
\item Apply $\rwdcd$ twice on the left.
\end{enumerate}

The use of colours in the initial and final flow identifies which edges corresponds to which atoms; in the intermediate flows the colours should aid the reader in reconstructing the corresponding transformations on the derivation. 
%We point out that, since our notion of flow does not have labeled edges, these colours are nothing more than an aid: flow rewriting rules replace 
\end{exa}

\begin{figure}
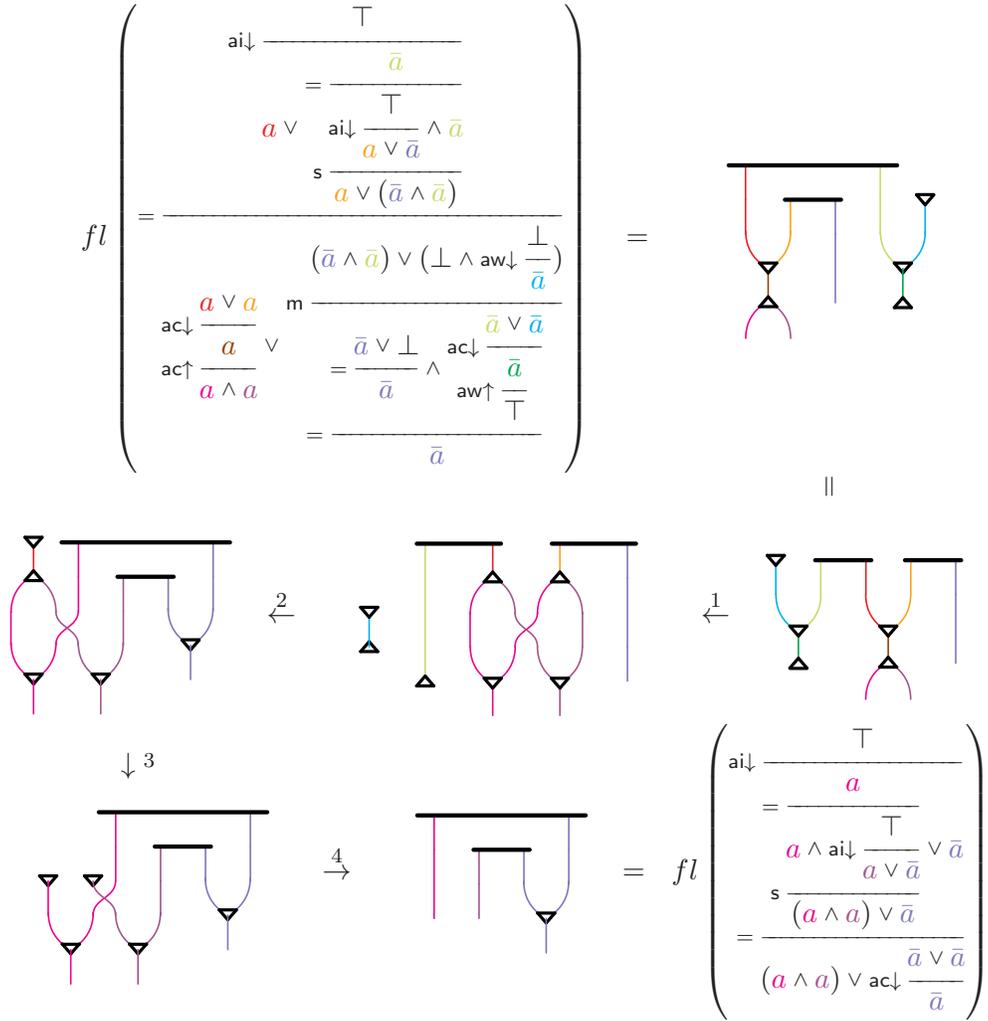

\afnegspace
\[
\begin{array}{ccc}
\noalign{\bigskip}
\multispan2{\hfill$\flow\left(\vlderivation{
\vlin{=}{}{\vls[\vlderivation{\vlin{\acu}{}{\vls(\MG{a}.\DO{a})}{\vlin{\acd}{}{\RS{a}}{\vlhy{\vls[\RD{a}.\YO{a}]}}}} . \vlderivation{
\vlin{=}{}{\PW{\bar{a}}}{
\vlin{\med}{}{\vls(\vlinf{=}{}{\PW{\bar{a}}}{\PW{\bar{a}}\vlor\fff} . \vlderivation{\vlin{\awu}{}{\ttt}{\vlin{\acd}{}{\GR{\bar{a}}}{\vlhy{\vls[\SG{\bar{a}}.\PB{\bar{a}}]}}}})}{\vlhy{\vls[(\PW{\bar{a}}.\SG{\bar{a}}).(\fff . \vlinf{\awd}{}{\PB{\bar{a}}}{\fff} ) ]}}
}
}]}{
%\vlin{\aid}{}{\vls[\RD{a}.\vlderivation{
%\vlin{\swi}{}{\vls[\YO{a}.(\PW{\bar{a}}.\SG{\bar{a}})]}{
%\vlin{=}{}{\vls(\vlinf{\aid}{}{\YO{a}\vlor\PW{\bar{a}}}{\ttt} . \SG{\bar{a}})}{\vlhy{\SG{\bar{a}}}}
%}
%}]
%}{\vlhy\ttt}
\vlin{\aid}{}{\vls[\RD{a} . \vlderivation{
\vlin{\swi}{}{\vls[\YO{a} . (\PW{\bar{a}} . \SG{\bar{a}} ) ]}{
\vlin{=}{}{\vls(\vlinf{\aid}{}{\YO{a}\vlor\PW{\bar{a}}}{\ttt} . \SG{\bar{a}})}{\vlhy{\SG{\bar{a}}}}
}
} ]}{\vlhy\ttt}
}
}\right) \quad = $}
&\atomicflow{
(12,16)*{\afaidexcol{}{}{}{}{}{}62{Red}{SpringGreen}};
(12,12)*{\afaidcol{}{}{}{}{}{}{YellowOrange}{Periwinkle}};
(18,10)*{\afvjcol4{SpringGreen}{}{}{}{}{}};
(8,2)*{\afvjcol4{RawSienna}{}{}{}{}{}};
(6,10)*{\afvjcol4{Red}{}{}{}{}{}};
(8,4)*{\afacdnwcol{}{}{}{}{Red}{YellowOrange}};
(14,4)*{\afvjcol8{Periwinkle}{}{}{}{}{}};
(22,12)*{\afawdcol{}{}{}{}{}{ProcessBlue}};
(20,4)*{\afacdnwcol{}{}{}{}{SpringGreen}{ProcessBlue}};
(20,0)*{\afawucol{}{}{}{}{}{Green}};
(8,-2)*{\afacunwcol{}{}{}{}{Magenta}{DarkOrchid}};
}
\\
	&	&	\begin{sideways} = \end{sideways} \\
\atomicflow{
(2,16)*{\afawdcol{}{}{}{}{}{Red}};
(2,10)*{\afacunwcol{}{}{}{}{Magenta}{DarkOrchid}};
(12,16)*{\afaidexcol{}{}{}{}{}{}62{Magenta}{Periwinkle}};
(6,10)*{\afvjcol4{Magenta}{}{}{}{}{}};
(5,6)*{\afexcol24{DarkOrchid}{Magenta}};
(12,12)*{\afaidcol{}{}{}{}{}{}{DarkOrchid}{Periwinkle}};
(18,10)*{\afvjcol4{Periwinkle}{}{}{}{}{}};
(16,4)*{\afacdnwcol{}{}{}{}{Periwinkle}{Periwinkle}};
(16,2)*{\afvjcol4{Periwinkle}{}{}{}{}{}};
(10,6)*{\afvjcol4{DarkOrchid}{}{}{}{}{}{}{}{}{}};
(0,6)*{\afvjcol4{Magenta}{}{}{}{}{}};
(2,0)*{\afacdnwcol{}{}{}{}{Magenta}{Magenta}};
(8,0)*{\afacdnwcol{}{}{}{}{DarkOrchid}{DarkOrchid}};
(2,-2)*{\afvjcol4{Magenta}{}{}{}{}{}};
(8,-2)*{\afvjcol4{DarkOrchid}{}{}{}{}{}};
}
\quad\overset{2}{\leftarrow}
&\quad\atomicflow{
(8,16)*{\afaidexcol{}{}{}{}{}{}32{SpringGreen}{Red}};
(20,16)*{\afaidexcol{}{}{}{}{}{}32{YellowOrange}{Periwinkle}};
(23,8)*{\afvjcol8{Periwinkle}{}{}{}{}{}};
(23,2)*{\afvjcol4{Periwinkle}{}{}{}{}{}};
(5,8)*{\afvjcol8{SpringGreen}{}{}{}{}{}};
(5,0)*{\afawucol{}{}{}{}{}{SpringGreen}};
(17,10)*{\afacunwcol{}{}{}{}{Magenta}{DarkOrchid}};
(11,10)*{\afacunwcol{}{}{}{}{Magenta}{DarkOrchid}};
(14,6)*{\afexcol24{DarkOrchid}{Magenta}};
(19,6)*{\afvjcol4{DarkOrchid}{}{}{}{}{}};
(9,6)*{\afvjcol4{Magenta}{}{}{}{}{}};
(11,0)*{\afacdcol{}{}{}{}{}{}{Magenta}{Magenta}{Magenta}};
(17,0)*{\afacdcol{}{}{}{}{}{}{DarkOrchid}{DarkOrchid}{DarkOrchid}};
(0,2)*{\afawunw{}{}{}{}{}{}};
(0,8)*{\afawdcol{}{}{}{}{}{ProcessBlue}};
}
&\overset{1}{\leftarrow}\quad\atomicflow{
(8,16)*{\afaidcol{}{}{}{}{}{}{SpringGreen}{Red}};
(4,8)*{\afacdcol{}{}{}{}{}{}{ProcessBlue}{SpringGreen}{Green}};
(16,16)*{\afaidcol{}{}{}{}{}{}{YellowOrange}{Periwinkle}};
(12,8)*{\afacdcol{}{}{}{}{}{}{Red}{YellowOrange}{RawSienna}};
(18,8)*{\afvjcol8{Periwinkle}{}{}{}{}{}};
(2,16)*{\afawdcol{}{}{}{}{}{ProcessBlue}};
(12,2)*{\afacunwcol{}{}{}{}{Magenta}{DarkOrchid}};
(4,2)*{\afawunw{}{}{}{}};
}
\\

%	\begin{sideways} $\underset{\begin{sideways}\begin{sideways}\begin{sideways}$3$\end{sideways}\end{sideways}\end{sideways}}{\leftarrow}$ \end{sideways} &	& \\

\atomicflow{
(8,22)*{\downarrow\overset{3}{} };
(0,8)*{\afawdnw{}{}{}{}{}{}};
(4,8)*{\afawdnw{}{}{}{}{}{}};
(12,16)*{\afaidexcol{}{}{}{}{}{}62{Magenta}{Periwinkle}};
(6,10)*{\afvjcol4{Magenta}{}{}{}{}{}};
(5,6)*{\afexcol24{DarkOrchid}{Magenta}};
(12,12)*{\afaidcol{}{}{}{}{}{}{DarkOrchid}{Periwinkle}};
(18,10)*{\afvjcol4{Periwinkle}{}{}{}{}{}};
(16,4)*{\afacdnwcol{}{}{}{}{Periwinkle}{Periwinkle}};
(16,2)*{\afvjcol4{Periwinkle}{}{}{}{}{}};
(10,6)*{\afvjcol4{DarkOrchid}{}{}{}{}{}};
(0,6)*{\afvjcol4{Magenta}{}{}{}{}{}};
(2,0)*{\afacdnwcol{}{}{}{}{Magenta}{Magenta}};
(8,0)*{\afacdnwcol{}{}{}{}{DarkOrchid}{DarkOrchid}};
(2,-2)*{\afvjcol4{Magenta}{}{}{}{}{}};
(8,-2)*{\afvjcol4{DarkOrchid}{}{}{}{}{}};
}
&\overset{4}{\to}\qquad\atomicflow{
(12,16)*{\afaidexcol{}{}{}{}{}{}62{Magenta}{Periwinkle}};
(12,12)*{\afaidcol{}{}{}{}{}{}{DarkOrchid}{Periwinkle}};
(18,10)*{\afvjcol4{Periwinkle}{}{}{}{}{}};
(16,4)*{\afacdnwcol{}{}{}{}{Periwinkle}{Periwinkle}};
(16,2)*{\afvjcol4{Periwinkle}{}{}{}{}{}};
(10,6)*{\afvjcol4{DarkOrchid}{}{}{}{}{}};
(6,8)*{\afvjcol8{Magenta}{}{}{}{}{}};
}
\quad =
&\flow\left(\vlderivation{
\vlin{=}{}{\vls[(\MG{a}.\DO{a}).\vlinf{\acd}{}{\PW{\bar{a}}}{\vls[\PW{\bar{a}}.\PW{\bar{a}}]}]}{
\vlin{\aid}{}{\vls[\vlderivation{
\vlin{\swi}{}{\vls[(\MG{a}.\DO{a}).\PW{\bar{a}}]}{
\vlin{=}{}{\vls(\MG{a}.\vlinf{\aid}{}{\DO{a}\vlor\PW{\bar{a}}}\ttt)}{\vlhy{\MG{a}}}
}
}.\PW{\bar{a}}]}{\vlhy{\ttt}}
}
}\right)
\end{array}
\]
\afnegspace
\caption{A proof, its flow and a reduction under $\GNorm$.}
\label{ExRed}
\end{figure}

We now proceed to prove that reducing under $\GNorm$ is terminating and confluent. For this we use the usual critical pair lemma, that one only needs to check that the one-step contracta of every overlapping pair of redexes are joinable in order to conclude local confluence. It is simple to see that this is still valid in the flow rewriting setting, since the only other case to check for flows is when the redexes are disjoint, which is trivial. We point out that Newman's lemma, that any locally confluent terminating system is confluent, is true more generally for any binary relation on any set, and so indeed holds in this setting.

%
%\begin{thm}\label{FlowPolyDer}
%If there is a derivation with flow $\phi$, premiss $A$ and conclusion $B$, then there is also one whose size is polynomial in $|\phi|$.
%\end{thm}
%\begin{defi}[Soundness]
%A rewriting system $\mathsf{S}$ on flows is \emph{sound} if, whenever a flow $\psi$ is obtained by $\mathsf{S}$-rewriting on the flow $\phi$ of a derivation $\Phi$, there is a derivation $\Psi$ with same premiss and conclusion as $\Phi$ and whose flow is $\psi$.
%\end{defi}
\begin{thm}\label{NormTermConf}
$\underset{\GNorm}{\longrightarrow}$ is terminating and confluent. 
%If $\phi$ is the flow of a $\aSKS$-proof and has normal form $\psi$ under $\GNorm$, then $\psi$ contains no $\acu$ or $\awu$ nodes, i.e.\ it contains just $\KS$-nodes.
\end{thm}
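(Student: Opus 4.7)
The plan is to prove termination and local confluence separately, then invoke Newman's lemma (noted in the excerpt as valid in the flow-rewriting setting) to conclude confluence.

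For termination, the obstacle is that two rules, $\rcdcu$ and $\ridcu$, can strictly increase the total number of nodes, so edge-count alone will not work. I would use a lexicographic measure $(\mu_1,\mu_2,\mu_3)$ on flows. Let $\mu_1(\phi)$ be the number of $\aid$ nodes in $\phi$: inspection of Fig.~\ref{FigRed} shows that only $\ridwu$ and $\ridcu$ act on identity nodes, and both strictly remove one without creating any new ones, while the remaining rules leave identity nodes untouched. Let $\mu_2(\phi)$ be the sum, over all pairs $(v,w)$ with $v$ an $\acd$ node and $w$ an $\acu$ node, of the indicator that there is a directed edge-path from $v$ down to $w$: the rule $\rcdcu$ strictly decreases this count (the offending $\acd$–$\acu$ adjacency is destroyed, and the newly created nodes are oriented with cocontractions strictly above contractions, introducing no new contraction-above-cocontraction pairs), the rule $\ridcu$ may increase $\mu_2$ but is controlled by $\mu_1$, and no other rule can increase $\mu_2$. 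Finally let $\mu_3(\phi)$ be the number of $\gw\downarrow$ and $\gw\uparrow$ nodes plus the total edge count: each of the remaining rules ($\rwdcd, \rwdcu, \rwdwu, \ridwu, \rcdwu, \rcuwu$) strictly reduces $\mu_3$ without increasing $\mu_1$ or $\mu_2$. This gives a strict decrease under the lexicographic order.

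For local confluence, I would invoke the critical pair analysis mentioned in the excerpt: it suffices to check that every pair of overlapping redexes has joinable one-step contracta, since disjoint redexes commute trivially in the flow setting. The overlaps arise only when a single node (or two connected nodes) is shared between two left-hand sides of rules in Fig.~\ref{FigRed}. The cases to enumerate are: a single $\acd$ node feeding into an $\acu$ below while simultaneously having a $\gw\downarrow$ plugged into one of its outputs (overlap of $\rwdcd$ with $\rcdcu$ or $\rcdwu$); dual configurations for $\acu$; an $\aid$ sharing one output with an $\acu$ and the other with a $\gw\uparrow$ (overlap of $\ridcu$ with $\ridwu$); a $\gw\downarrow$/$\gw\uparrow$ pair sandwiched between further structural nodes; and the symmetric cases. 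In each case I would write down both one-step reducts and exhibit an explicit joining reduction sequence using the rules of $\GNorm$; the calculations are entirely local and mirror those carried out in \cite{GuglGundStra::Breaking:uq}.

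The main obstacle is the design of $\mu_2$ to handle $\rcdcu$. The conceptual point underwriting the whole argument is that $\rcdcu$ pushes cocontractions upward past contractions, so any measure that penalises "$\acd$ above $\acu$" must decrease; the bookkeeping then reduces to verifying that no other rule can reintroduce such a configuration without being dominated by a strictly higher-priority component of the measure. Once the measure is in place and the critical pairs are enumerated, termination and local confluence combine through Newman's lemma to yield the theorem.
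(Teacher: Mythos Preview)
Your termination argument contains a genuine error: you claim that $\ridcu$ removes an $\aid$ node, but inspecting the rule in Fig.~\ref{FigRed} shows the opposite. Its left-hand side has one $\aid$ node (with an $\acu$ attached to one edge); its right-hand side has \emph{two} $\aid$ nodes (with an $\acd$ merging two of their edges). Hence $\mu_1$ strictly \emph{increases} under $\ridcu$, and since you rely on $\mu_1$ to control any growth of $\mu_2$ caused by $\ridcu$, the lexicographic scheme collapses. Your $\mu_2$ is also unsound for $\rcdcu$ itself: the local reasoning that ``the newly created nodes are oriented with cocontractions strictly above contractions'' ignores paths through the surrounding context. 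Take the chain $\acd \to \acu \to \acd \to \acu \to \acd \to \acu$, where each $\acu$'s two lower edges feed the next $\acd$'s two upper edges; there are six $(\acd,\acu)$ pairs connected by a directed path, but after applying $\rcdcu$ to the middle pair there are seven (the topmost $\acd$ now reaches two new $\acu$ nodes instead of one, and two new $\acd$ nodes reach the bottommost $\acu$ instead of one).

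The paper's termination measure avoids both problems by tracking only the positions of $\acu$ nodes relative to the top of the flow. Each of $\rcdcu$, $\ridcu$, $\rwdcu$ either deletes the $\acu$ node outright or, in the case of $\rcdcu$, replaces one $\acu$ at depth $d+1$ by two at depth $d$; thus the sequence counting $\acu$ nodes at each depth strictly decreases in the lexicographic order where larger depths dominate. The remaining rules of $\GNorm$ strictly decrease flow size without increasing this sequence, and the lexicographic product of the two gives termination. Your confluence outline via critical pairs and Newman's lemma is the right shape and matches the paper's approach.
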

\begin{proof}
For a node $\nu$ in a flow $\phi$, let $d(\nu,\phi)$ be the distance of $\nu$ from the top of $\phi$, i.e.\ the minimum length of a (vertically) directed path from $\nu$ to an $\aid $ node, an $\awd $ node or an edge with upper end pending. For an atomic structural rule $\rho$, let $D(\rho,\phi)$ be the sequence of natural numbers that counts how many $\rho$ nodes in $\phi$ have each $d$-value, i.e.\ the sequence $(n_i)$ such that, for each $i$, $n_i$ is the number of $\rho$ nodes $\nu$ in $\phi$ with $d(\nu,\phi) = i$, and consider the lexicographical ordering $<$ on such sequences, with $(m_i) < (n_i)$ just if $m_k < n_k$ for some $k$ and $m_i \leq n_i$ for $i>k$.

Clearly the rules $\rcdcu$, $\ridcu$ and $\rwdcu$ strictly reduce the $D(\acu ,\cdot)$ value of a flow, while the other rules of $\GNorm$ (as well as $\rwdcu$) strictly reduce a flow's size, while not increasing the $D(\acu ,\cdot)$ value. Therefore each application of a $\GNorm$ rule strictly reduces the lexicographical product $D(\acu ,\cdot) \times |\cdot|$.

Since $\GNorm$ is terminating, it suffices to check local confluence for the critical pairs, which are the following by inspection:
\[
\begin{array}{c}
(1)\quad \atomicflow{
(0,0)*{\afacd{}{}{}{}{}{}};
(-2,4)*{\afawdnw{}{}};
(2,4)*{\afawdnw{}{}};
}\quad,\quad
(2)\quad \atomicflow{
(0,0)*{\afacd{}{}{}{}{}{}};
(-2,4)*{\afawdnw{}{}};
(0,-6)*{\afawunw{}{}};
}\quad,\quad
(3)\quad\atomicflow{
(0,0)*{\afacd{}{}{}{}{}{}};
(-2,4)*{\afawdnw{}{}};
(0,-6)*{\afacunw{}{}{}{}};	
}\quad,\quad
(4) \quad  \atomicflow{
(0,0)*{\afaid{}{}{}{}{}{}};
(-2,-6)*{\afawunw{}{}};
(2,-6)*{\afawunw{}{}};
}\quad,\quad
(5)\quad \atomicflow{
(0,0)*{\afaidex{}{}{}{}{}{}32};
(-3,-6)*{\afawunw{}{}};
(3,-6)*{\afacunw{}{}{}{}};
}
\\
(6)\quad \atomicflow{
(0,0)*{\afacu{}{}{}{}{}{}};
(-2,-6)*{\afawunw{}{}};
(2,-6)*{\afawunw{}{}};
}\quad,\quad
(7)\quad \atomicflow{
(0,0)*{\afacu{}{}{}{}{}{}};
(-2,-6)*{\afawunw{}{}};
(0,4)*{\afawdnw{}{}};
}\quad,\quad
(8)\quad \atomicflow{
(0,0)*{\afacu{}{}{}{}{}{}};
(-2,-6)*{\afawunw{}{}};
(0,4)*{\afacdnw{}{}{}{}};
}\quad,\quad
(9)\quad \atomicflow{
(0,0)*{\afaidex{}{}{}{}{}{}{3}{2}};
(-3,-6)*{\afacunw{}{}{}{}};
(-5,-10)*{\afawunw{}{}};
}
\end{array}
\]
Note that every other overlapping pair can be deformed so that each rule application trivially commutes. We consider each case below.
\begin{enumerate}
\item $ \afraise{\atomicflow{
(0,0)*{\afawd{}{}{}{}}
}} \quad\underset{\rwdcd}{\longleftarrow}\quad \atomicflow{
(0,0)*{\afacd{}{}{}{}{}{}};
(-2,4)*{\afawdnw{}{}};
(2,4)*{\afawdnw{}{}};
} \quad \underset{\rwdcd}{\longrightarrow} \quad \afraise{\atomicflow{
(0,0)*{\afawd{}{}{}{}}
}}$
\item $ \aflower{\atomicflow{
(0,0)*{\afawu{}{}{}{}}
}} \quad \underset{\rwdcd}{\longleftarrow} \quad \atomicflow{
(0,0)*{\afacd{}{}{}{}{}{}};
(-2,4)*{\afawdnw{}{}};
(0,-6)*{\afawunw{}{}};
}\quad \underset{\rcdwu}{\longrightarrow} \quad \atomicflow{
(0,0)*{\afawu{}{}{}{}};
(0,4)*{\afawdnw{}{}};
(4,0)*{\afawu{}{}{}{}};
} \quad \underset{\rwdwu}{\longrightarrow} \quad \aflower{\atomicflow{(0,0)*{\afawu{}{}{}{}}}}$
\item $ \atomicflow{
(0,0)*{\afacu{}{}{}{}{}{}}
} 
\quad \underset{\rwdcd}{\longleftarrow}\quad 
\atomicflow{
( 0,6)*{\afacd\one{}{}\two{}{}};
( 0,0)*{\afacunw\three{}{}\four};
(-4,0)*{\invisiblemark};
( 4,0)*{\invisiblemark};
(-2,10)*{\afawdnw{}{}};
} 
\quad \underset{\rcdcu}{\longrightarrow} \quad 
\atomicflow{
(0,12)*{\afacu{}{}{}{}\one{}};
(6,12)*{\afacu{}{}{}{}{}\two};
( 0,0)*{\afacd{}{}{}{}\three{}};
( 6,0)*{\afacd{}{}{}{}{}\four};
(-2,6)*{\afvj4};
( 8,6)*{\afvj4};
( 3,6)*{\afex24};
(0,16)*{\afawdnw{}{}};
}
\quad \underset{\rwdcu}{\longrightarrow} \quad
\atomicflow{
(6,12)*{\afacu{}{}{}{}{}\two};
(-2,12)*{\afawd{}{}{}{}};
(2,12)*{\afawd{}{}{}{}};
( 0,0)*{\afacd{}{}{}{}\three{}};
( 6,0)*{\afacd{}{}{}{}{}\four};
(-2,6)*{\afvj4};
( 8,6)*{\afvj4};
( 3,6)*{\afex24};
} 
\quad \overset{2}{\underset{\rwdcd}{\longrightarrow}}\quad 
\atomicflow{
(0,0)*{\afacu{}{}{}{}{}{}}
} $
\item $ \quad \underset{\rwdwu}{\longleftarrow}\quad
\atomicflow{
(0,0)*{\afawu{}{}{}{}};
(0,4)*{\afawdnw{}{}};
}
\quad \underset{\ridwu}{\longleftarrow} \quad
 \atomicflow{
(0,0)*{\afaid{}{}{}{}{}{}};
(-2,-6)*{\afawunw{}{}};
(2,-6)*{\afawunw{}{}};
}
\quad \underset{\ridwu}{\longrightarrow} \quad
\atomicflow{
(0,0)*{\afawu{}{}{}{}};
(0,4)*{\afawdnw{}{}};
}
\quad \underset{\rwdwu}{\longrightarrow} \quad	 $
\item $\atomicflow{
(0,0)*{\afawd{}{}{}{}};
(2,0)*{\afawd{}{}{}{}};
}
\quad \underset{\rwdcu}{\longleftarrow}\quad
\atomicflow{
(0,0)*{\afacuex{}{}{}{}{}{}24};
(0,4)*{\afawdnw{}{}};
}
\quad \underset{\ridwu}{\longleftarrow} \quad
\atomicflow{
(0,0)*{\afaid{}{}{}{}{}{}};
(-2,-6)*{\afawunw{}{}};
(2,-6)*{\afacunwex{}{}{}{}24};
}
\quad\underset{\ridcu}{\longrightarrow} \quad
\atomicflow{
(0,0)*{\afaidex{}{}{}{}{}{}{4}{2}};
(0,-4)*{\afaidnw{}{}{}{}{}{}};
(-3,-8)*{\afacdex{}{}{}{}{}{}24};
(-3,-14)*{\afawunw{}{}};
(2,-6)*{\afvj{4}};
(4,-6)*{\afvj{4}};
}
\quad \underset{\rcdwu}{\longrightarrow} \quad
\atomicflow{
(0,0)*{\afaidex{}{}{}{}{}{}{4}{2}};
(0,-4)*{\afaidnw{}{}{}{}{}{}};
(2,-6)*{\afvj{4}};
(4,-6)*{\afvj{4}};
(-2,-8)*{\afawu{}{}{}{}};
(-4,-8)*{\afawu{}{}{}{}};
} 
\quad \overset{2}{\underset{\ridwu}{\longrightarrow}} \quad
\atomicflow{
(0,0)*{\afawd{}{}{}{}};
(2,0)*{\afawd{}{}{}{}};
} $
\item This case is dual to (1).
\item This case is dual to (2).
\item This case is dual to (3).
\item $\atomicflow{
(0,0)*{\afaid{}{}{}{}{}{}}
} 
\quad \underset{\rwdcd}{\longleftarrow} \quad
\atomicflowinv{
(   6,4)*{\afaidnw{}{}};
(   4,8)*{\afvju8\one{}};
(  10,8)*{\afacd{}{}{}{}{}\three};
(12,4)*{\afawdnw{}{}}
}
\quad \underset{\ridwu}{\longleftarrow}\quad 
\atomicflowinv{
(  10,8)*{\afacd{}{}{}{}{}\three};
(   0,8)*{\afvju8\one{}};
(   4,8)*{\afvju8{}\two};
(   6,4)*{\afaidnw{}{}};
(   6,0)*{\afaidex{}{}{}{}{}{}31};
(0,14)*{\afawunw{}{}}
}
\quad \underset{\ridcu}{\longleftarrow} \quad
\atomicflow{
(0,0)*{\afaidex{}{}{}{}{}{}{3}{2}};
(-3,-6)*{\afacunw{}{}{}{}};
(-5,-10)*{\afawunw{}{}};
(3,-6)*{\afvj4}
}
\quad \underset{\rcuwu}{\longrightarrow} \quad
\atomicflow{
(0,0)*{\afaid{}{}{}{}{}{}};
} $
\end{enumerate}
The cases where we appeal to duality follow by simply flipping the indicated reductions upside down and relabeling nodes and reduction steps appropriately.
\end{proof}

The significance of the rewriting system $\GNorm$ is evident from the following results, where we show that reduction steps correspond to sound manipulations of $\SKS$ derivations.

\begin{defi}\label{RelLiftPoly}
If $\mathcal{R}$ is a relation on flows we say that $\mathcal{R}$ \emph{lifts polynomially} to $\SKS$ if, whenever $(\flow(\Phi) , \psi) \in \mathcal{R}$, we can construct a derivation $\Psi$ in time polynomial in $|\Phi|+|\psi|$ with the same premiss and conclusion as $\Phi$ and flow $\psi$. If $f$ is a function on flows then we say that $f$ lifts polynomially to $\SKS$ just if the relation $f(\cdot)=\cdot$ lifts polynomially to $\SKS$.
\end{defi}

Sometimes we simply say that an individual flow rewrite rule is \emph{sound} rather than saying that it lifts polynomially to $\SKS$.

\begin{rem}
In fact a derivation can always be manipulated (preserving premiss, conclusion and flow) so that it has size at most polynomial in the size of its flow, as (essentially) shown in \cite{DBLP:conf/rta/Das13}, so the dependence on size of derivation in Dfn.\ \ref{RelLiftPoly} is somewhat redundant. However this is beyond the scope of this work, and it does no harm for us to include this dependence.
\end{rem}

\begin{thm}\label{PolyLift}
$\underset{\GNorm}{\longrightarrow}$ lifts polynomially to $\SKS$.
\end{thm}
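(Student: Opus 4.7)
Since a one-step reduction in $\GNorm$ is, by Dfn.~\ref{dfn:FlowRewSys}, an application of a single rule from Fig.~\ref{FigRed}, it suffices to exhibit, for each of the eight rewrite rules $\rho \colon \phi \to \psi$ and each derivation $\Phi$ with $\flow(\Phi)\underset\rho\longrightarrow\chi$, a procedure that produces in time polynomial in $|\Phi|+|\chi|$ a derivation $\Psi$ having the same premiss and conclusion as $\Phi$ with $\flow(\Psi)=\chi$. The overall transformation for a one-step reduction is then just the instance of this procedure at the redex in question, embedded back into the ambient derivation context (a step which, as contexts do not modify flows, contributes nothing beyond trivial copying).

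The general strategy for each rule is to locate the atom occurrences in $\Phi$ corresponding to the edges and nodes of the redex $\phi$, isolate the subderivation connecting them, and replace it with a local rewrite matching the right-hand side $\psi$. The non-trivial point is that, although $\phi$ appears as an induced subgraph of $\flow(\Phi)$ after possibly deforming connecting edges, the atoms it touches may be separated in $\Phi$ by many $\swi$, $\med$ and $=$ steps, so the ``local'' rewrite is not syntactically local in $\Phi$. By the observation noted in the Remark preceding the theorem (essentially the result of \cite{DBLP:conf/rta/Das13}), we may first transform $\Phi$ into a derivation of size polynomial in $|\flow(\Phi)|$, which bounds the amount of intervening logical material we must traverse.

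The easy cases are the ones where the right-hand side of $\rho$ is strictly smaller than the left. For $\rwdwu$, the relevant $\awd$ and $\awu$ steps can be commuted past any surrounding $\swi$, $\med$ and $=$ instances until they meet and annihilate (the units laws of $=$ absorb the resulting $\fff/\ttt$). The rules $\rwdcd$, $\ridwu$, $\rcuwu$, $\rwdcu$ and $\rcdwu$ are handled analogously: in each case one replaces the structural pair by weakenings or an identity-free edge, discharging the vanished occurrences through $=$ and the unit laws. For $\ridcu$ one follows the standard derivation that implements $\gidown$ into $A\vlan A$ via a $\gidown$ on $A\vlor\bar A$ combined with an atomic contraction (dual to the generic $\gcu$ construction of Prop.~\ref{GenRules}); this is the direct lifting indicated in the right-hand flow.

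The hard case is $\rcdcu$, where an $\acd$ feeds into an $\acu$ and the rewrite ``transposes'' them to a $2\times2$ grid of $\acu$ above and $\acd$ below, with the inner edges crossing. Lifting this rewrite requires producing a derivation that duplicates the two input atoms independently, pairs them correctly, and then contracts the outputs, all while inserting the requisite $\med$ (and $\swi$) steps to reorganise the conjunction/disjunction structure. Concretely, one first promotes the two contractions into the generic form $\gcdown$ and $\gcu$ on atoms, applies the medial-based expansion from Prop.~\ref{GenRules} to reshuffle the four copies into the $(a\vlan a)\vlan(a\vlan a)$ pattern demanded by the right-hand flow, and then projects back down. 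This blows up the local subderivation by only a constant factor, and the required $\swi$/$\med$/$=$ bookkeeping around it is bounded in terms of the context size, hence in terms of $|\Phi|+|\chi|$. Summing the per-rule bounds and invoking the normal-form-size assumption of the Remark completes the polynomial bound, so each rule, and hence $\underset{\GNorm}{\longrightarrow}$, lifts polynomially to $\SKS$.
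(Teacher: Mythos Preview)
Your plan has the right overall shape --- handle each of the eight rules separately --- but the core mechanism is missing, and the description you give for $\rcdcu$ (and implicitly for $\ridcu$, $\rwdcu$, etc.) does not actually work. The two flow nodes forming a redex are in general \emph{not} syntactically close in $\Phi$: there is an arbitrary subderivation between the $\acd$ step and the $\acu$ step, along which the atom $a$ threads through many inference steps. You cannot treat this as a ``local subderivation'' with surrounding ``bookkeeping''; the transformation must act on the entire intermediate derivation. The paper's proof (following \cite{GuglGund:07:Normalis:lr}) does this by \emph{substitution}: for $\rcdcu$ one replaces the intermediate derivation $\Phi$ by $\Phi\{a/(a\vlan a)\}$, so that the old $\acu$ at the bottom becomes redundant and new $\acu$/$\acd$ pairs are inserted at the top via a single $\med$ step. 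Your description --- ``promote to generic form, apply the medial-based expansion from Prop.~\ref{GenRules}, reshuffle four copies'' --- does not correspond to any construction that propagates the duplication through $\Phi$; Prop.~\ref{GenRules} is about decomposing generic structural rules into atomic ones and has nothing to do with this.

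A secondary point: your appeal to the Remark citing \cite{DBLP:conf/rta/Das13} is both unnecessary and unhelpful. The definition of ``lifts polynomially'' already permits polynomial dependence on $|\Phi|$, so you do not need to shrink $\Phi$ first; and that result only bounds sizes, it does not supply the missing construction. Once you have the substitution idea, each one-step lift is easily seen to at most double the size of the relevant portion of the derivation, which is all that is required.
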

\begin{proof}
See \cite{GuglGund:07:Normalis:lr}. Essentially the proof shows that each local rewrite step on a flow of a derivation induces a sound manipulation of polynomial size on that derivation. We give as an example the case for $\rcdcu$, since that rule is of particular interest in this work.

Let $\xi\vlhole$ denote an arbitrary context, i.e.\ a formula with a single hole occurring in place of a subformula, and let $\xi\{A\}$ denote the result of substituting a formula $A$ for the hole in $\xi\vlhole$. The manipulation is as follows:
\[
\rcdcu\colon \quad 
%\begin{array}{ccc}
%\atomicflow{
%( 0,6)*{\afacd\one{}{}\two{}{}};
%( 0,0)*{\afacunw\three{}{}\four};
%(-4,0)*{\invisiblemark};
%( 4,0)*{\invisiblemark}}
%\quad & \to \quad
%& \atomicflow{
%(0,12)*{\afacu{}{}{}{}\one{}};
%(6,12)*{\afacu{}{}{}{}{}\two};
%( 0,0)*{\afacd{}{}{}{}\three{}};
%( 6,0)*{\afacd{}{}{}{}{}\four};
%(-2,6)*{\afvj4};
%( 8,6)*{\afvj4};
%( 3,6)*{\afex24}} \\
\vlderivation{
\vldd{\Phi}{}{\zeta\left\{ \vlinf{\acu}{}{a\vlan a}{a}  \right\}}{\vlhy{\xi\left\{  \vlinf{\acd}{}{a}{a\vlor a} \right\} }}
}
\qquad \to \qquad 
 \vlderivation{
\vldd{\Phi\{( a/(a\vlan a) \}}{}{ \zeta\left\{   a\vlan a  \right\} }{\vlhy{ \xi\left\{  \vlinf{\med}{}{\vls(\vlinf{\acd}{}{a}{\vls[a.a]} . \vlinf{\acd}{}{a}{\vls[a.a]} )}{\vlinf{\acu}{}{a\vlan a}{a} \vlor \vlinf{\acu}{}{a\vlan a}{a}}   \right\} }}
}
%\end{array}
\]
where $\Phi\{a/(a\vlan a)  \}$ is obtained by replacing $a$ by $(a\vlan a)$ everywhere in $\Phi$. 
%Notice that this manipulation is not dependent on the choice of the other rules of $\SKS$.
\end{proof}

%Notice that, in the proof above, the transformation can take place inside a larger derivation. This will generally be the case for most of the transformations in this paper, they can essentially be thought of as rewriting transformations, and so we will not signify surrounding context for clarity.

\begin{cor}\label{FlowTransProofs}
Given an $\SKS$ derivation $\Phi$ and a reduction $\flow(\Phi)= \phi_1 \underset{\GNorm}{\longrightarrow} \vldots \underset{\GNorm}{\longrightarrow} \phi_n = \flow(\Phi)\normform_\GNorm$ we can construct an $\SKS$ derivation with the same premiss and conclusion as $\Phi$ and with flow $\flow(\Phi)\normform_\GNorm$ in time polynomial in $|\Phi| + \sum\limits_{i=1}^{n} |\phi_i |$.
\end{cor}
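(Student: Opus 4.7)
The plan is to use Theorem~\ref{PolyLift} iteratively along the given reduction sequence, interleaving each lifting step with the size-bounding manipulation mentioned in the remark after Definition~\ref{RelLiftPoly}. Concretely, I would proceed by induction on the length $n$ of the reduction sequence, maintaining as an invariant that at stage $i$ we have in hand an $\SKS$ derivation $\Phi_i$ with the same premiss and conclusion as $\Phi$, with $\flow(\Phi_i) = \phi_i$, and with $|\Phi_i|$ bounded by a fixed polynomial in $|\phi_i|$.

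For the base case $i = 1$, we take $\Phi_1 = \Phi$ (possibly followed by one application of the remark's construction to ensure the size bound on $|\Phi_1|$ in terms of $|\phi_1|$, contributing to the $|\Phi|$ term of the overall complexity). For the inductive step, given $\Phi_i$ satisfying the invariant and the one-step reduction $\phi_i \underset{\GNorm}{\longrightarrow} \phi_{i+1}$, Theorem~\ref{PolyLift} produces in time polynomial in $|\Phi_i| + |\phi_{i+1}|$ an intermediate derivation $\Phi'_{i+1}$ with flow $\phi_{i+1}$ and the same premiss and conclusion. We then invoke the remark after Definition~\ref{RelLiftPoly} to replace $\Phi'_{i+1}$ by an equivalent $\Phi_{i+1}$ (same premiss, conclusion, and flow) whose size is polynomially bounded in $|\phi_{i+1}|$, restoring the invariant.

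Summing the per-step costs: at stage $i$, the cost is polynomial in $|\Phi_i| + |\phi_{i+1}|$, which by the invariant is polynomial in $|\phi_i| + |\phi_{i+1}|$. Summing across all $n-1$ reduction steps, together with the initial stage's contribution of $|\Phi|$, yields a total cost polynomial in $|\Phi| + \sum_{i=1}^{n} |\phi_i|$, as required. The final derivation $\Phi_n$ then has flow $\phi_n = \flow(\Phi)\normform_{\GNorm}$.

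The main obstacle is controlling the growth of intermediate derivations: a naive iteration of Theorem~\ref{PolyLift} alone would give a cost at stage $i$ of the form $p(|\Phi_i| + |\phi_{i+1}|)$ for some polynomial $p$, and this could compound over $n$ steps into a super-polynomial blow-up of $|\Phi_n|$ (and hence of the overall running time) even when the flows remain small. This is precisely why the normalisation step from the remark is essential: it ``resets'' the derivation size after each lift to a polynomial function of the current flow, decoupling $|\Phi_i|$ from $|\Phi|$ and ensuring the telescoped sum stays polynomial in $|\Phi| + \sum_{i=1}^{n} |\phi_i|$.
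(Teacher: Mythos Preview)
Your argument is correct, but it takes a noticeably different route from the paper's. The paper's entire proof is the single line ``By induction on the length $n$ of a $\GNorm$ derivation,'' i.e.\ a bare iteration of Theorem~\ref{PolyLift} without any discussion of compounding. Your proposal, by contrast, explicitly confronts the possibility that iterating a polynomial-time transformation $n$ times could blow up, and resolves it by invoking the remark after Definition~\ref{RelLiftPoly} to reset $|\Phi_i|$ to a polynomial in $|\phi_i|$ at every stage.

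Two observations are worth making. First, the remark you rely on is the one place where the paper explicitly says the underlying result (from \cite{DBLP:conf/rta/Das13}) is ``beyond the scope of this work''; so while your argument is cleaner and more self-contained at the level of this corollary, it imports machinery the paper deliberately does not use. Second, the paper's one-line induction is only rigorous as stated if one knows something stronger about the transformations in Theorem~\ref{PolyLift} than ``polynomial'': namely that each local rewrite increases the derivation size by at most an additive amount bounded by (a polynomial in) the current flow, so that the increments telescope into the sum $\sum_i |\phi_i|$. This is in fact how the concrete manipulations in \cite{GuglGund:07:Normalis:lr} behave, and is presumably what the paper is tacitly assuming, but it is not what Theorem~\ref{PolyLift} literally asserts. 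Your version avoids needing that unstated refinement, at the cost of appealing to an external result; the paper's version is shorter but leans on folklore about the shape of the lifting steps.

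One small point to tighten in your write-up: you should make explicit that the ``reset'' manipulation from the remark is itself computable in time polynomial in $|\Phi'_{i+1}|$ (not just that the output is small), since otherwise the per-step cost bound does not follow.
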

\begin{proof}
By induction on the length $n$ of a $\GNorm$ derivation. 
\end{proof}

%\begin{thm}\label{FlowPolyDer}
%If $\Phi$ is a $\aSKS$ proof with flow $\phi$ and $\phi\underset{\GNorm}{\longrightarrow}\psi$, then we can construct a $\KS$ proof with same conclusion, flow $\psi$ and size $O(|\Phi|+|\psi|)$. \cite{GuglGund:07:Normalis:lr}
%\end{thm}

%\begin{rem}
%The observant reader may already have noticed that normal forms of flows of $\aSKS$ proofs under $\GNorm$ are free of $\acu $ and $\awu $ nodes, i.e.\ they are $\KS$ flows, and so the results of this section yield a procedure to normalise $\aSKS$ proofs to $\KS$ proofs. We state this formally in Thm.~\ref{NormaSKStoKS} once we have derived results that also determine the complexity of this transformation.
%\end{rem}

\subsection{Reduction strategies}
We analyse the complexity of normalising a flow under $\GNorm$, presenting a class of reduction strategies that optimise the size of a $\underset{\GNorm}{\longrightarrow}$ derivation from a flow to its normal form, up to a polynomial. Our aim is to prove the following theorem:
\begin{thm}\label{NormFormLiftsPoly}
The function $\cdot\normform_\GNorm$ mapping a flow to its unique normal form under $\underset{\GNorm}{\longrightarrow}$ lifts polynomially to $\SKS$.
\end{thm}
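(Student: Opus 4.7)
The plan is to invoke Corollary \ref{FlowTransProofs}, which reduces the theorem to a purely combinatorial statement about $\GNorm$. Writing $\phi = \flow(\Phi)$ and $\psi = \phi\normform_{\GNorm}$ (well-defined by Theorem \ref{NormTermConf}), it suffices to exhibit a reduction $\phi = \phi_1 \underset{\GNorm}{\longrightarrow} \cdots \underset{\GNorm}{\longrightarrow} \phi_n = \psi$ whose length $n$ and cumulative intermediate size $\sum_{i} |\phi_i|$ are polynomially bounded in $|\Phi|+|\psi|$; the $\SKS$ derivation then follows at no additional cost.

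I would split the reduction into two stages. \textbf{Stage 1} exhaustively applies the coweakening-consuming rules $\rwdwu$, $\ridwu$, $\rcuwu$ and $\rcdwu$. A glance at Figure \ref{FigRed} shows that none of them increases the size of the flow, while $\rwdwu$, $\ridwu$ and $\rcuwu$ each strictly reduce the number of $\awu$-nodes, and $\rcdwu$ propagates $\awu$-nodes upward through $\acd$-nodes. A standard accounting on the heights of $\awu$-nodes gives a linear bound on Stage 1 length, matching the claim in the introduction that coweakening can be eliminated in linear time; every intermediate flow in this stage has size at most $|\phi|$. \textbf{Stage 2} then fires the remaining rules $\rwdcd$, $\rwdcu$, $\rcdcu$ and $\ridcu$ under a controlled ordering: the duplicating rule $\rcdcu$ is only triggered when no size-decreasing step applies, and among available $\rcdcu$ redexes we always pick one whose $\acu$-node is furthest from the top of the current flow, so that every newly created $\acu$ migrates into what is essentially its final position and no $\rcdcu$ steps cascade.

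The principal obstacle is exactly the potential cascading of $\rcdcu$, which in the abstract doubles cocontraction nodes and, if chained naively, produces an exponential blow-up. The crux of the proof is therefore a structural invariant showing that under the innermost-first strategy each intermediate $\phi_i$ embeds, up to a polynomial overhead, into a combination of $\phi$ and $\psi$, so that $|\phi_i|$ is bounded polynomially in $|\Phi|+|\psi|$. Once this size bound is in place, the lexicographic termination measure $D(\acu,\cdot)\times|\cdot|$ from the proof of Theorem \ref{NormTermConf} immediately yields a polynomial bound on $n$, since each rewrite step strictly decreases this measure evaluated on flows of polynomially bounded size. Combining the two bounds gives $\sum_i |\phi_i| \leq \mathrm{poly}(|\Phi|+|\psi|)$, and Corollary \ref{FlowTransProofs} then produces the required $\SKS$ derivation in polynomial time, proving that $\cdot\normform_{\GNorm}$ lifts polynomially to $\SKS$.
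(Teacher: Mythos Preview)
Your overall plan---reduce to Corollary~\ref{FlowTransProofs} by exhibiting a reduction sequence with polynomially bounded total size---is exactly right, and your Stage~1 is fine as far as it goes. But you make Stage~2 far harder than it needs to be, and the decisive step (the ``structural invariant'' bounding every intermediate) is only asserted, not proved.

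The real gap is your choice of partition. You split the rules according to which node they consume ($\awu$-nodes in Stage~1, the rest in Stage~2), and this forces Stage~2 to mix size-decreasing rules ($\rwdcd,\rwdcu$) with size-increasing ones ($\rcdcu,\ridcu$), leaving you with a non-monotone sequence and an innermost-first heuristic whose correctness you never establish. The paper instead partitions by \emph{size change}: all six of $\rwdcd,\ridwu,\rcuwu,\rwdcu,\rwdwu,\rcdwu$ (the system $\weak$) strictly \emph{decrease} size, while $\rcdcu,\ridcu$ (the system $\cont$) strictly \emph{increase} it; moreover $\cont$-rules create no $\weak$-redexes, so $(\phi\normform_\weak)\normform_\cont=\phi\normform_\GNorm$. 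With this split, Stage~1 has at most $|\phi|$ steps each of size at most $|\phi|$, and---this is the point you are missing---Stage~2 is \emph{monotone increasing in size}, so every intermediate in Stage~2 is automatically bounded by the \emph{final} size $|\psi|$, and there can be at most $|\psi|$ steps. No innermost strategy, no embedding argument, no structural invariant is required: the bound on Stage~2 drops out of monotonicity. Your approach is salvageable simply by moving $\rwdcd$ and $\rwdcu$ into Stage~1.

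There is also a secondary error in your length argument. The lexicographic measure $D(\acu,\cdot)\times|\cdot|$ from Theorem~\ref{NormTermConf} does \emph{not} yield a polynomial bound on the number of steps, even restricted to flows of bounded size $S$: the sequence $D(\acu,\cdot)$ can have length up to $S$ with entries up to $S$, and lexicographic descent in such a space admits chains of length exponential in $S$. That measure establishes termination only. The polynomial length bound comes instead, as above, directly from the strict size monotonicity of each stage.
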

The result follows from Cor.~\ref{FlowTransProofs} if we can find appropriate reductions with size only polynomially dependent on the initial derivation and normal form of its flow.

%First we give an illustrative example.
%These paths, called \emph{maximal $\ai$ paths}.

\begin{exa}\label{ExRedStrat}
Consider the following flow,
\[
\atomicflow{
(0,30)*{\afawd{}{}{}{}};
(0,24)*{\afacunw{}{}{}{}};
(0,18)*{\afacd{}{}{}{}{}{}};
(0,12)*{\vlvdots};
(0,4)*{\afacu{}{}{}{}{}{}};
(0,-4)*{\afacd{}{}{}{}{}{}};
}
\]
where there are $n$ $\acu $ nodes. Notice that this flow has normal form $\afraise{\atomicflow{
(0,0)*{\afawd{}{}{}{}};
}}$, but the complexity of a derivation witnessing this can vary significantly. If we apply $\rcdcu$ steps first then the length of a derivation to normal form is $\Theta(2^n)$, whereas applying $\{\rwdcd, \rwdcu \}$ steps first results in length $\Theta(n)$. These bounds follow from later results in this section, but are not difficult to prove directly.
\end{exa}

In fact, this sort of unnecessary exponential blowup can always be avoided by applying `weakening' rules first.

%The theorem follows under Cor.~\ref{FlowTransProofs} if we can find a specific reduction strategy for which TODO . In fact we present a class of strategies, simply applying some rules before others.

\newcommand{\weak}{\mathsf{wk}}
\newcommand{\cont}{\mathsf{cont}}
\begin{defi}
We define the following subsystems of $\GNorm$:
\begin{enumerate}
\item $\weak = \{\rwdcd, \ridwu, \rcuwu, \rwdcu,\rwdwu, \rcdwu\}$.
\item $\cont = \GNorm\setminus\weak = \{\rcdcu,\ridcu\}$.
\end{enumerate}
\end{defi}

\begin{prop}\label{SizeBoundsWkContRed}
Given a flow $\phi$ we have: 
\begin{enumerate}
\item If $\phi\overset{}{\underset{\weak}{\rightarrow}}\psi$ then $|\phi|>|\psi|$.
\item If $\phi\overset{}{\underset{\cont}{\rightarrow}}\psi$ then $|\phi|<|\psi|$.
\end{enumerate}
\end{prop}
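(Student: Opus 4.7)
The plan is a straightforward case analysis on the rewriting rule applied in the one-step reduction. Because a one-step reduction replaces an induced subgraph matching the LHS of some rule with the corresponding RHS and leaves every other edge in the flow intact, the size difference $|\psi|-|\phi|$ equals the difference in edge counts between the RHS and LHS of the applied rule. It therefore suffices to tabulate, for each of the eight rules in Fig.~\ref{FigRed}, the number of edges in its LHS and RHS.

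For part (1), the six rules in $\weak$ admit direct inspection: $\rwdcd$ reduces $3$ edges to $1$; $\ridwu$ reduces $2$ to $1$; $\rcuwu$ reduces $3$ to $1$; $\rwdcu$ reduces $3$ to $2$; $\rwdwu$ reduces $1$ to $0$; and $\rcdwu$ reduces $3$ to $2$. Each is a strict decrease. For part (2), the two rules in $\cont$ go the other way. The rule $\rcdcu$ replaces a $\acd$--$\acu$ pair (sharing one edge, so $5$ edges in total) with a $2\times 2$ pattern consisting of two $\acu$ nodes above two $\acd$ nodes linked by four middle edges, totaling $8$ edges. The rule $\ridcu$ replaces an $\aid$--$\acu$ pair ($4$ edges, with one shared edge) by two $\aid$ nodes feeding one $\acd$, sharing two edges, for $5$ edges in total. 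Both are strict increases.

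The argument is pure bookkeeping, so I do not anticipate any genuine obstacle. The only point requiring minor care is to distinguish pending edges from edges shared between nodes when reading off each LHS and RHS, but these are unambiguous from Fig.~\ref{FigRed}. Both inequalities in the proposition follow immediately from the tabulation above.
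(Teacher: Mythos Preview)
Your proposal is correct and matches the paper's intent: the paper states this proposition without proof, treating it as immediate by inspection of the rules in Fig.~\ref{FigRed}, and your edge-count tabulation is exactly the routine verification one would carry out. The only observation needed, which you state explicitly, is that a one-step reduction changes only the matched subgraph, so the global size change equals the local one.
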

%\begin{proof}
%Every $\weak$-step decreases the size of a flow, and every $\cont$-step increases the size, so the result follows by induction on the length of a rewriting derivation.
%\end{proof}

\begin{lem}\label{WkContStrat}
Given a flow $\phi$ we have $(\phi\normform_\weak)\normform_\cont = \phi\normform_\GNorm$.
\end{lem}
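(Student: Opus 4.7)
The plan is to combine a termination argument with a preservation lemma and then invoke confluence. Since $\cont\subseteq\GNorm$, the system $\cont$ inherits termination from $\GNorm$ (Theorem~\ref{NormTermConf}), so the right-hand side $(\phi\normform_\weak)\normform_\cont$ is well-defined. Confluence of $\underset{\GNorm}{\longrightarrow}$ ensures that $\phi\normform_\GNorm$ is the unique $\GNorm$-normal form reachable from $\phi$, so it suffices to show that $(\phi\normform_\weak)\normform_\cont$ is itself a $\GNorm$-normal form, i.e.\ admits no $\weak$ and no $\cont$ redex. The absence of $\cont$ redexes is immediate; the real content lies in the absence of $\weak$ redexes.

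The key lemma I would prove is the following \emph{preservation property}: if $\phi'$ is $\weak$-normal and $\phi'\underset{\cont}{\rightarrow}\psi'$, then $\psi'$ is again $\weak$-normal. Iterating this lemma along the (finite, by Step 1) $\cont$-reduction from $\phi\normform_\weak$ to $(\phi\normform_\weak)\normform_\cont$ yields that the final flow is $\weak$-normal, and the confluence argument above then completes the proof.

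To prove the preservation lemma I would argue as follows. Every rule in $\weak$ (see Fig.~\ref{FigRed}) has at least one $\aw$-node in its left-hand side, whereas neither rule in $\cont$ contains an $\aw$-node in its right-hand side. Consequently no $\weak$ redex can be created purely inside the region replaced by a $\cont$ step. Any newly created $\weak$ redex must therefore combine an $\aw$-node from the untouched part of $\phi'$ with a freshly created $\aid$, $\acd$ or $\acu$ node, joined through one of the boundary edges of the $\cont$ rewrite. A case analysis on each boundary edge of each $\cont$ rule shows that such a configuration would already have formed a $\weak$ redex in $\phi'$: for instance, in $\rcdcu$, an $\awd$ directly above a new $\acu$ must sit at an edge which in $\phi'$ was directly above the original $\acd$, hence a $\rwdcd$ redex; dually, an $\awu$ directly below a new $\acd$ must sit at an edge which in $\phi'$ was directly below the original $\acu$, hence a $\rcuwu$ redex. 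The symmetric checks for the two remaining boundary edges, and the analogous boundary-edge checks for $\ridcu$ against the $\weak$ rules $\ridwu$ and $\rcdwu$, are entirely routine.

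The main obstacle is conceptual rather than technical: one has to articulate cleanly that $\cont$-nodes do not interact with $\aw$-nodes in any way that was not already present in the input flow, so that the argument reduces to a handful of boundary-edge inspections. Once this is formulated, the remaining induction on the length of the $\cont$-reduction, and the appeal to confluence of $\GNorm$, are immediate.
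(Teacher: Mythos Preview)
Your proposal is correct and follows essentially the same approach as the paper: both establish the preservation property that $\cont$-steps cannot create $\weak$-redexes, induct on the length of the $\cont$-reduction, and conclude that $(\phi\normform_\weak)\normform_\cont$ is already $\GNorm$-normal. The only difference is in level of detail: the paper dispatches the preservation property in one line (``by inspecting the rules of $\cont$''), whereas you spell out the boundary-edge case analysis, and you make explicit the appeal to confluence that the paper leaves implicit in its use of the unique normal form $\phi\normform_\GNorm$.
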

\begin{proof}
By inspecting the rules of $\cont$, we observe that if there are no $\weak$-redexes in a flow $\psi$ and $\psi\underset{\cont}{\rightarrow}\theta$ then there are no $\weak$-redexes in $\theta$. By induction on the length of a $\underset{\cont}{\rightarrow}$-derivation we then have that there are no $\weak$-redexes in $(\phi\normform_\weak)\normform_\cont$. But since there are also no $\cont$-redexes, by definition of normal form, and $\weak\cup\cont=\GNorm$ we can conclude that $(\phi\normform_\weak)\normform_\cont$ is already in normal form for $\GNorm$.
\end{proof}

We can now give a proof of the main theorem of this section.

\begin{proof}[Proof of Thm.~\ref{NormFormLiftsPoly}]
Let $\Phi$ be an $\SKS$ derivation with flow $\phi$ whose normal form under $\GNorm$ is $\psi$, and fix a derivation $\phi= \phi_1 \underset{\weak}{\rightarrow} \vldots \underset{\weak}{\rightarrow} \phi_m = \psi_1 \underset{\cont}{\rightarrow} \vldots \underset{\cont}{\rightarrow} \psi_n = \psi$, which must exist by Lemma~\ref{WkContStrat}.

By Cor.~\ref{FlowTransProofs} we can construct an $\SKS$ derivation $\Psi$ with same premiss and conclusion as $\Phi$ and flow $\psi$ in time polynomial in $|\Phi| + \sum\limits_{i=1}^{m}|\phi_i| + \sum\limits_{j=1}^n |\psi_i|$. However, by Prop.~\ref{SizeBoundsWkContRed} we have that $|\phi_i|< |\phi_1|=|\phi|\leq|\Phi|$ for all $i$ and $|\psi_j|< |\psi_n| = |\psi|$ for all $j$, and so this construction can be done in time polynomial in $(m+n)(|\Phi| + |\psi| )$. 

Finally, notice also by Prop.~\ref{SizeBoundsWkContRed} that each $\underset{\weak}{\rightarrow}$ step decreases the size of a flow, so $m$ is bounded above by $|\phi|$ (and so also $|\Phi|$), and each $\cont$ step increases the size of a flow, so $n$ is bounded above by $|\psi|$, whence the result follows.
\end{proof}

%\begin{cor}
%The function $\cdot\normform_\weak$ lifts linearly to $\SKS$.
%\end{cor}

\section{Complexity of normal forms}
In this section we specialise previous results to $\aSKS$ flows, reducing the complexity of $\GNorm$ reduction to counting the number of certain paths in the initial flow. We then give two simple ways of estimating this number, which we use in later sections to obtain complexity results.

\begin{prop}\label{aSKSNormKS}
If $\phi$ is the flow of a $\aSKS$ proof, with normal form $\psi$ under $\GNorm$, then $\psi$ is free of $\awu $ and $\acu $ nodes, i.e.\ contains just $\KS$ nodes.
\end{prop}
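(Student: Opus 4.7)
The plan is to show, by contradiction, that any hypothetical $\awu$ or $\acu$ surviving in the normal form $\psi$ must still participate in a $\GNorm$-redex. Two invariants of $\aSKS$ proof flows carry over to $\psi$ and drive the argument: (i) since $\aiu\notin\aSKS$, $\phi$ contains no $\aiu$ node; and (ii) since the premise of a proof is $\ttt$, no edge of $\phi$ is pending on the premise side of the flow. Inspection of the rules in Fig.~\ref{FigRed} confirms that no $\GNorm$ rewrite can break either invariant, so (i) and (ii) hold of $\psi$ as well.

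For the $\awu$ case, consider the unique edge incident to $\awu$ at its premise-side port. By (ii) its other endpoint lies at some node, and by (i) that node is of one of the types $\aid,\awd,\acd,\acu$, that is, the non-$\aiu$ types possessing a conclusion-side port for the edge to land on. These four subcases are precisely the left-hand sides of $\ridwu$, $\rwdwu$, $\rcdwu$, and $\rcuwu$ respectively, each yielding a $\GNorm$-redex and contradicting the normality of $\psi$.

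For the $\acu$ case, the same analysis of the premise-side edge of $\acu$ produces four subcases, three of which ($\aid,\awd,\acd$) immediately supply redexes via $\ridcu$, $\rwdcu$, and $\rcdcu$. The remaining subcase, in which $\acu$'s premise-side edge meets another $\acu$, has no directly matching rule in $\GNorm$ and is the main obstacle of the proof. To handle it, I follow the resulting chain of $\acu$-$\acu$ connections through the flow. Each step of the chain strictly progresses toward the premise side, since the edge runs from a premise-side port of a lower $\acu$ to a conclusion-side port of an upper $\acu$; and because $\aSKS$ proof flows contain no $\aiu$, atoms flow acyclically through the underlying directed graph, precluding any loop that would close the chain. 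Thus the chain is finite and must terminate at a node of type $\aid$, $\awd$, or $\acd$, whose connection to the topmost $\acu$ of the chain is a $\ridcu$, $\rwdcu$, or $\rcdcu$ redex; this final redex gives the required contradiction, so $\psi$ contains only $\KS$-nodes.
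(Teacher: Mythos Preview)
Your proof is correct and follows essentially the same strategy as the paper's: argue by contradiction that any surviving $\awu$ or $\acu$ node in $\psi$ must sit below a node that completes a $\GNorm$-redex. The paper is terser---it simply observes that, by absence of $\rcdcu$ redexes, any $\acu$ lies above all $\acd$ nodes and hence (implicitly picking a topmost one) has its upper edge at an $\aid$ or $\awd$---whereas you spell out the $\acu$--$\acu$ chain explicitly; your remark that absence of $\aiu$ is what guarantees acyclicity is unnecessary (flows are vertically directed and hence acyclic regardless), but the termination of the chain is correct either way.
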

\begin{proof}
We argue by contradiction. Notice that by $\rcdcu$ we can assume that all $\acu$ nodes are above all $\acd$ nodes in $\psi$, by deformation, and so must have upper end directly connected to an $\awd$ or $\aid$ node, since $\phi$ is associated with a proof. However then $\psi$ can be reduced by either $\ridcu$ or $\rwdcu$, contradicting normality. An $\awu $ node, similarly, must be above all $\acd $ and $\acu $ nodes by $\rcdwu$ and $\rcuwu$, and so must have upper end directly connected to a $\awd $ or $\aid $ node, and again $\psi$ can be reduced by $\rwdwu$ or $\ridwu$, contradicting normality.
\end{proof}

Notice that the above proposition, along with previous results in this section, allows us to transform $\aSKS$ proofs to $\KS$ proofs of the same conclusion. We state this formally in Thm.~\ref{NormaSKStoKS} once we have determined more about the complexity of this transformation, which we address in the proceeding results.

\begin{defi}
An \emph{$\ai$-path} is a (simple) path that changes (vertical) direction only at $\aid$ and $\aiu$ nodes. We say that an $\ai  $-path is \emph{maximal} if it cannot be extended, and that it is \emph{open} if it begins and ends at a pending end of an edge.

The \emph{inversion} of a path is just the same path in the reverse direction.
\end{defi}

%What we have defined here as flow-paths coincide with what are called \emph{maximal $\ai$-paths} in \cite{Gund:09:A-Genera:kx}, modulo inversions.
\begin{exa}\label{MaxAiPaths}
The paths on the right, and their inversions, are exactly all the maximal $\ai$-paths of the flow below. All of these are open except $0$, since one of its ends is a $\awd $ node. More generally all open paths are maximal but not vice-versa.
\afnegspace
\[
\begin{array}{c}
\\
\atomicflow{
(-2,4)*{\afvjd8{}{1}{}{}{}{}};
(4,8)*{\afaidnw{}{}{}{}{}{}};
(2,4)*{\afvju8{}{4}{}{}{}{}};
(12,8)*{\afaidnw{}{}{}{}{}{}};
(8,4)*{\afacd{5}{}{}{6}{7}{}};
(8,-2)*{\afacunw{8}{}{}{9}{}{}};
(16,4)*{\afaiu{3}{}{}{2}{}{}};
(22,8)*{\afawd{}{}{}{}{}{}};
(22,2)*{\afvjd4{}{0}{}{}{}};
}
\qquad\quad
\begin{array}{c}
  23679, 23678,  \\
  4578, 4579, \\
  0,1. \\
\end{array}
\end{array}
\]
%where $+$, $-$ indicate the polarity of the atom associated with an edge, under some assignment, and $\star$ indicates that either polarity may be consistently assigned. 

%Notice that number of flow-paths is invariant under consistent assignments of polarity in a flow.
\end{exa}

%\begin{defi}
%We say that an $\ai  $-path is \emph{open} if it begins and ends at a pending end of an edge.
%\end{defi}

%\begin{exa}
%In the flow from Ex.~\ref{MaxAiPaths} the open $\ai  $-paths are just the maximal $\ai  $-paths except $0$, since one of its ends is a $\awd $-node. More generally, every open $\ai  $-path is maximal, but not vice-versa.
%\end{exa}

The following results allow us to estimate the size of the normal form of a flow, under $\GNorm$, without actually constructing it.

\begin{obs}\label{ConPath}
$\underset{\GNorm}{\longrightarrow}$ preserves the number of open $\ai$-paths in a flow.
\end{obs}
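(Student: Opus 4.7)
The plan is to argue by case analysis on the eight rewrite rules of $\GNorm$ given in Fig.~\ref{FigRed}. The basic observation is locality: each one-step reduction replaces an induced subgraph $L$ (the redex) by another subgraph $R$ whose pending edges biject naturally with those of $L$, while leaving the rest of the flow unchanged. An open $\ai$-path in the ambient flow therefore decomposes into alternating segments inside and outside the redex, meeting at these boundary edges; the outside segments are untouched by the rewrite, so preservation of the total count reduces to checking, rule by rule, that the multiset of $\ai$-path segments joining boundary edges inside the redex is the same for $L$ and $R$.

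For the six rules that involve a weakening node, namely $\rwdcd$, $\rwdcu$, $\ridwu$, $\rcuwu$, $\rwdwu$ and $\rcdwu$, the LHS contains a $\awd$ or $\awu$ whose sole incident edge is internal to the redex and terminates at the weakening node. No $\ai$-path can traverse such a dead-end edge, so only the remaining boundary edges contribute; direct inspection of the figure confirms that the boundary-to-boundary segments match on both sides. For example, in $\rwdcd$ the one remaining top-boundary edge and the unique bottom-boundary edge are joined by exactly one $\ai$-path segment, through the $\acd$ on the LHS (which does not count as a direction change), and by a single edge on the RHS.

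For the two non-weakening rules the check is still direct. In $\rcdcu$ the redex has four boundary edges (two top, two bottom), and on both sides each top-bottom pair is joined by a single $\ai$-path segment (since $\acd$ and $\acu$ nodes preserve vertical direction), so the counts coincide. The only genuinely delicate case is $\ridcu$, where the LHS has a single $\aid$ node while the RHS has two $\aid$s together with an $\acd$; one must verify explicitly that the direction-changing structure of the two configurations yields the same $\ai$-path segments among the three boundary edges. This enumeration is the main technical step of the argument, though it too is readily verified by inspection of Fig.~\ref{FigRed}; all other cases reduce to immediate diagrammatic checks.
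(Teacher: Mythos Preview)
Your proposal is correct. The paper itself offers no proof of this statement, recording it merely as an observation to be verified by inspection of the rules in Fig.~\ref{FigRed}; your case analysis is precisely the routine check that justifies it, and the key points you isolate (that weakening edges are dead ends for open paths, that $\rcdcu$ preserves the four top--bottom connections, and that $\ridcu$ preserves the two connections through the $\aid$ node) are the right ones.
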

%\begin{proof}
%Each local reduction step preserves the number of $\ai$-paths, and so the proposition follows by induction on the length 
%\end{proof}

%\begin{prop}
%A normal $\aSKS$-flow under $\GNorm$ can be divided into two disjoint components: one with just $\awd$ nodes one without .
%\end{prop}
%\begin{proof}
%Suppose there 
%\end{proof}

%Recall that, in a proof, there are no assumptions, and so the flow of a proof can have no edge with upper end exposed, i.e.\ every edge must be path-connected to a $\gwd$ or $\gidown$ node.
\begin{nota}
We write $\ulcorner\phi\urcorner$ to denote the number of open $\ai$-paths in a flow $\phi$, modulo inversions, and $\#(\rho,\phi)$ to denote the number of $\rho$ nodes in $\phi$ for an atomic structural rule $\rho$.
\end{nota}

\begin{lem}\label{PathNode}
If $\phi$ is the flow of a $\KS$ proof then $\ulcorner\phi\urcorner = \#(\aid,\phi)$.
\end{lem}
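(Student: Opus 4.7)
My plan is to exhibit a bijection between the $\aid$ nodes of $\phi$ and the open $\ai$-paths of $\phi$ modulo inversion. To set things up, I note that since $\phi$ comes from a $\KS$ proof, its nodes are restricted to $\aid$, $\awd$ and $\acd$, so $\ai$-paths in $\phi$ may change direction only at $\aid$ nodes (there are no $\aiu$). Moreover, because the premiss is $\ttt$, $\phi$ has no pending upper ends, so all pending endpoints lie at the bottom.

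The main technical step is to analyse the direction dynamics of $\ai$-paths under these constraints. Going downward, an $\ai$-path is deterministic and always terminates at a pending lower end: at an $\acd$ node the unique lower edge forces continuation, while $\aid$ and $\awd$ have no upper edges and so can never be encountered while descending. Going upward, the path may branch at an $\acd$ (two upper edges) but must eventually reach either an $\aid$, where it is forced to bend back downward (since $\aid$ has no upper edges), or an $\awd$, where it stops. Now consider any open $\ai$-path in $\phi$: since both of its endpoints are pending lower ends, it must travel upward from one end and back downward to the other, so it must change direction at least once; every bend is at an $\aid$ and is necessarily up-to-down. After the first bend the path is descending and, by the downward analysis above, can no longer reach any $\aid$, so it cannot bend again. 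Hence each open $\ai$-path has a unique \emph{peak}, the single $\aid$ at which it bends.

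For the converse direction, given any $\aid$ node $\nu$, descending along each of its two legs deterministically yields a pending lower end, and concatenating these two descending segments through $\nu$ gives an open $\ai$-path with peak $\nu$. Distinct $\aid$ nodes yield distinct peaks, so the assignment from $\aid$ nodes to open $\ai$-paths (modulo inversion) is a bijection, whence $\ulcorner\phi\urcorner = \#(\aid,\phi)$. The only real obstacle here is making the direction-dynamics argument completely watertight; once one has fixed the observation that in a $\KS$ proof flow the nodes $\aid$ and $\awd$ appear only as upper endpoints of edges (and pending ends only as lower endpoints), both halves of the bijection fall out immediately.
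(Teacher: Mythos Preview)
Your argument is correct and follows essentially the same approach as the paper: both establish a bijection between $\aid$ nodes and open $\ai$-paths by observing that, in a $\KS$ proof flow, every open path must bend at exactly one $\aid$ node and that the two downward legs from any $\aid$ are uniquely determined (since $\acd$ has a single lower edge) and reach pending lower ends. Your write-up spells out the direction dynamics more carefully than the paper's terse ``in-degree $1$ before, out-degree $1$ after'' remark, but the underlying idea is identical.
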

\begin{proof}
Since the flow of a proof can have no edge with upper end pending, every edge must be path-connected to a $\awd$ or $\aid$ node. Since no open path goes through an $\awd$ node, and there are no $\aiu$ nodes, every open $\ai$-path goes through a unique $\aid$ node, and every $\aid $ node accommodates at least one such path since there are no $\awu $ nodes. 

Now, the only other node a path can go through is an $\acd$ node. Consequently every node in an open $\ai$-path has in-degree 1 before passing its unique $\aid$ node, and out-degree 1 after. Hence each $\aid$ node accommodates exactly one open $\ai$-path.
\end{proof}

\begin{lem}\label{SizePath}
If $\phi$ is the flow of a $\aSKS$ proof, with normal form $\psi$ under $\GNorm$, then $|\psi|=O(|\phi|+\ulcorner\phi\urcorner)$.
\end{lem}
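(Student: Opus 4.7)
The plan is to count the edges of $\psi$ directly, exploiting that, by Prop.~\ref{aSKSNormKS}, $\psi$ contains only $\aid$, $\awd$, $\acd$ nodes. First I would set up two edge-counting identities. Since $\psi$ is the flow of a proof it has no top-pending edges, so every edge has its upper end attached to the bottom of a node; tallying such attachments over $\aid$ (2 each), $\awd$ (1 each) and $\acd$ (1 each) gives
\[
|\psi| = 2\#(\aid,\psi) + \#(\awd,\psi) + \#(\acd,\psi).
\]
Dually, every edge has its lower end either at the top of some $\acd$ (the only node with upper ports in $\psi$, contributing 2 each) or pending at the bottom, giving $|\psi| = 2\#(\acd,\psi) + c$, where $c$ is the number of bottom-pending edges of $\psi$. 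Eliminating $\#(\acd,\psi)$ between the two identities yields $|\psi| = 4\#(\aid,\psi) + 2\#(\awd,\psi) - c$.

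Next I would bound each remaining quantity. For $\#(\aid,\psi)$: by Thm.~\ref{PolyLift} applied to any $\GNorm$-reduction from $\phi$ to $\psi$, $\psi$ is itself the flow of some $\SKS$ proof; since $\psi$ has only $\KS$-type nodes this proof is actually in $\KS$, so Lemma~\ref{PathNode} gives $\#(\aid,\psi) = \ulcorner\psi\urcorner$, which by Obs.~\ref{ConPath} equals $\ulcorner\phi\urcorner$. For $\#(\awd,\psi)$: since $\psi$ is $\GNorm$-normal, the output edge of any $\awd$ cannot target an $\acd$-input (that configuration is the $\rwdcd$-redex), and it cannot attach to any other node of $\psi$ because $\aid$ and $\awd$ have no input ports; hence it must pend at the bottom, and distinct $\awd$ nodes contribute distinct bottom-pending edges, so $\#(\awd,\psi) \leq c$. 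Finally $c \leq |\phi|$ because $\GNorm$-reductions (by direct inspection of Fig.~\ref{FigRed}) never create or destroy pending edges of a proof flow, so $c$ equals the number of bottom-pending edges of $\phi$, which is at most $|\phi|$.

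Assembling the bounds gives $|\psi| \leq 4\ulcorner\phi\urcorner + 2|\phi| = O(|\phi| + \ulcorner\phi\urcorner)$, as required. The main obstacle here is essentially bookkeeping --- stating the two port-counting identities cleanly, and executing the short case-inspection confirming that every $\GNorm$ rule leaves the bottom-pending structure of a proof's flow untouched --- since the substantive work has already been done in the preparatory results of the section.
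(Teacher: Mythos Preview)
Your argument is correct, and it reaches the same bound $|\psi|\le 4\ulcorner\phi\urcorner + O(|\phi|)$ that the paper obtains, but the route is genuinely different. The paper first invokes the $\weak$-then-$\cont$ reduction strategy (Lemma~\ref{WkContStrat}) to physically split $\psi$ into a component $\psi_1$ containing only $\aid,\acd$ nodes and a component $\psi_2$ containing only $\awd$ nodes; it then bounds $|\psi_2|\le|\phi|$ via Prop.~\ref{SizeBoundsWkContRed} and $|\psi_1|\le 4\,\#(\aid,\psi_1)$ by a direct in/out-degree count. Your proof bypasses the $\weak/\cont$ decomposition entirely: you stay with $\psi$ as a whole, set up two port-counting identities (upper ends vs.\ lower ends), algebraically eliminate $\#(\acd,\psi)$, and then use normality of $\psi$ under $\rwdcd$ to force every $\awd$-output to be bottom-pending, giving $\#(\awd,\psi)\le c$.

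What each buys: the paper's decomposition makes the $\awd$ bound immediate (those nodes are untouched by $\cont$, hence already present in $\phi\normform_\weak$), at the cost of threading through Lemma~\ref{WkContStrat}. Your approach is more self-contained and slightly more elementary, needing only Prop.~\ref{aSKSNormKS}, Lemma~\ref{PathNode}, Obs.~\ref{ConPath}, and a short inspection of Fig.~\ref{FigRed} to see that each $\GNorm$ rule preserves the number of lower-pending edges. Note that your final assembly can in fact be tightened to $|\psi|\le 4\ulcorner\phi\urcorner + c \le 4\ulcorner\phi\urcorner + |\phi|$, matching the paper's constant, since $2\#(\awd,\psi)-c\le c$ once you have $\#(\awd,\psi)\le c$.
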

\begin{proof}
Let $\chi = \phi\normform_\weak$ so that $\phi\overset{*}{\underset{\weak}{\rightarrow}}\chi\overset{*}{\underset{\cont}{\rightarrow}}\psi$ by Lemma~\ref{WkContStrat}. By inspection of the rules of $\weak$ we must be able to decompose $\chi$ into two disjoint components: $\chi_1$, consisting of just $\aid $, $\acd $ and $\acu $ nodes, and $\chi_2$ consisting of just $\awd $ and $\awu $ nodes. In fact, since $\phi$ was the flow of a proof, $\chi_2$ consists of just $\awd $ nodes.

Now notice that any $\underset{\cont}{\rightarrow}$ derivation from $\chi$ to $\psi$ acts only on $\chi_1$, and so $\psi$ can be decomposed into two disjoint components: $\psi_1$, which is the normal form of $\chi_1$ under $\GNorm$, consisting of just $\aid $ and $\acd $ nodes by Prop.~\ref{aSKSNormKS}, and $\psi_2 = \chi_2$, consisting of just $\awd $ nodes.

We have that $|\psi_2| = |\chi_2|\leq|\chi|\leq|\phi|$ by Prop.~\ref{SizeBoundsWkContRed}.

%Recall that, by Prop.~\ref{aSKSNormKS}, $\psi$ must contain just $\KS$ nodes. By normality we must be able to decompose $\psi$ into disjoint components $\psi_1 $ with just $\aid$ and $\acd$ nodes, and $\psi_2 $ with just $\awd$ nodes, otherwise $\psi$ would be reducible by the rule $\rwdcd$.

%Since each rule involving $\awd$ nodes in either TODO, i.e.\ $\{\rwdcd, \rwdcu, \rwdwu\}$, reduces the size of the flow, we have that $|\psi_2 |\leq|\phi| $. LOOK AT.

Notice that $|\psi_1 | = 2\cdot\#(\aid,\psi_1 ) + \#(\acd,\psi_1 )$, since each $\aid$ node has two edges and each $\acd$ node adds a single extra edge. Since an $\acd $ node has in-degree $2$ and out-degree $1$, the number of $\acd$ nodes cannot outnumber the number of edges emanating from $\aid$ nodes in $\psi_1$, i.e.\ $\#(\acd,\psi_1) \leq 2\cdot\#(\aid,\psi_1)$, so we have $|\psi_1 |\leq 4\cdot\#(\aid,\psi_1 )$. By Lemma \ref{PathNode} we then have that $|\psi_1 |\leq 4\cdot\ulcorner\psi\urcorner$, and by Obs.\ \ref{ConPath} that $|\psi_1 |\leq 4\cdot\ulcorner\phi\urcorner$. 

Putting these together we obtain $|\psi| = |\psi_1 | + |\psi_2 | \leq |\phi| + 4\cdot\ulcorner\phi\urcorner$ as required.
\end{proof}

\begin{thm}\label{NormaSKStoKS}
If $\Phi$ is a $\aSKS$ proof with flow $\phi$ then we can transform it to a $\KS$ proof of the same conclusion in time polynomial in $|\Phi| + \ulcorner\phi\urcorner$.
\end{thm}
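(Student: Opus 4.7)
The plan is to assemble the three preceding results: Thm.~\ref{NormFormLiftsPoly}, Prop.~\ref{aSKSNormKS} and Lemma~\ref{SizePath}. Let $\psi = \phi\normform_\GNorm$ denote the unique normal form of $\phi$ under $\underset{\GNorm}{\longrightarrow}$, which exists by Thm.~\ref{NormTermConf}. The first observation is that since $\Phi$ is an $\aSKS$ proof, $\phi$ contains no $\aiu$ nodes to begin with, and by Prop.~\ref{aSKSNormKS} the normal form $\psi$ contains neither $\awu$ nor $\acu$ nodes either. Hence the only structural nodes occurring in $\psi$ are $\aid$, $\awd$ and $\acd$, i.e.\ $\psi$ is a $\KS$ flow.

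Next, I would apply Thm.~\ref{NormFormLiftsPoly} to obtain, in time polynomial in $|\Phi|+|\psi|$, an $\SKS$ derivation $\Psi$ with the same premiss and conclusion as $\Phi$ and with $\flow(\Psi)=\psi$. Since $\Phi$ is a proof, $\prem(\Psi)\equiv\prem(\Phi)\equiv\ttt$, so $\Psi$ is itself a proof. Moreover, since every atomic structural rule instance in $\Psi$ corresponds to a node of $\psi$, and $\psi$ contains only $\KS$-nodes, no $\aiu$, $\awu$ or $\acu$ steps occur in $\Psi$; the remaining inference steps ($\swi$, $\med$ and $=$) are all already in $\KS$. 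Therefore $\Psi$ is a $\KS$ proof of the same conclusion as $\Phi$.

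It remains to bound the construction time by a polynomial in $|\Phi|+\ulcorner\phi\urcorner$. By Lemma~\ref{SizePath} we have $|\psi| = O(|\phi|+\ulcorner\phi\urcorner)$, and trivially $|\phi|\leq|\Phi|$ since the flow of a derivation has at most as many edges as the derivation has atom occurrences. Substituting these bounds into the polynomial provided by Thm.~\ref{NormFormLiftsPoly} gives the desired runtime. There is no real obstacle here: all the heavy lifting has already been done in the preceding subsections (termination and confluence of $\GNorm$, the polynomial lifting of reduction strategies that factor weakening rules first, and the counting argument via open $\ai$-paths). The only care needed is to verify that the output derivation really does land inside $\KS$ rather than merely inside $\SKS$, which follows immediately from the shape of $\psi$ together with the observation that the rewriting rules in Fig.~\ref{FigRed} do not introduce structural nodes of kinds absent from the reduced flow.
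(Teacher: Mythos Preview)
Your proposal is correct and follows essentially the same route as the paper's own proof: take the normal form $\psi$, invoke Prop.~\ref{aSKSNormKS} to see it contains only $\KS$ nodes, lift via Thm.~\ref{NormFormLiftsPoly} in time polynomial in $|\Phi|+|\psi|$, and then bound $|\psi|$ by Lemma~\ref{SizePath}. If anything, you are slightly more explicit than the paper in spelling out why the lifted derivation actually lies in $\KS$ rather than merely in $\SKS$.
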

\begin{proof}
Let $\psi$ be the normal form of $\phi$ under $\GNorm$. By Prop.~\ref{aSKSNormKS} we have that $\psi$ contains just $\KS$ nodes, and so by Thm.~\ref{NormFormLiftsPoly} we have that a $\KS$ proof with same conclusion as $\Phi$ can be constructed in time polynomial in $|\Phi| + |\psi|$. The statement follows by the bound $|\psi| = O(|\phi| + \ulcorner\phi\urcorner ) = O(|\Phi| + \ulcorner\phi\urcorner)$ given by Lemma~\ref{SizePath}.
\end{proof}

\subsection{Length of atomic flows}
It is not difficult to see that the main contributor to an increase of flow size reducing under $\GNorm$ is the rule $\rcdcu$. It can sometimes cause an exponential blowup, as evident in Ex.~\ref{ExRedStrat} if there were no $\awd $ node at the top.
%\end{rem}

%\begin{defi}[Orientations]
%We refer to the two ways a $\gw$-node or $\gc$-node may appear in a flow, determined by the direction of their edges, as the \emph{orientations} of the node.
%\end{defi}

%We do not name the orientations in a flow, the concept is purely used for the following proposition which estimates the increase in size caused by $\rcdcu$.

The following result provides a simple estimate of the number of paths in a flow, and also the complexity of flow normalisation under $\GNorm$, and is just an adaptation of well-known techniques to estimate the number of paths in any directed acyclic graph. 

\begin{defi}[Dimensions of a flow]
The \emph{length} of a flow is the maximum number of times the type of node changes in an $\ai$-path. The \emph{width} of a flow is the maximum size of a connected subflow containing just one type of node. The \emph{breadth} of a flow is the number of connected components it has.
\end{defi}

The above definition is perhaps most easily understood by allowing $\acd$ and $\acu$ nodes to have unbounded in-degree and out-degree respectively. For example, by just collapsing any configuration of $n-1$ connected $\acd$ nodes to a single node $\atomicflow{
(0,0)*{\afacd{}{}{}{}{}{}};
(0,4)*{\vldots};
(0,2)*{\text{\emph{\scriptsize{n}}}};	
}$, and similarly for configurations of $\acu$ nodes. Then we can consider the length of the flow to be essentially just the maximum length, in the usual sense, of an $\ai$-path, and the width is the maximum out-degree or in-degree.
%\[
%\atomicflow{
%(0,0)*{\afacu{}{}{}{}{}{}};
%}
%\qquad
%\atomicflow{
%(0,0)*{\afacu{}{}{}{}{}{}};
%}
%\]

It is worth mentioning here that replacing $\acd$ and $\acu$ nodes with these `super' nodes can be lifted polynomially to proofs, in the sense that $\acd$ and $\acu$ steps in a proof can be soundly replaced by similar `super' steps with only linear change in size.

\begin{prop}\label{CocConSizePath}
If $\phi$ is a flow consisting of just $\aSKS$ nodes with width $w$, length $l$ and breadth $b$, then $\ulcorner\phi\urcorner = b\cdot w^{\frac{l}{2}+O(1)}$.
\end{prop}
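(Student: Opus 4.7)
The plan is to reduce the counting to a branching-factor estimate that exploits the absence of $\aiu$ nodes in $\aSKS$. Since a direction change in an $\ai$-path can occur only at an $\aid$ or $\aiu$ node, every open $\ai$-path in $\phi$ either runs monotonically downward from a top pending end to a bottom pending end, or is U-shaped, rising from a bottom pending end up to some single $\aid$ node $\nu$ and descending back to a bottom pending end. I would count each kind separately, using that a U-path through $\nu$ is uniquely determined (modulo inversion) by an unordered pair of downward paths from $\nu$'s two lower edges.

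The central observation is that, travelling downward, a path can fork only at an $\acu$ node, since $\acd$ has a single lower output; dually, travelling upward it forks only at $\acd$. A connected single-type subflow is a tree of size at most $w$, so each $\acu$-block has at most $w+1$ lower outputs and each $\acd$-block at most $w+1$ upper inputs, giving a branching factor of at most $w+1$ per forking block. Since blocks alternate in type along each monotone half of a U-path, a path of length at most $l$ traverses at most $l+O(1)$ blocks, of which at most $l/2+O(1)$ are forking blocks for that path. This yields a bound of $(w+1)^{l/2+O(1)} = w^{l/2+O(1)}$ on the number of U-paths through any fixed $\aid$ node.

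It remains to sum the per-$\aid$ estimate over connected components. Two distinct $\aid$ nodes in the same component interact only through merging at $\acd$-blocks or through shared pending ends, and since each merge-structure has size at most $w$, the number of $\aid$ nodes per component is absorbed into the $w^{O(1)}$ slack in the exponent. The monotonic open paths are handled analogously, branching only at $\acu$-blocks encountered downward and giving the same $w^{l/2+O(1)}$ bound per component. Summing over the $b$ components yields $\ulcorner\phi\urcorner = b \cdot w^{l/2+O(1)}$. The main obstacle is making the interaction step precise: one must carefully amortise how the outputs of several $\aid$ nodes flow through shared merging structures without producing a multiplicative factor polynomial in $l$ that the $O(1)$ in the exponent cannot absorb; the alternation of block types along a path together with the width bound on each merge-tree is exactly what keeps the exponent at $l/2$ rather than $l$.
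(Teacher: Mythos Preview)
Your branching-factor argument is exactly the idea behind the paper's proof, and your write-up is more careful than what appears there. The paper's proof is a two-line sketch: it collapses each maximal connected block of $\acd$ (resp.\ $\acu$) nodes into a single super-node of arity at most $w$, reduces to $b=1$, and simply observes that the extremal configuration is a vertical chain of alternating super-$\acd$/super-$\acu$ blocks, each multiplying the downward path count by $w$, with a possible $\aid$ configuration at the top contributing only an additive constant to the exponent. No explicit per-$\aid$ summation is carried out.

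That said, the step you flag as ``the main obstacle'' is a genuine gap, and your proposed resolution does not work. The claim that the number of $\aid$ nodes per component is bounded by $w^{O(1)}$ is false: take $n$ nodes $\aid_1,\dots,\aid_n$ and, for each $i<n$, feed the right edge of $\aid_i$ and the left edge of $\aid_{i+1}$ into a single $\acd$ node whose output is pending, with the two extreme edges also pending. This flow has $b=1$, $w=O(1)$, $l=2$, yet $\ulcorner\phi\urcorner=n$. So neither your argument nor the paper's sketch actually establishes the bound as literally stated. What your per-$\aid$ estimate \emph{does} give is $\ulcorner\phi\urcorner\leq(\#\aid+\#\text{top-pending})\cdot w^{l/2+O(1)}\leq|\phi|\cdot w^{l/2+O(1)}$, and this is precisely what the paper needs: every application of the proposition goes through the subsequent remark, where $w$ and $b$ are replaced by $|\phi|$, or through the bounded-length case, where the extra $|\phi|$ factor is harmless.
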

\begin{proof}
For simplicity we write $\atomicflow{
(0,0)*{\afacd{}{}{}{}{}{}};
(0,4)*{\vldots};
(0,2)*{\text{\emph{\scriptsize{n}}}};	
}$ for some configuration of $n-1$ connected $\gcdown$ nodes, and similarly for a $\gcu$ configuration. Notice that it suffices to consider the case when $b=1$, since paths in different connected components are disjoint.

In the worst case scenario we just have a sequence of $\atomicflow{
( 0,6)*{\afacd\one{}{}\two{}{}};
( 0,0)*{\afacunw\three{}{}\four};
(-4,0)*{\invisiblemark};
( 4,0)*{\invisiblemark};
(0,10)*{\vldots};
(0,8)*{\text{\emph{\scriptsize{w}}}};
(0,-1.5)*{\vldots};
(0,0)*{\text{\emph{\scriptsize{w}}}};
}$ configurations in series vertically, and each configuration multiplies the number of paths by $w$. It does not make much difference if there is a $\atomicflow{
(4,4)*{\afaidex{}{}{}{}{}{}42 };
(0,-2)*{\afacunw{}{}{}{}{}{}};
(8,-2)*{\afacunw{}{}{}{}{}{}};
(0,-2)*{\text{\emph{\scriptsize{w}}}};
(0,-3.5)*{\vldots};
(8,-2)*{\text{\emph{\scriptsize{w}}}};
(8,-3.5)*{\vldots};
}$ configuration at the top of the flow because of the way we have defined an $\ai$-path; the exponent differs by just the addition of a constant.
%
%The worst case for $0$ alternations is when there are just identity nodes with \vspace{0.1cm}\afnegspace cocontraction nodes below, e.g.\ $\atomicflow{
%(4,4)*{\afaidex{}{}{}{}{}{}42 };
%(0,-2)*{\afacunw{}{}{}{}{}{}};
%(8,-2)*{\afacunw{}{}{}{}{}{}};
%(0,-2)*{\text{\emph{\scriptsize{n}}}};
%(0,-3.5)*{\vldots};
%(8,-2)*{\text{\emph{\scriptsize{n}}}};
%(8,-3.5)*{\vldots};
%}$, where the number of paths is $n^2$.
%\vspace{0.2cm}\afnegspace
%
%Adding a $\atomicflow{
%( 0,6)*{\afacd\one{}{}\two{}{}};
%( 0,0)*{\afacunw\three{}{}\four};
%(-4,0)*{\invisiblemark};
%( 4,0)*{\invisiblemark};
%(0,10)*{\vldots};
%(0,8)*{\text{\emph{\scriptsize{n}}}};
%(0,-1.5)*{\vldots};
%(0,0)*{\text{\emph{\scriptsize{n}}}};
%}$ configuration at the bottom of either side adds $2$ alternations and multiplies the number of paths by $n$, whence the proposition follows by induction on the number of alternations.
\end{proof}

\begin{rem}\label{BreadthWidthSize}
We generally use the trivial upper bound of size of flow for width and breadth, yielding the estimate $\ulcorner\phi\urcorner = |\phi|^{\frac{l}{2} + O(1)}$ for a $\aSKS$ flow $\phi$.
\end{rem}

\subsection{Contraction loops in atomic flows}
%Before we show the simulations we require a more refined estimate on the number of paths in an atomic flow, which we present in the following subsection. We then introduce the notion of series-parallel graphs, a simulation of cut-free Gentzen proofs with this structure and finally apply this to yield the result for Resolution.
%
%
%For both of the simulations in this section we rely crucially on the result below. 
Sometimes the estimate of number of open $\ai  $-paths given via length of a flow in the previous section is not sufficiently accurate. An example of this is given in Sect.~\ref{TreeResSim} where the length of flows in the translation $R$ is polynomial in their size, and so the estimate is exponential, yet the actual number of open $\ai  $-paths is bounded above by a polynomial.

Unsurprisingly, it is only certain interactions between $\acd$ and $\acu$ nodes that generate significant complexity in a $\aSKS$ flow; in this section we focus on one type of interaction we call a `contraction loop'. In the absence of these the number of open $\ai$-paths in a flow is still polynomial in its size, regardless of its length.

\begin{defi}\label{Dfn:ConLoop}
A \emph{contraction loop} in a flow is a pair of $(\acu,\acd)$ nodes $(\nu_1 , \nu_2)$ such that there are at least two disjoint (directed) paths between $\nu_1$ and $\nu_2$
\end{defi}

For example we give the following flow and all its contraction loops,
\[
\atomicflow{
(2,16)*{\afacu{}{}{}{}{u}{}};
(10,16)*{\afacu{}{}{}{}{v}{}};
(6,8)*{\afacd{}{w}{}{}{\star}{}};
(6,2)*{\afacunw{}{x}{}{}{}{}};
(2,-4)*{\afacd{}{}{}{}{y}{}};
(10,-4)*{\afacd{}{}{}{}{z}{}};
(0,9)*{\afvju6{}{}{}{}{}{}};
(0,3)*{\afvj6{}{}{}{}{}{}};
(12,9)*{\afvju6{}{}{}{}{}{}};
(12,3)*{\afvj6{}{}{}{}{}{}};
}
\qquad\quad
(u,y), (v,z).
\]
whereas every other pair has only one path between them. If the edge $\star$ were broken and there was no path from $w$ to $x$ then there would be no contraction loops at all.

\newcommand{\weight}{w}

\begin{lem}\label{NoConLoopsQuad}
If there are no contraction loops in a $\aSKS$ flow $\phi$ then $\ulcorner\phi\urcorner\leq|\phi|^{3}$.
\end{lem}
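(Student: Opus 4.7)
The plan exploits two structural features of $\aSKS$ flows: (i) the absence of $\aiu$ nodes means the only available direction change along an $\ai$-path is the down-to-up transition at an $\aid$ node; and (ii) the uniqueness of directed monotonic paths that follows from the no-contraction-loops hypothesis. From (i), every open $\ai$-path in $\phi$ has at most one direction change, and hence falls into one of two classes: (a) monotonic paths, which necessarily connect one top-pending end to one bottom-pending end; or (b) one-turn paths, with both endpoints top-pending and a single turn at some $\aid$ node.

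For (ii), I would argue that two distinct directed downward paths from $u$ to $v$ must first diverge at some node with two outgoing downward edges (forcing it to be an $\acd$) and reconverge at some node with two incoming downward edges (so either $\acu$ or $\aid$). A reconvergence at an $\acu$ yields precisely two disjoint directed paths between an $\acu$ and an $\acd$, i.e.\ a contraction loop, contradicting the hypothesis. The only remaining case is $v \equiv \aid$, where reconvergence is permitted; the upward version is symmetric at $u \equiv \aid$.

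The count is then direct. Class (a) contributes at most $|\phi|^2$ paths by the uniqueness lemma applied to pairs of pending ends of opposite polarity. For class (b), each path is determined by a $4$-tuple $(e_1, e_2, \nu, \alpha)$ where $\nu$ is the $\aid$ turn node and $\alpha$ is one of $\nu$'s two edges, specifying which is entered by the downward half; this accounts for the $v \equiv \aid$ exception of the uniqueness lemma. The down-segment from $e_1$ to $\nu$ entering via $\alpha$ is unique by the uniqueness lemma applied to the pair $(e_1, w)$, where $w$ is the upper endpoint of $\alpha$ (and is not itself an $\aid$ node); the up-segment from $\nu$ via the other edge $\beta$ to $e_2$ is likewise unique. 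Since the count of such $4$-tuples is bounded by $2|\phi|^3$, modulo the inversion identification we obtain $\ulcorner\phi\urcorner\leq|\phi|^3$. The main subtlety is pinpointing that $\aid$-nodes are the sole exception to uniqueness of directed paths, which forces the $4$-tuple encoding in case (b) rather than a simple triple.
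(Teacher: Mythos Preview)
Your overall strategy—classifying open $\ai$-paths by their number of turns and bounding each class via a path-uniqueness argument—is sound and does yield the bound, but the execution contains a systematic orientation error that makes the argument, as written, incorrect.

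In an $\aSKS$ flow, an $\acd$ node has two edges \emph{above} and one below, an $\acu$ node has one above and two below, and an $\aid$ node is a \emph{source} with two edges below and none above. You have these reversed throughout. Consequently: the turn at an $\aid$ node is up-then-down (one arrives from below along one edge and leaves downward along the other), so one-turn open paths have both endpoints \emph{bottom}-pending, not top-pending; and for directed \emph{downward} paths between two edges, divergence occurs at an $\acu$ node and reconvergence at an $\acd$ node, producing exactly a contraction loop—with no $\aid$ exception, since an $\aid$ node is never reachable by a downward path. Your claimed ``$v\equiv\aid$'' exception to uniqueness is therefore spurious: downward paths between two given edges are genuinely unique. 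The extra data $(\nu,\alpha)$ in your 4-tuple is still needed, but for a different reason: not to disambiguate non-unique segments, but simply to specify the turn point, since a bottom-pending edge may lie below many distinct $\aid$ nodes. (A minor further point: your final tally is $|\phi|^2+|\phi|^3$, not $|\phi|^3$; this is easily absorbed since $\#(\aid,\phi)\leq|\phi|/2$.)

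For comparison, the paper argues differently. It assigns each edge $\epsilon$ a weight $w(\epsilon)$, the number of directed downward paths from $\epsilon$ to the bottom of $\phi$, and shows inductively that $w(\epsilon)\leq N(\awu,\epsilon)+N(\acu,\epsilon)+1\leq|\phi|$, using the no-contraction-loop hypothesis at each $\acu$ node to argue that the nodes reachable below its two lower edges are disjoint. The cubic bound then follows by summing weights (or products of weights at $\aid$ nodes) over the at most $|\phi|$ top-level edges and $\aid$ nodes. Your uniqueness lemma is in fact a sharper statement than the paper's weight bound, and once the orientation is corrected the resulting argument is arguably more direct.
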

\begin{proof}
%\vspace*{-0.3cm}
%First notice that the only cycles that may appear in a $\aSKS$-flow have a $\acu$-node at the top and a $\acd$-node at the bottom, since there are no $\aiu$-nodes and polarity constraints stop any cycle
%Define $\mathsf{c}\colon\quad\atomicflow{
%(2,6)*{\afacunw{}{}{}{}{}{}{}};
%(6,0)*{\afacdnw{}{}{}{}{}{}};
%(2,10)*{\afvj2{}{}{}{}{}{}};
%(6,-1.5)*{\afvj2{}{}{}{}{}{}};
%%(0,2)*{\afvj4{}{}{}{}{}{}};
%%(8,6)*{\afvj4{}{}{}{}{}{}};
%}\quad\to\quad\atomicflow{
%(0,1)*{\afacunw{}{}{}{}{}{}};
%(0,5)*{\afvj2{}{}{}{}{}{}};
%(0,6)*{\afacdnw{}{}{}{}{}{}};
%}$. Notice that $\mathsf{c}$ is terminating and confluent, albeit not sound, and if $\phi\underset{\mathsf{c}}{\rightarrow}\psi$ then $|\phi|=|\psi|$ and $\ulcorner\phi\urcorner\leq\ulcorner\psi\urcorner$. Furthermore, if $\phi$ has no contraction loops, then neither does $\psi$ and all $\acd$-nodes in $\psi$ are above all $\acu$-nodes. Therefore $\psi$ has bounded length, and so $\ulcorner\psi\urcorner=|\psi|^{O(1)}$ by Prop.~\ref{CocConSizePath}. Putting these together we have that $\ulcorner\phi\urcorner\leq\ulcorner\psi\urcorner=|\psi|^{O(1)} = |\phi|^{O(1)}$ as required.
%
%First let us consider a flow $\phi$ with no $\aid $-nodes. 
For an edge $\epsilon$ in $\phi$ consider the following two notions:
\begin{itemize}
\item  The \emph{weight} of $\epsilon$, denoted $\weight(\epsilon)$, is the number of directed paths from $\epsilon$ to the bottom of $\phi$, i.e.\ to a $\awu $ node or an edge with lower end pending.
\item For an atomic structural rule $\rho$ let $N(\rho,\epsilon)$ denote the number of $\rho$ nodes below $\epsilon$ that are connected to $\epsilon$ by a directed path.
\end{itemize}
We show that $\weight(\epsilon)\leq N(\awu ,\epsilon) + N(\acu,\epsilon) + 1$ by induction on the distance of $\epsilon$ from the bottom of $\phi$. The inequality is clear for the base cases when $\epsilon$ is directly connected to a $\awu $ node, $\ \aflower{\atomicflow{(0,0)*{\afawu{}{}{\epsilon}{}}}} \ $, and when $\epsilon$ has lower end pending, $\ \atomicflow{(0,0)*{\afvju{4}{\epsilon}{}}} \ $. We have two inductive steps:
\afnegspace
\begin{enumerate}
\item $\epsilon$ is an upper edge of a $\acd $ node,  $ \ \atomicflow{(0,0)*{\afacd{\epsilon}{}{}{}{\delta}{}}} \ $ . In this case we clearly have that $\weight(\epsilon) = \weight(\delta)$ and so the inequality follows by the inductive hypothesis.
\item $\epsilon$ is the upper edge of a $\acu $ node,  $\ \atomicflow{(0,0)*{\afacu{\gamma}{}{}{\delta}{\epsilon}{}}} \ $ . Observe that, since there are no contraction loops in $\phi$ and $\acd	 $ is the only node type with in-degree greater than $1$, any node below this $\acu $ node can be directed-path-connected to at most one of $\gamma$ or $\delta$. Consequently we have that $N(\awu,\epsilon) = N(\awu,\gamma)+N(\awu,\delta)$ and $N(\acu ,\epsilon) = N(\acu ,\delta)+N(\acu ,\gamma) + 1$. Therefore, 
\[
\begin{array}{ll}
w(\epsilon) &= w(\delta) + w(\gamma) \\
& \leq (N(\awu ,\delta) + N(\acu ,\delta) + 1) + (N(\awu ,\gamma) + N(\acu ,\gamma) + 1) \\
& \leq (N(\awu ,\delta)+N(\awu ,\gamma) + (N(\acu ,\delta)+N(\acu ,\gamma) + 1) +1 \\
& \leq N(\awu ,\epsilon) + N(\acu ,\epsilon) + 1
\end{array}
\]
\end{enumerate}
Finally, notice that $|\phi|\geq N(\awu ,\phi)+N(\acu ,\phi)+1$ and so we have shown that $w(\epsilon)\leq |\phi|$.

Clearly the number of open $\ai  $-paths going through an edge with upper end pending is bounded above by its weight, and so by $|\phi|$ by the bound above, while the number of open $\ai  $-paths going through any $\aid $ node is bounded above by the product of the weights of each of its edges, and so by $|\phi|^2$. In particular we have that the number of open $\ai  $-paths going through any edge at the top of a flow is bounded above by $|\phi|^2$, and there are at most $|\phi|$ many such edges, whence the bound follows.

\end{proof}

\begin{exa}
In fact the bound given above is optimal, up to multiplication by a constant. Consider the flow $\phi$ below, where there are $n$ $\aid $ nodes:
\[
\atomicflow{
(-4,-6)*{\afacunw{}{}{}{}};
(-8,-12)*{\afacdnw{}{}{}{}{}{}};
(-8,-16)*{\afacu{}{}{}{}{}{}};
(-12,-24)*{\afacdnw{}{}{}{}};
(-12,-28)*{\afacu{}{}{}{}{}{}};
(-20,-40)*{\afacd{}{}{}{}{}{}};
(-10,-5)*{\afvj{6}};
(-14,-10)*{\afvj{20}};
(-22,-15)*{\afvj{42}};
(4,-6)*{\afacunw{}{}{}{}};
(8,-12)*{\afacdnw{}{}{}{}{}{}};
(8,-16)*{\afacu{}{}{}{}{}{}};
(12,-24)*{\afacdnw{}{}{}{}};
(12,-28)*{\afacu{}{}{}{}{}{}};
(20,-40)*{\afacd{}{}{}{}{}{}};
(10,-5)*{\afvj{6}};
(14,-10)*{\afvj{20}};
(22,-15)*{\afvj{42}};
(0,0)*{\afaidex{}{}{}{}{}{}{4}{2}};
(0,2)*{\afaidex{}{}{}{}{}{}{10}{2}};
(0,4)*{\afaidex{}{}{}{}{}{}{14}{2}};
(0,8)*{\vlvdots};
(0,10)*{\afaidex{}{}{}{}{}{}{22}{2}};
(18,0)*{\dots};
(-18,0)*{\dots};
(-16,-32)*{\iddots};
(16,-32)*{\ddots};
}
\]
%When there are $n$ $\aid $-nodes we have $|\phi| = 8n + O(1)$, since each extra $\aid $-node adds a $\atomicflow{
%(0,4)*{\afaidex{}{}{}{}{}{}{10}{2}};
%(-8,-20)*{\afacd{}{}{}{}{}{}};
%(8,-20)*{\afacd{}{}{}{}{}{}};
%(-4,-12)*{\afacu{}{}{}{}{}{}};
%(4,-12)*{\afacu{}{}{}{}{}{}};
%(10,-8)*{\afvj{16}};
%(-10,-8)*{\afvj{16}};
%} $-configuration with only the upper edges of the two $\acu $-nodes connected to other.
%$\ulcorner\phi\urcorner = \Omega(|\phi|^3)$.
Clearly the flow has size linear in $n$ and notice that the weights (as defined in the above proof) of the topmost edges, starting from the left, are $1,2,\dots,n,n,\dots,1$ respectively. Consequently the number of open $\ai  $-paths going through the $\aid $ nodes, starting from the outside, is $1^2, 2^2, \dots,n^2$ respectively, and so taking the sum we obtain $\ulcorner\phi \urcorner=\Omega(n^3)$.

Notice also that $\phi$ has length linear in $n$ and width $2$, and so the upper bound on $\ulcorner\phi\urcorner$ given by Prop.~\ref{CocConSizePath} is exponential, considerably worse than the bound given in Lemma~\ref{NoConLoopsQuad}.
\end{exa}

\section{Truth tables and tree-like Gentzen systems}\label{DITreeTT}
$\KS$ polynomially simulates tree-like cut-free Gentzen systems since its rules are generalisations of Gentzen rules \cite{BrunTiu:01:A-Local-:ly} \cite{BrusGugl:07:On-the-P:fk}. In the other direction Bruscoli and Guglielmi have proved in \cite{BrusGugl:07:On-the-P:fk} that the converse does not hold, by way of the so-called `Statman tautologies'. We offer a new proof here, via truth tables, as an exercise prior to the main results.

Let $LK^-$ denote the cut-free sequent calculus, and tree-$LK^-$ denote the system of tree-like proofs in this calculus.\footnote{It does not matter too much which version of the sequent calculus we choose as complexity properties are typically preserved across common variations.}

%\begin{obs}\label{TTsize}
%A truth table proof for a formula $A$ has size $\Omega(|A|\cdot 2^{\# A})$, where $\# A$ is the number of distinct propositional variables in $A$.
%\end{obs}
\begin{prop}[D'Agostino]\label{NSimTT}
Tree-$LK^-$ cannot polynomially simulate truth tables.
\end{prop}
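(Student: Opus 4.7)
The plan is to recall the truth-table proof system, exhibit a suitable family of hard tautologies, and then transfer a known lower bound to them by a short padding argument. First, I would formalise truth tables as a proof system: given a tautology $A(x_1,\dots,x_n)$, a truth-table proof lists each of the $2^n$ assignments $\sigma\colon\{x_1,\dots,x_n\}\to\{\ttt,\fff\}$ together with the explicit evaluation of $A$ under $\sigma$ to $\ttt$. Such a proof has total size $O(2^n\cdot|A|)$, so truth tables are polynomial in $|A|$ precisely when $|A|=\Omega(2^n/\mathrm{poly}(n))$.

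Second, I would select a family of tautologies $A_n$ on $n$ variables with $|A_n| = \Theta(2^n)$ (introducing explicit padding if necessary) and known to require superpolynomial tree-$LK^-$ proofs. A natural choice, as in D'Agostino's original separation \cite{springerlink:10.1007/BF00156916}, is to take any family $T_n$ of $n$-variable tautologies shown to have no $\mathrm{poly}(n)$-size tree-like cut-free sequent proofs (e.g.\ via a subformula-property branching argument, or by translation of standard resolution/tableau lower bounds), and define $A_n \equiv T_n \vee \fff \vee \cdots \vee \fff$ with enough disjuncts of $\fff$ that $|A_n|=\Theta(2^n)$. Then truth-table proofs of $A_n$ have size $\mathrm{poly}(|A_n|)$ by construction.

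Third, I would argue that this padding does not lower the tree-$LK^-$ cost of $A_n$. By the subformula property of cut-free sequent proofs, any tree-$LK^-$ proof of $A_n$ must still contain a tree-$LK^-$ proof of $T_n$ as a subderivation, since the added $\fff$-disjuncts can only be eliminated by $\vee$-rules which reduce the goal back to $T_n$ (and $\fff$ has no left-introduction rule producing $\ttt$). Hence any tree-$LK^-$ proof of $A_n$ has size at least that of a tree-$LK^-$ proof of $T_n$, which is superpolynomial in $n$, and therefore superpolynomial in $|A_n|$ as well. Combined with the polynomial truth-table proofs of $A_n$, this gives the claimed non-simulation.

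The main obstacle is the lower bound on $T_n$ itself: establishing that a concrete $n$-variable tautology has no polynomial tree-$LK^-$ proof requires a genuine combinatorial argument (typically a branching/semantic-tree argument exploiting subformula-property), which is why the result is normally cited rather than reproved. The padding-and-transfer step is straightforward book-keeping in comparison, but one must be careful that the chosen $T_n$ admits padding by $\fff$ without breaking the lower-bound argument — something guaranteed whenever the lower bound is established via the structure of proofs of $T_n$ alone rather than via a specific outer context.
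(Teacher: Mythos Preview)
Your padding reduction has a quantitative gap that breaks the argument. You assume a family $T_n$ on $n$ variables with tree-$LK^-$ lower bound merely ``superpolynomial in $n$'', pad to $|A_n|=\Theta(2^n)$, and then assert that the lower bound is ``therefore superpolynomial in $|A_n|$ as well''. That inference is false: a bound like $n^{\log n}$ or $2^{\sqrt n}$ or even $2^{cn}$ is superpolynomial in $n$ but \emph{polynomial} (or subpolynomial) in $2^n$. For the padding trick to yield a separation you would need the tree-$LK^-$ lower bound on $T_n$ to be $2^{\omega(n)}$, i.e.\ genuinely superexponential in the number of variables. The ``standard resolution/tableau lower bounds'' you invoke are typically $2^{\Theta(n)}$, which after padding become polynomial in $|A_n|$ and give no separation at all.

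By contrast, the paper simply cites D'Agostino's original argument, which uses the specific family $\bigvee_{\mathcal A}\gamma_{\mathcal A}$ (the disjunction of all complete assignments on $k$ variables). This formula already has size $k\cdot 2^k$, so no padding is needed, and D'Agostino shows directly that any tree-$LK^-$ proof has at least $k!$ branches. Since $k!=2^{\Theta(k\log k)}$ is superexponential in $k$, this \emph{is} superpolynomial (in fact quasipolynomial) in the formula size $k\cdot 2^k$. In other words, D'Agostino's contribution is precisely to exhibit a family with a $2^{\omega(k)}$ lower bound --- exactly the strength your padding approach would need to assume as a black box. Your reduction therefore does not reduce the problem to anything easier; once the quantitative requirement is stated correctly, you are forced back to D'Agostino's family (or something equivalent) anyway.
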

\begin{proof}
See \cite{springerlink:10.1007/BF00156916}.
\end{proof}

To expand slightly on the above proposition, truth tables are efficient when there are exponentially many occurrences of each atom, and some such tautologies are hard for tree-$LK^-$. One such example, used by D'Agostino, is simply the disjunction of every assignment on $k$ propositional variables, what we call $\bigvee_\mathcal{A} \gamma_\mathcal{A}$ below. Such a formula has size $k\cdot 2^k$, although any tree-like cut-free sequent proof must contain at least $k!$ branches, while a truth table contains $2^k$ rows and $k\cdot 2^k$ columns.

%\begin{proof}
%A truth table for $A$ is a matrix whose height is the number of assignments on $A$ and whose width is the number of propositional variables in $A$.
%\end{proof}

%\begin{prop}\label{dist}
%The distributivity rule $\vlinf{}{}{\vls[(A.B).(A.C)]}{\vls[A.(B.C)]}$ is derivable in $\{\acu,\swi,\med\}$.
%\end{prop}

\begin{lem}
$\aSKS$ polynomially simulates truth tables.
\end{lem}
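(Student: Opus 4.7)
The plan is to argue by induction on the number of variables $k$ that any tautology $A$ on $a_1, \dots, a_k$ admits an $\aSKS$ proof of size $O(|A|^2 \cdot 2^k)$, polynomial in the truth-table size. The base case $k=0$ is trivial. For the inductive step, set $A^+ := A[a_1{=}\ttt]$ and $A^- := A[a_1{=}\fff]$; both are tautologies on $k-1$ variables, so the inductive hypothesis supplies $\aSKS$ proofs $\Phi^+$ of $A^+$ and $\Phi^-$ of $A^-$. Starting from $\ttt \equiv \ttt \vlan \ttt \vlan \ttt$ by unit equations, I apply $\gidown$ to one copy and $\Phi^+,\Phi^-$ to the others to reach $(a_1 \vlor \bar a_1) \vlan A^+ \vlan A^-$; two applications of $\swi$, with commutativity to position the literals correctly, then yield the \emph{Shannon form} $F_A := (a_1 \vlan A^+) \vlor (\bar a_1 \vlan A^-)$.

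The crux is a polynomial-size $\aSKS$ derivation from $F_A$ to $A$, which I build by structural induction on $A$. For constants or literals a direct derivation of size $O(|A|)$ suffices, combining $\gwu$ (to eliminate the orthogonal polarity), $\gcdown$, $\med$ and the unit equations. For $A = B \vlan C$, apply $\gcu$ to duplicate $a_1$ and $\bar a_1$ inside $F_A$, regroup into $((a_1 \vlan B^+) \vlan (a_1 \vlan C^+)) \vlor ((\bar a_1 \vlan B^-) \vlan (\bar a_1 \vlan C^-))$, apply $\med$ to obtain $F_B \vlan F_C$, and recurse on each factor. For $A = B \vlor C$, use $\gcu$ followed by $\swi$ to distribute $a_1$ across $B^+ \vlor C^+$ (dually for $\bar a_1$), regroup into $F_B \vlor F_C$, and recurse on each disjunct. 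Each case incurs overhead $O(|A|)$, so the reverse-Shannon derivation has total size $O(|A|^2)$.

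Assembling these pieces gives the recurrence $S(k) \leq 2 S(k-1) + O(|A|^2)$, solving to $S(k) = O(|A|^2 \cdot 2^k)$, which is polynomial in the truth-table size. The main obstacle will be the reverse-Shannon step, and in particular the $B \vlor C$ case where a single atom $a_1$ must be distributed over an internal disjunction; this is precisely the role played by cocontraction $\gcu$, since without it such a distribution would incur exponential blowup. Note also that the resulting proof has an atomic flow whose $\gcu$ nodes appear only near the top, foreshadowing the normalisation step (via Thm.~\ref{NormaSKStoKS}) needed later to obtain the corresponding $\KS$ simulation.
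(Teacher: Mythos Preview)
Your argument is correct, and it takes a genuinely different route from the paper's. The paper proceeds by \emph{explicitly enumerating assignments}: it first proves the big disjunction $\bigvee_{\mathcal{A}} \gamma_{\mathcal{A}}$ of all complete assignments (by induction on the number of variables, using $\gcu$), then for each assignment $\mathcal{A}$ builds a derivation $\Phi_{\mathcal{A}}(\tau)$ from $\gamma_{\mathcal{A}}$ to $\tau$ using only weakening (selecting a satisfied disjunct at each $\vee$), and finally contracts the $2^k$ resulting copies of $\tau$. Your construction instead \emph{recursively Shannon-expands}: at each level you reduce proving $A$ to proving $A^+$ and $A^-$ on one fewer variable, glued together by a polynomial-size ``reverse-Shannon'' derivation from $(a_1\vlan A^+)\vlor(\bar a_1\vlan A^-)$ to $A$ built by structural induction on $A$ using $\gcu$, $\med$, $\swi$, $\gcdown$ and $\gwu$.

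Both routes give the same $O(|A|^2\cdot 2^k)$ bound and, importantly for the subsequent $\KS$ simulation, both yield flows of bounded length: in the paper's proof all $\acu$ steps (in $\Psi$ and in the $\{\awu,\acu\}$ derivations) occur above all $\acd$ steps (the final $\gcdown$); in yours, each atom $a_i$ is cocontracted only at recursion level $i$ and contracted only at levels $j<i$, so again $\acu$ sits above $\acd$ on every path. The paper's argument makes the correspondence with truth-table rows more explicit (one $\Phi_{\mathcal{A}}$ per row), while your reverse-Shannon lemma is an independently useful gadget. One small imprecision: your phrase ``$\gcu$ nodes appear only near the top'' is not literally true (they occur at every recursion level), but the property you actually need---bounded flow length---does hold, as argued above.
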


\begin{proof}
%Let $\tau$ be a tautology, and let $\gamma_\mathcal{A} $ denote the conjunction of literals true in a partial assignment $\mathcal{A}$, defined on just those variables appearing in $\tau$. By Prop.~\ref{dist} there is a proof of $\bigvee_\mathcal{A} \gamma_\mathcal{A}$ in $\{\aid, \acu, \swi, \med\}$.
Let $\tau$ be a tautology. For each partial assignment $\mathcal{A}$, defined on just those variables appearing in $\tau$, and each formula $A$ satisfied by $\mathcal{A}$ construct a derivation $\Phi_{\mathcal{A}}(A)$ by structural induction on $A$ as follows:
\[
\Phi_{\mathcal{A}}(a)\equiv a
\quad,\quad
\Phi_{\mathcal{A}}(B\vlan C) \equiv \Phi_{\mathcal{A}}(B)\vlan\Phi_{\mathcal{A}}(C)
\quad,\quad
\Phi_{\mathcal{A}}(B\vlor C) \equiv \vlinf{=}{}{\vls[B . \vlinf{\gwd}{}{C}{\fff} ]}{\Phi_{\mathcal{A}}(B)}
\]
where, in the last case, when $A$ is a disjunction, the disjunct $B$ was chosen such that $B$ is satisfied by $\mathcal{A}$. It is clear that each $\Phi_{\mathcal{A}}(\tau)$ has conclusion $\tau$ and premiss a conjunction of literals; moreover this conjunction of literals is satisfied by $\mathcal{A}$. 

Let $\gamma_\mathcal{A}$ be the conjunction of all literals satisfied by $\mathcal{A}$, so that each variable appears exactly once. Then we can easily construct derivations $\vlderivation{
\vldd{}{\{\awu,\acu\}}{\prem(\Phi_{\mathcal{A}}(\tau))}{\vlhy{\gamma_{\mathcal{A}}}}}$.

Now construct a proof $\Psi$ of $\bigvee_{\mathcal{A}} \gamma_{\mathcal{A}}$ in $\{\aid ,\acu,\swi,\med\}$ by induction on the number of distinct variables, as shown below:
\[
\text{\emph{Base case}:}\quad \vlinf{\aid}{}{\vls[a.\bar{a}]}{\ttt} 
\quad,\quad
\text{\emph{Inductive step}:}\quad
\vlinf{2\cdot\swi}{}{\bigvee_{\mathcal{A}}{(a\vlan\gamma_{\mathcal{A}})} \vlor \bigvee_{\mathcal{A}}{(\bar{a} \vlan\gamma_{\mathcal{A}})}}{
\vlproofd{\Psi}{\{\aid ,\acu,\swi,\med\}}{\bigvee_{\mathcal{A}}{\left(\vlinf{=}{}{\vls(\vlinf{\aid}{}{\vls[a.\bar{a}]}{\ttt}.\vlinf{\gcu}{}{\vls(\gamma_{\mathcal{A}}.\gamma_{\mathcal{A}})}{\gamma_{\mathcal{A}}})}{\gamma_{\mathcal{A}}}\right)}}
}
\]
Finally we put these together and apply contractions to obtain a $\aSKS$ proof of $\tau$:
\[
\vlderivation{
\vldd{}{\{\gcdown\}}{\tau}{
\vlpd{\Psi}{\{\aid ,\acu,\swi,\med\}}{\bigvee_{\mathcal{A}}{\left(\vlderivation{
\vldd{\Phi_{\mathcal{A}}}{\{\gwd\}}{\tau}{
\vldd{}{\{\awu,\acu\}}{\prem(\Phi_{\mathcal{A}} (\tau))}{\vlhy{\gamma_{\mathcal{A}}}}
}
}\right)}
}
}
}
\]
It is clear that the derivations inside the large parentheses have size polynomial in $|\tau|$, which is the number of columns in a truth table, and the number of these derivations appearing in disjunction is just the number of assignments, which is the number of rows in a truth table.
\end{proof}

\begin{thm}\label{SimTT}
$\KS$ polynomially simulates truth tables.
\end{thm}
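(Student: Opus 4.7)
The plan is to apply the preceding lemma to obtain an $\aSKS$ proof $\Phi$ of $\tau$ whose size is polynomial in the size of the truth table, and then invoke Thm.~\ref{NormaSKStoKS} to transform $\Phi$ into a $\KS$ proof of $\tau$ in polynomial time. Given Thm.~\ref{NormaSKStoKS}, the only remaining obligation is a polynomial bound on $\ulcorner\flow(\Phi)\urcorner$ in terms of $|\Phi|$; I would obtain such a bound by showing that $\flow(\Phi)$ has constant length and then applying Prop.~\ref{CocConSizePath}.

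For the length bound, one inspects the proof built in the previous lemma, which reading top-to-bottom consists of: (i) the proof $\Psi$ of $\bigvee_\mathcal{A}\gamma_\mathcal{A}$ in $\{\aid,\acu,\swi,\med\}$, contributing $\aid$ and $\acu$ nodes; (ii) the derivations from each $\gamma_\mathcal{A}$ to $\prem(\Phi_\mathcal{A}(\tau))$ in $\{\awu,\acu\}$, contributing further $\acu$ nodes alongside $\awu$ nodes; (iii) the derivations $\Phi_\mathcal{A}$ in $\{\gwd\}$, contributing only $\awd$ nodes; and (iv) the final $\gcdown$ steps, contributing $\acd$ nodes. All $\acu$ nodes thus lie in strata (i)--(ii) and all $\acd$ nodes in stratum (iv), and so $\acu$ nodes appear strictly above $\acd$ nodes along any connected path in $\flow(\Phi)$. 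Since $\awd$ and $\awu$ each have only one adjacent edge, neither can occur on an open $\ai$-path, which must therefore pass only through $\acd$, $\acu$ and $\aid$ nodes; combined with the layering just observed, this forces the sequence of node types on any open $\ai$-path to read $\acd^*,\acu^*,\aid,\acu^*,\acd^*$, with at most four changes of type. Hence $l=O(1)$ in the sense of Prop.~\ref{CocConSizePath}, and using the trivial bounds $b,w\leq|\Phi|$ of Rem.~\ref{BreadthWidthSize} we obtain $\ulcorner\flow(\Phi)\urcorner = |\Phi|^{O(1)}$.

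The conclusion then follows from Thm.~\ref{NormaSKStoKS}, which furnishes a $\KS$ proof of $\tau$ in time polynomial in $|\Phi|+\ulcorner\flow(\Phi)\urcorner$, and hence polynomial in the truth-table size. The main step requiring care is the layering observation of the previous paragraph; it is a direct inspection of the construction, but relies on the sequential composition of strata (i)--(iv) together with the fact that $\swi$ and $\med$ contribute no flow nodes. I would not expect to use Lemma~\ref{NoConLoopsQuad} instead: the atomic contractions at stratum (iv) together with the cocontractions in (i)--(ii) can easily create contraction loops, so the length-based estimate of Prop.~\ref{CocConSizePath} is the natural tool here.
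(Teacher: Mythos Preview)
Your proposal is correct and follows exactly the paper's approach: the paper's proof simply observes that in the simulation all $\acu$ steps occur above all $\acd$ steps, so the flow has bounded length, and then invokes Prop.~\ref{CocConSizePath} together with the normalisation machinery (the paper cites Cor.~\ref{FlowTransProofs} and Lemma~\ref{SizePath} directly, which are precisely what Thm.~\ref{NormaSKStoKS} packages). Your strata (i)--(iv) analysis and the resulting $\acd^*,\acu^*,\aid,\acu^*,\acd^*$ pattern are a faithful and more explicit unpacking of the paper's one-line layering remark.
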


\begin{proof}
Notice that, in the above simulation, all $\acu$ steps are above all $\acd$ steps, and so the associated flow will have bounded length. The result follows by Cor.~\ref{FlowTransProofs}, Lemma~\ref{SizePath} and Prop.~\ref{CocConSizePath}.
\end{proof}

\begin{cor}
Tree-$LK^-$ cannot polynomially simulate $\KS$.
\end{cor}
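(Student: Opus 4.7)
The plan is simply to combine the two preceding results by transitivity of polynomial simulation. First I would observe that ``polynomially simulates'' is transitive: if system $\mathcal{S}_1$ polynomially simulates $\mathcal{S}_2$ and $\mathcal{S}_2$ polynomially simulates $\mathcal{S}_3$, then $\mathcal{S}_1$ polynomially simulates $\mathcal{S}_3$, since the composition of two polynomial-time transformations is polynomial-time.

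Then I would argue by contradiction. Suppose tree-$LK^-$ polynomially simulates $\KS$. By Thm.~\ref{SimTT}, $\KS$ polynomially simulates truth tables, so composing these simulations yields that tree-$LK^-$ polynomially simulates truth tables. But this directly contradicts Prop.~\ref{NSimTT} (D'Agostino's result). Hence tree-$LK^-$ cannot polynomially simulate $\KS$.

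There is no real obstacle here; the content of the corollary lies entirely in Thm.~\ref{SimTT} and Prop.~\ref{NSimTT}. If desired, one can make the separation more concrete by exhibiting the witnessing family of tautologies: the formulae $\bigvee_{\mathcal{A}}\gamma_{\mathcal{A}}$ (the disjunction over all assignments $\mathcal{A}$ of the conjunction of satisfied literals) used in the proof of Prop.~\ref{NSimTT} admit polynomial-size truth-table proofs, hence polynomial-size $\KS$ proofs by Thm.~\ref{SimTT}, while requiring superpolynomial-size tree-$LK^-$ proofs by D'Agostino's argument. This makes the separation superpolynomial, which is the strongest statement one obtains from the ingredients above.
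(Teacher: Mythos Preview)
Your proposal is correct and matches the paper's own proof, which simply says the result is immediate from Prop.~\ref{NSimTT} and Thm.~\ref{SimTT}. Your additional remarks on transitivity and the explicit witnessing family $\bigvee_{\mathcal{A}}\gamma_{\mathcal{A}}$ are fine elaborations but not required.
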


\begin{proof}
Immediate from Prop.~\ref{NSimTT} and Thm.~\ref{SimTT}.
\end{proof}
\begin{rem}
It should be noted that D'Agostino's separation is only quasipolynomial,\footnote{A quasipolynomial is a function of size $n^{\log^{\Theta(1)} n}$. By taking logarithms and using the integral bound it is not difficult to see that $k!$ grows quasipolynomially in $2^k$.} and so our separation is also only quasipolynomial, while the proof using the Statman tautologies in \cite{BrusGugl:07:On-the-P:fk} yields an exponential separation. Nonetheless, an exponential separation follows from the results in the next section.
\end{rem}

\section{Separations via variants of the pigeonhole principle}\label{StrongViaFPHP}
Je{\v r}\'abek has shown that $\aSKS$ has polynomial-size proofs of the functional and onto variants of the pigeonhole principle \cite{Jera::On-the-C:kx}. We show that these proofs reduce under $\GNorm$ to $\KS$ proofs with only polynomial increase in size, and so cut-free sequent calculi, Resolution and bounded-depth Frege systems are unable to polynomially simulate $\KS$. 

%We also give proofs for the Clique-Coclique principle, yielding a separation from Cutting Planes with small coefficients.

%Conversely we give a simulation of Resolution, and some extensions, in $\KS$.

%In this subsection we extend the results of \ref{SepKSTree} to provide a new separation result: $\KSg$ can polynomially simulate dag-like cut-free Gentzen systems but not vice-versa. We make use of known results regarding the relative efficiency of Gentzen systems and the resolution method, but we will not formally define either system since they are only referred to indirectly.

%\subsection{Proofs of the Functional and Onto Pigeonhole Principles}

The pigeonhole principle states that, if $n$ pigeons sit in $n-1$ holes, some hole must contain more than one pigeon. In its unrestricted formulation, the mapping from pigeons to holes can be many-many, while in the functional variant it must be many-one, i.e.\ a function, and the onto variant insists that each hole must be occupied. It is clear that the two variants are weaker than the unrestricted version, and the variant containing both criteria, the onto functional pigeonhole principle, is weaker still. We will see this more clearly in the following definition.
%\begin{prop}\label{ResSimGentz}
%Dag-like resolution polynomially simulates dag-like cut-free Gentzen systems over DNF formulae.
%\end{prop}

%\begin{proof}
%See e.g. \cite{Urquhart95thecomplexity}.
%\end{proof}

The propositional encodings of pigeonhole principle variants below are most easily understood by interpreting the atoms $a_{ij}$ as ``pigeon $i$ sits in hole $j$''. Recall that it does not matter how large disjunctions and conjunctions are bracketed, since any valid bracketing can be easily reduced to any other by the $=$ rule.

\newcommand{\PHP}{\mathsf{PHP}}
\newcommand{\FPHP}{\mathsf{F}\PHP}
\newcommand{\OPHP}{\mathsf{O}\PHP}
\newcommand{\OFPHP}{\mathsf{OF}\PHP}
\newcommand{\LFPHP}{\mathsf{L}\FPHP}
\newcommand{\RFPHP}{\mathsf{R}\FPHP}
\begin{defi}[Pigeonhole principles]
We define the following formulae,
\[
\PHP_{n} \equiv \bigvee\limits^{n}_{i=0}{\bigwedge\limits^{n}_{j=1}{\bar{a}_{ij}}} \vlor \bigvee\limits^{n}_{i=0} \bigvee\limits_{ j<j' } (a_{ij} \vlan a_{ij'})
,\quad
\mathsf{F}_n \equiv  \bigvee\limits^{n}_{j=1} \bigvee\limits_{i<i'} {(a_{ij}\vlan a_{i'j})}
,\quad
\mathsf{O}_n  \equiv  \bigvee\limits^n_{j=1}\bigwedge\limits^n_{i=0} \bar{a}_{ij}
\]
and denote by $\FPHP_n$, $\OPHP_n$ and $\OFPHP_n$ the formulae obtained by putting in disjunction the associated formulae, i.e.\ $\FPHP_n \equiv \mathsf{F}_n \vlor \PHP_n$, $\OPHP_n \equiv \mathsf{O}_n \vlor \PHP_n$ and $\OFPHP_n \equiv \mathsf{O}_n \vlor \mathsf{F}_n \vlor \PHP_n$.
\end{defi}

We can see in the above definition that any variant can be obtained from a stronger variant, i.e.\ one with a subset of disjuncts, by a simple application of generic weakening $\gwd$. Consequently upper bounds on the size of proofs of one variant yield upper bounds for all weaker variants, and lower bounds vice-versa.

%The following proposition was originally proved by Haken in \cite{Haken1985297} for the full pigeonhole principle, but the result for the functional pigeonhole principle follows as an easy corollary.

The following result was proved by Beame, Impagliazzo and Pitassi, and independently by Kraj{\'\i}{\v c}ek, Pudl{\'a}k and Woods.

\begin{thm}\label{BdFNfPHP}
Bounded-depth Frege systems have only exponential-size proofs of $\OFPHP_n$.
\end{thm}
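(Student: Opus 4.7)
The plan is to adapt the random restriction method, developed by H\aa stad for $\mathsf{AC}^0$ circuit lower bounds, to the setting of constant-depth Frege proofs of $\OFPHP_n$. The key idea is that under a carefully chosen random restriction, the lines of a short proof collapse to very simple objects (shallow decision trees), while the residual instance of the negation of $\OFPHP$ remains an $\OFPHP$-style tautology of only slightly smaller size. One then derives a contradiction by showing that the ``collapsed'' proof is forced to refute a nontrivial $\OFPHP$, which is impossible at such small depth.

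First I would fix the appropriate notion of random restriction. Because the $a_{ij}$ variables are not independent under the intended semantics, Boolean product restrictions are replaced by random partial matchings $\rho$ between the pigeon set $\{0,\dots,n\}$ and the hole set $\{1,\dots,n\}$: whenever $(i,j)\in\rho$ we assign $a_{ij}:=\top$, $a_{ik}:=\bot$ for $k\ne j$, and $a_{i'j}:=\bot$ for $i'\ne i$; other variables remain live, constrained by being consistent with some total matching. A restriction leaving $m$ pigeons and holes unmatched reduces $\OFPHP_n$ to $\OFPHP_m$.

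The main technical ingredient is a \emph{matching switching lemma}: with high probability over $\rho$, every $k$-DNF in the $a_{ij}$ becomes equivalent to a matching decision tree of height at most $t$, where queries are of the form ``to which hole is pigeon $i$ matched?'' (or dually). This is the PHP-adapted version of H\aa stad's switching lemma established in \cite{PitBeaImp:93:ExpLBPHP} and \cite{KraPudWood:95:ExpPHPbdF}, and is proved by an encoding argument that replaces each ``bad'' restriction (one producing a deep tree) with a shorter combinatorial datum, bounding the bad probability by $(O(km/n))^{t}$ or similar. Applying this lemma $d$ times, once per depth level, collapses any depth-$d$ Frege proof of size $S$ into a ``shallow'' refutation in which every line is represented by a small-height matching decision tree, provided $\log S \ll n^{\varepsilon_d}$ for an exponent $\varepsilon_d>0$ depending only on the depth $d$.

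Finally, I would close the argument by observing that no such shallow refutation of $\OFPHP_m$ can exist, essentially because a Haken-style bottleneck counting on the partial matchings consistent with each line gives an exponential lower bound at the base level. The main obstacle is entirely in the matching switching lemma itself: the restriction distribution is not a product distribution, so standard H\aa stad-style encoding requires substantial adaptation to track the matching structure, and one must be careful to ensure that the switching lemma is strong enough to survive $d$ iterations while leaving $m = n^{\Omega(1)}$ pigeons unmatched. Once the switching lemma is in place with the right quantitative parameters, the iteration and the base-case refutation lower bound are, by comparison, routine.
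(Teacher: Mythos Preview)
Your proposal is correct in outline: the matching-restriction switching lemma followed by iterated depth reduction and a base-case counting argument is exactly the method of \cite{PitBeaImp:93:ExpLBPHP} and \cite{KraPudWood:95:ExpPHPbdF}. Note, however, that the paper does not prove this theorem at all; its proof consists solely of the line ``See \cite{KraPudWood:95:ExpPHPbdF}, \cite{PitBeaImp:93:ExpLBPHP}.'' So you have written substantially more than the paper does, and what you have written is an accurate high-level summary of the cited arguments rather than a different route.
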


\begin{proof}
See \cite{KraPudWood:95:ExpPHPbdF}, \cite{PitBeaImp:93:ExpLBPHP}.
\end{proof}

\begin{cor}\label{GentzNfPHP}
Bounded-depth Frege systems, Resolution and cut-free sequent calculi have only exponential-size proofs of all variants of the pigeonhole principle.
\end{cor}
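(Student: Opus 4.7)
The plan is to derive the corollary from Theorem~\ref{BdFNfPHP} by combining two standard observations: that $\OFPHP_n$ is the weakest variant of the pigeonhole principle, and that bounded-depth Frege polynomially simulates both Resolution and cut-free sequent calculi.

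First I would make precise the remark preceding Thm.~\ref{BdFNfPHP}. Since $\OFPHP_n \equiv \mathsf{O}_n \vlor \mathsf{F}_n \vlor \PHP_n$ contains the disjuncts of every other variant as subformulae, each of $\PHP_n$, $\FPHP_n$, $\OPHP_n$ can be turned into $\OFPHP_n$ by a bounded number of applications of generic weakening $\gwd$, yielding the missing disjuncts from $\fff$. This transformation is linear in $|\OFPHP_n|$ in any of the systems considered (they all admit weakening in one form or another). Consequently a proof of $\PHP_n$, $\FPHP_n$ or $\OPHP_n$ of size $s$ yields a proof of $\OFPHP_n$ of size $s + O(|\OFPHP_n|)$. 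Contrapositively, the exponential lower bound of Thm.~\ref{BdFNfPHP} on $\OFPHP_n$ in bounded-depth Frege transfers to every variant in bounded-depth Frege.

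Second, I would invoke the well-known polynomial simulations: bounded-depth Frege polynomially simulates Resolution (since Resolution proofs are essentially depth-$1$ Frege refutations of the negated tautology), and likewise simulates cut-free sequent calculi (cut-free sequent proofs translate directly into constant-depth Frege proofs, as the cut-free fragment has bounded formula depth throughout). Thus any polynomial-size proof of a variant of the pigeonhole principle in Resolution or in a cut-free sequent calculus would yield a polynomial-size bounded-depth Frege proof of that variant, contradicting the lower bound established in the previous step.

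The routine part is the weakening reduction between variants; the only subtle point is matching up the simulation direction correctly, which is why I stated the reductions explicitly above. No new ideas are needed beyond citing Thm.~\ref{BdFNfPHP} together with the standard simulations.
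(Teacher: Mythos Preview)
Your proposal is correct and follows exactly the paper's two-step argument: reduce any variant to $\OFPHP_n$ via weakening, and observe that Resolution and cut-free sequent calculi are (polynomially simulated by) bounded-depth Frege, so the lower bound of Thm.~\ref{BdFNfPHP} transfers. One small sharpening: your justification for cut-free sequent calculi should appeal to the subformula property together with the fact that the $\PHP$ variants themselves have constant depth, rather than asserting that the cut-free fragment has bounded formula depth in general.
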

\begin{proof}
All the systems are just special cases of bounded-depth Frege, and a proof of any variant can be extended to one of $\OFPHP_n$ by an application of generic weakening $\gwd$.
\end{proof}

On the other hand we have the following:
\begin{thm}[Buss]\label{PolyPHPFrege}
There are polynomial-size Frege proofs of $\PHP_n$, and so all variants of the pigeonhole principle.
\end{thm}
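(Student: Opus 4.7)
The plan is to follow Buss's original inductive construction, which reduces the instance of the pigeonhole principle on $n+1$ pigeons and $n$ holes to one on $n$ pigeons and $n-1$ holes by re-routing. For each $k$ with $1 \le k \le n$, I would introduce auxiliary atoms $a^{k}_{ij}$ (for $0 \le i \le k$ and $1 \le j \le k$) defined by downward recursion,
\[
a^{n}_{ij} \;\equiv\; a_{ij}, \qquad a^{k-1}_{ij} \;\equiv\; a^{k}_{ij} \vlor (a^{k}_{ik} \vlan a^{k}_{kj}) \quad (0 \le i < k,\ 1 \le j < k),
\]
so that $a^{k-1}_{ij}$ holds precisely when, at level $k$, either pigeon $i$ is already in hole $j$, or pigeon $k$ sits in hole $j$ while pigeon $i$ sits in hole $k$, in which case the two may be swapped to free pigeon/hole $k$ for elimination.

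The core inductive step is to give a Frege derivation, of size polynomial in $n$ and uniform in $k$, of $\PHP_{k-1}(a^{k-1}) \vdash \PHP_{k}(a^{k})$, i.e.\ to extract a level-$k$ PHP witness from an assumed level-$(k-1)$ one. I would do this by case analysis on which disjunct of $\PHP_{k-1}(a^{k-1})$ is the witness: in each case, unfolding the definition of the relevant $a^{k-1}_{ij}$ atoms reduces matters to either a direct level-$k$ witness or one obtained via the detour through pigeon $k$ and hole $k$. Chaining these $n-1$ derivations for $k = 2, \dots, n$ together with the base case $\vdash \PHP_{1}(a^{1})$, a constant-size propositional tautology, yields a Frege proof of $\PHP_{n} \equiv \PHP_{n}(a^{n})$ of size polynomial in $n$. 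The remaining variants $\FPHP_{n}$, $\OPHP_{n}$ and $\OFPHP_{n}$ are each obtained from $\PHP_{n}$ by adding further disjuncts, so a single application of generic weakening $\gwd$ lifts a proof of $\PHP_{n}$ to a proof of any of them at linear cost.

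The main obstacle is controlling the size of the $O(n)$ inductive steps uniformly in $k$. Naively substituting each $a^{k-1}_{ij}$ down to the base atoms $a_{ij}$ would produce formulae of exponential size, so one must rely on the abbreviation mechanism implicit in Frege systems, treating the auxiliary atoms as constant-cost names for their defining formulae rather than expanding them. With abbreviations in place, each inductive step has size bounded by a fixed polynomial in $n$ independent of $k$, and composing the $n-1$ of them yields a total polynomial bound. The detailed bookkeeping is carried out in Buss's original paper; I would appeal directly to it here, as the argument sits entirely within conventional Frege and is independent of the deep inference machinery of this paper.
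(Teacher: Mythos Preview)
The paper's own ``proof'' is simply a citation to Buss's original article, so there is little to compare structurally. However, your sketch does not accurately reflect Buss's Frege argument and contains a genuine gap.

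The inductive re-routing construction you describe, with $a^{k-1}_{ij} \equiv a^{k}_{ij} \vlor (a^{k}_{ik} \vlan a^{k}_{kj})$, is the standard \emph{Extended Frege} proof of $\PHP_n$. You correctly observe that unfolding the $a^{k}_{ij}$ down to the base atoms blows up exponentially, but your proposed fix---``rely on the abbreviation mechanism implicit in Frege systems''---is precisely what Frege systems do \emph{not} have. Introducing fresh propositional names for compound formulae and treating them as constant-cost atoms is exactly the extension axiom that distinguishes Extended Frege from Frege. In a plain Frege proof every line is a formula written out in full over the original atoms, so the exponential blowup you identified is real and your argument, as stated, only yields polynomial-size Extended Frege proofs.

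Buss's actual Frege proof proceeds by a different route: he defines, via balanced binary (carry-save) addition circuits, polynomial-size \emph{formulae} $\mathrm{Count}_k$ expressing that at least $k$ of a list of atoms hold, proves their basic arithmetic properties inside Frege, and derives a contradiction from the assumption that $n+1$ pigeons occupy $n$ holes by counting. The polynomial bound comes from the logarithmic-depth structure of the addition formulae, not from any abbreviation device. If you wish to expand the citation rather than simply defer to it, that is the argument to sketch; the final sentence, deducing the weaker variants by a single $\gwd$, is fine.
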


\begin{proof}
See \cite{Buss:87:Polynomi:fk}.
\end{proof}

%Je{\v r}\'abek has shown that there are polynomial-size proofs of the functional pigeonhole principle in $\aSKSg$. We show that there is only non-essential use of cocontraction in these proofs, from the point of proof complexity, and so they can be eliminated with only polynomial increase in proof size.

%\begin{obs}\label{CutSimCocCow}
%\afnegspace$\aflower{\aflower{\atomicflow{\afacu{}{}{}{}{}{}}}}\qquad\to\quad\atomicflow{
%(8,16)*{\afaidex{}{}{}{}{}{}62};
%(8,12)*{\afaidnw{}{}{}{}{}{}};
%(12,8)*{\afacdnw{}{}{}{}{}{}};
%(14,4)*{\afaiu{}{}{}{}{}{}};
%(16,10)*{\afvj4{}{}{}{}{}{}};
%(6,9)*{\afvj6{}{}{}{}{}{}};
%(2,9)*{\afvj6{}{}{}{}{}{}};
%}\quad,\quad \aflower{\aflower{\aflower{\atomicflow{\afawu{}{}{}{}{}{}}}}}\qquad\to\quad\atomicflow{
%(4,0)*{\afaiu{}{}{}{}{}{}};
%(2,8)*{\afawd{}{}{}{}{}{}};
%(6,6)*{\afvj4{}{}{}{}{}{}};
%}$ are sound.
%\end{obs}
%\afnegspace
%We write $a\in A$ for an atom $a$ occurring in a formula $A$.

From here, polynomial-size $\SKS$ proofs of $\PHP_n$ follow from the following observation:
\begin{prop}[Bruscoli and Guglielmi]\label{SKSFrege}
$\SKS$ is polynomially equivalent to Frege systems.
\end{prop}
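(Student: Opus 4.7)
The plan is to establish polynomial equivalence by exhibiting simulations in both directions between $\SKS$ and any standard Frege system. Since both are Cook--Reckhow proof systems for propositional logic (with negation normal form vs.\ unrestricted formulae easily inter-translated in polynomial time), polynomial equivalence reduces to mutual polynomial simulation.

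First I would show that $\SKS$ polynomially simulates Frege, by induction on the Frege proof. Each Frege axiom scheme is a tautology of bounded size and therefore admits a polynomial-size $\SKS$ proof; one may either give a direct hand-crafted derivation for each of the finitely many axiom schemes, or appeal to the truth-table simulation developed in Sect.~\ref{DITreeTT}. Modus ponens — from $A$ and $\bar A \vlor B$ conclude $B$ — is simulated by conjoining the two given $\SKS$ proofs to derive $A \vlan (\bar A \vlor B)$, applying $\swi$ to reach $(A \vlan \bar A) \vlor B$, and then invoking the generic cut $\giu$ from Prop.~\ref{GenRules} to collapse $A \vlan \bar A$ to $\fff$, which is absorbed by the unit equations. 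Each such step contributes only polynomial overhead, so the total simulation is polynomial.

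Second I would show that Frege polynomially simulates $\SKS$, also by induction on the $\SKS$ derivation. For every $\SKS$ rule $\vlinf{\rho}{}{B}{A}$ the implication $A \to B$ is a simple propositional tautology with a constant-size Frege proof. Given an $\SKS$ inference step applying $\rho$ inside a context $\xi\vlhole$ so as to produce $\xi\{B\}$ from $\xi\{A\}$, the standard replacement theorem for Frege systems lifts $A \to B$ to $\xi\{A\} \to \xi\{B\}$ with overhead linear in the depth of $\xi\vlhole$. Chaining these implications via modus ponens along the $\SKS$ derivation yields a Frege proof of polynomial total size. The equational rule $=$ of $\SKS$ is handled in the same way, since each of its defining equations (associativity, commutativity, units) has a bounded-size Frege proof.

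The main obstacle will be making the replacement step in the second simulation precise: one must verify that lifting an implication through a deep context can be done with uniformly linear overhead per inference, which depends on the exact formulation of the ambient Frege system and on the fact that the relevant Frege inferences compose without hidden blow-up. This is standard proof-complexity folklore and the full argument is carried out in \cite{BrusGugl:07:On-the-P:fk}; here I would simply appeal to that reference.
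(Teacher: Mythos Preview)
Your proposal is correct and, in the end, takes the same approach as the paper: the paper's entire proof is the single line ``See \cite{BrusGugl:07:On-the-P:fk}'', which is exactly what you defer to. Your added sketch is sound and more detailed than anything the paper supplies; note also that immediately after the proposition the paper remarks that the direction ``$\SKS$ simulates Frege'' can alternatively be obtained by observing that $\SKS$ rules generalise those of Gentzen systems with cut, which are already known to be polynomially equivalent to Frege --- a slightly different route from your direct axiom-by-axiom and modus-ponens simulation.

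One small caveat: your suggestion to handle Frege axiom schemes via the truth-table simulation of Sect.~\ref{DITreeTT} does not work as stated, since an \emph{instance} of a scheme may contain arbitrarily many propositional variables and hence have an exponentially large truth table. Your other suggestion --- hand-crafted $\SKS$ derivations for each of the finitely many schemes, uniform in the instantiating formulae --- is the correct one (or equivalently, prove the scheme on fresh atoms and then substitute).
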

\begin{proof}
See \cite{BrusGugl:07:On-the-P:fk}.
\end{proof}
%\begin{proof}
%Frege systems are well-known to be polynomially equivalent to Gentzen systems with cut, and the rules of $\SKS$ are just generalisations of the Gentzen rules.
%\end{proof}

%In fact $\SKS$ and Frege systems are polynomially equivalent, as shown in \cite{BrusGugl:07:On-the-P:fk}, but we do not need this result here.

Notice that one direction of the above proposition, indeed the direction that we require, that $\SKS$ polynomially simulates Frege systems, can be obtained by recognising that the rules of $\SKS$ are just generalisations of the rules of Gentzen systems with cut, which are well-known to be polynomially equivalent to Frege systems.

%For a variable $a$ occurring in a formula $A$, we will write $a\in A$.

The following trick, now standard in the deep inference literature, is very useful for proving certain tautologies in $\KS$, as we will see. The idea is that if we know there is an $\SKS$ proof of a tautology, then we can transform it into a $\KS$ proof of that tautology in disjunction with trivial contradictions. If we can find derivations from each of these contradictions to the tautology we want to prove, then the composition mechanisms for derivations in deep inference guarantee that we can build a proof of the tautology.

\begin{lem}\label{SwitchCutTrick}
Let $A$ be a formula over the atoms $a_1 , \dots , a_n$. Every $\SKS$ proof $\Phi$ of $A$ can be polynomially transformed to a $\KS$ proof of $A\vlor\bigvee_{i} (a_i \vlan\bar{a}_i )$.
\end{lem}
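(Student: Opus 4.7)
The plan is to apply the standard ``switch-cut trick'' from deep inference: each atomic cut in $\Phi$ is eliminated by routing its cut formula to a top-level disjunct of the conclusion, yielding an $\aSKS$ proof of the augmented formula $A\vlor\bigvee_i(a_i\vlan\bar{a}_i)$, which is then converted into a $\KS$ proof via the normalisation results of Sect.~\ref{prelim}. (We may assume, without loss of generality for the intended applications, that $\Phi$ uses only atoms appearing in $A$.)

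First I would set up the routing mechanism: given a cut instance $\vlinf{\aiu}{}{\zeta\{\fff\}}{\zeta\{(a_i\vlan\bar{a}_i)\}}$ in $\Phi$, I would replace it by a short derivation in $\{\swi,\med,\gwd,=\}$ that transforms $\zeta\{(a_i\vlan\bar{a}_i)\}$ into $\zeta\{\fff\}\vlor(a_i\vlan\bar{a}_i)$ of size polynomial in $|\zeta|$, essentially iterating switch to ``pull'' the cut formula out of each conjunctive layer of $\zeta$. The remainder of the proof is then carried out inside the wider context $\cdot\vlor(a_i\vlan\bar{a}_i)$, which is legal since every $\SKS$ rule is closed under disjunctive context. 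Iterating over all cuts of $\Phi$, and padding missing disjuncts for atoms of $A$ not involved in any cut via generic weakening $\gwd$ (derivable in $\KS$ by Prop.~\ref{GenRules}), yields an $\aSKS$ proof $\Phi^\ast$ of $A\vlor\bigvee_i(a_i\vlan\bar{a}_i)$ with $|\Phi^\ast|=O(|\Phi|^2)$. In atomic-flow terms, this amounts to deleting each $\aiu$ node of $\flow(\Phi)$ and extending its two upper edges down to new pending ends at the bottom, grouped as conjuncts $(a_i\vlan\bar{a}_i)$ in the augmented conclusion.

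The final step is to invoke Thm.~\ref{NormaSKStoKS} on $\Phi^\ast$, producing a $\KS$ proof of the same conclusion. The step I expect to require the most care is showing that $\ulcorner\flow(\Phi^\ast)\urcorner$ is polynomial in $|\Phi|$, which is what guarantees the normalisation runs in polynomial time. Since $\flow(\Phi^\ast)$ has no $\aiu$ nodes, every open $\ai$-path must u-turn at a single $\aid$ node and terminate at pending edges at the bottom of the flow, of which only linearly many exist. A direct counting argument via these polynomially many endpoints and $\aid$ nodes, or an application of Lemma~\ref{NoConLoopsQuad} after checking that the extraction introduces no new contraction loops beyond those already present in $\flow(\Phi)$, should deliver the required polynomial bound and complete the proof.
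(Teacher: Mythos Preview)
Your first phase---replacing each $\aiu$ by a switch-based extraction to obtain an $\aSKS$ proof $\Phi^\ast$ of $A\vlor\bigvee_i(a_i\vlan\bar a_i)$---is fine and is indeed the starting point of the standard argument. The gap is in the second phase: you cannot in general bound $\ulcorner\flow(\Phi^\ast)\urcorner$ polynomially, and neither of your two suggested arguments does so. The ``direct counting'' idea overlooks that an open $\ai$-path is not determined by its two pending endpoints and the single $\aid$ node it traverses; between those, the path may branch at every $\acu$ node (going down) and every $\acd$ node (going up), so exponentially many distinct paths can share the same endpoints and $\aid$ node. The appeal to Lemma~\ref{NoConLoopsQuad} fails for the same reason: you only check that the extraction introduces no \emph{new} contraction loops, but the hypothesis places no restriction on $\Phi$, so $\flow(\Phi)$---and hence $\flow(\Phi^\ast)$---may already contain arbitrarily many contraction loops, and the lemma simply does not apply.

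There is a structural reason to expect this gap cannot be patched along the lines you sketch. If $\Phi$ is already an $\aSKS$ proof (no $\aiu$ steps), your construction gives $\Phi^\ast=\Phi$ up to trailing weakenings, and a polynomial bound on $\ulcorner\flow(\Phi^\ast)\urcorner$ would, via Thm.~\ref{NormaSKStoKS}, yield a polynomial simulation of $\aSKS$ by $\KS$---precisely the question the paper leaves open in its conclusion. The proofs in the cited references avoid this obstacle altogether: rather than removing only $\aiu$ and then trying to normalise the remaining $\acu,\awu$ structure, they simulate \emph{each} up-rule instance ($\aiu$, $\awu$, $\acu$) locally by a short $\KS$ derivation that deposits the residual material directly into the top-level disjuncts, so one lands in $\KS$ without ever invoking $\GNorm$ on an uncontrolled $\acu/\acd$ configuration.
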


\begin{proof}
See e.g.\ \cite{Jera::On-the-C:kx}, \cite{BrusGuglGundPari:09:Quasipol:kx}.
\end{proof}

\begin{lem}[Je{\v r}\'abek]\label{JerFPHPProofs}
There are polynomial-size proofs of $\FPHP_{n}$ and $\OPHP_n$ in $\aSKS$.
\end{lem}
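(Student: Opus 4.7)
The plan is to combine Buss's polynomial-size Frege proofs of $\PHP_n$ (Thm.~\ref{PolyPHPFrege}) with the polynomial simulation of Frege by $\SKS$ (Prop.~\ref{SKSFrege}) to obtain polynomial-size $\SKS$ proofs of $\PHP_n$, and hence, via a single application of $\gwd$, of $\FPHP_n$ and $\OPHP_n$. The remaining task is to eliminate all instances of the cut rule $\aiu$ from these proofs, replacing them by $\gcu$ and $\gwu$, while preserving both the conclusion formula and polynomial size.

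A direct application of the switch-cut trick (Lemma~\ref{SwitchCutTrick}) gives a $\KS$ proof of $\FPHP_n \vlor \bigvee_{i,j}(a_{ij} \vlan \bar{a}_{ij})$, and analogously for OPHP, but the spurious atomic-cut disjuncts in the conclusion cannot be eliminated by any sound $\aSKS$ derivation from a single $a_{ij} \vlan \bar{a}_{ij}$ in isolation. Instead, the cut-elimination must be performed globally: by using $\gcu$ to share subderivations of the cut formulas, the residual atoms from simultaneously eliminating all atomic cuts can be rerouted, via $\swi$ and $\gwu$, into existing disjuncts of $\mathsf{F}_n$ or $\mathsf{O}_n$ in the conclusion. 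For FPHP each left-over $a_{ij}$ is placed into a disjunct $(a_{ij} \vlan a_{i'j})$ of $\mathsf{F}_n$, paired with a residue from another eliminated cut on the same hole $j$; for OPHP the residues $\bar{a}_{ij}$ for a fixed $j$ assemble into the conjunct $\bigwedge_i \bar{a}_{ij}$ of $\mathsf{O}_n$. The remaining unused atoms are then absorbed by $\gwu$.

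The main obstacle is keeping the construction polynomial. The $\gcu$ nodes introduced during cut-elimination can in principle interact with the contractions inherited from the underlying Frege proof to form contraction loops (Dfn.~\ref{Dfn:ConLoop}), which by Prop.~\ref{CocConSizePath} would cause an exponential blowup in the number of open $\ai$-paths and hence in the size of the resulting proof. The key technical manoeuvre is therefore to arrange all inserted cocontractions uniformly above every contraction in the derivation, so that the resulting atomic flow is free of contraction loops; Lemma~\ref{NoConLoopsQuad} then bounds the number of open $\ai$-paths polynomially, guaranteeing that the overall construction stays polynomial in size.
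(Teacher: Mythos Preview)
Your proposal contains a central misconception that derails the argument. You assert that ``the spurious atomic-cut disjuncts in the conclusion cannot be eliminated by any sound $\aSKS$ derivation from a single $a_{ij}\vlan\bar a_{ij}$ in isolation.'' This is false, and constructing exactly such derivations is the heart of the paper's proof. For each atom $a_{st}$ one builds, entirely locally, a polynomial-size derivation in $\aSKS\setminus\{\acd\}$ from $a_{st}\vlan\bar a_{st}$ to $\FPHP_n$: introduce the tautology $\bigwedge_{j\neq t}\bar a_{sj}\vlor\bigvee_{j\neq t}a_{sj}$ by $\gidown$, use two $\swi$ steps to combine $\bar a_{st}$ with the conjunction and $a_{st}$ with the disjunction, cocontract $a_{st}$ into $n-1$ copies and switch to obtain $\bigwedge_j\bar a_{sj}\vlor\bigvee_{j\neq t}(a_{st}\vlan a_{sj})$, then weaken to $\FPHP_n$. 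No coordination between different contradictions is required; your idea of pairing a residue $a_{ij}$ from one eliminated cut with a residue $a_{i'j}$ from another is unnecessary and, as you describe it, not actually a concrete construction.

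Two further problems follow from this. First, your ``global cut-elimination'' is never made precise: there is no known general polynomial procedure in $\aSKS$ for removing $\aiu$ by rerouting residues, and your sketch does not supply one. Second, you conflate this lemma with the subsequent theorem. Lemma~\ref{JerFPHPProofs} asks only for polynomial-size $\aSKS$ proofs; no flow analysis, contraction-loop avoidance, or appeal to Lemma~\ref{NoConLoopsQuad} is needed here. The flow argument enters only in Thm.~\ref{KSfPHP}, where one observes that in the construction above the only $\acu$ steps occur inside the local derivations $\Phi^{a_{st}}_n$, which contain no $\acd$ steps at all, so the flow has bounded length and Prop.~\ref{CocConSizePath} applies directly.
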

\begin{proof}
%By Thm. \ref{PolyPHPFrege} and \ref{DeepGentz} there are polynomial-size $\KSg$-proofs $\Phi_n$ of $f\PHP_{n} \vlor \bigvee_{a\in f\PHP_{n}} (a\vlan\bar{a})$. Below we construct $\aSKSg$-derivations $\Psi_{n}^{a}$ with premiss $a\vlan\bar{a}$ and conclusion $f\PHP_{n}$ for each $a \in f\PHP_{n}$ (for simplicity we construct the derivation for $a\equiv a_{00}$, whence the others follow by symmetry), and thence a $\aSKSg$-proof of $f\PHP_n$. 
By Thm.~\ref{PolyPHPFrege}, Prop.~\ref{SKSFrege} and Lemma \ref{SwitchCutTrick} there are polynomial-size $\KS$ proofs with conclusion $\vls[\PHP_{n}.\bigvee_{i,j} (a_{ij}\vlan\bar{a}_{ij})]$. For each atom $a_{st}$ we construct a derivations $\Phi^{a_{st}}_n $ in $\aSKS\setminus\{\acd\}$ with premiss $a_{st}\vlan\bar{a}_{st}$ and conclusion $\FPHP_n$ as follows:
\[
\vlderivation{
\vlin{\gwd}{}{\FPHP_n }{
\vlin{2\cdot\swi}{}{\bigwedge\limits_j \bar{a}_{sj} \vlor \left(\vlder{}{\{\swi\}}{\bigvee\limits_{j\neq t} \vls(a_{st} .a_{sj} )}{\vls(\vlinf{(n-2)\cdot\acu}{}{a_{st} \vlan \vldots a_{st}}{a_{st} } . \bigvee\limits_{j\neq t } a_{sj} )}\right)}{
\vlin{=}{}{\vls(a_{st} . \bar{a}_{st} . \vlinf{\gidown}{}{\vls[\bigwedge\limits_{j\neq t} \bar{a}_{sj} . \bigvee\limits_{j\neq t} a_{sj} ]}{\ttt})}{\vlhy{\vls(a_{st}.\bar{a}_{st})}}
}
}
}
\]
We then put these together and apply contractions to obtain proofs of $\FPHP_n$:
\[
%\Theta_n : 
\vlderivation{
\vldd{}{\{\gcdown\}}{\FPHP_{n}}{
\vlpd{}{\KS}{\left[ \PHP_n \vlor \bigvee\limits_{i,j } \left( \vlderd{\Phi_{n}^{a_{ij}} }{\aSKS\setminus\{\acd\}}{\FPHP_n }{a\vlan\bar{a}} \right) \right]}
}
}
\]
We can construct similar derivations from $a_{st}\vlan \bar{a}_{st}$ to $\OPHP_n$, given below, and put them together in a similar way to obtain proofs of $\OPHP_n$ in $\aSKS$.
\[
\vlderivation{
\vlin{\gwd}{}{\OPHP_n}{
\vlin{2\cdot\swi}{}{\bigwedge\limits_i \bar{a}_{it} \vlor \left(\vlderivation{
\vlde{}{\{\swi\}}{\bigvee\limits_{i\neq s}\vls(a_{st} . a_{it})}{\vlhy{\vls(\vlinf{(n-2)\cdot\acu}{}{a_{st} \vlan \vldots \vlan a_{st}}{a_{st}} . \bigvee\limits_{i\neq s} a_{it})}}
}\right)}{
\vlin{=}{}{\vls(a_{st} . \bar{a}_{st} . \vlinf{\gidown}{}{\bigwedge\limits_{i\neq s} \bar{a}_{it}  \vlor \bigvee\limits_{i\neq s} a_{it}}{\ttt})}{\vlhy{\vls(a_{st}. \bar{a}_{st})}}
}
}
}
\]
\afnegspace
%Finally we put these together to obtain the following polynomial-size $\aSKSg$-proof of $f\PHP_{n}$ as follows:
%\[
%\vlderivation{
%\vliq{\gcdown}{}{f\PHP_{n}}{
%\vlpd{\Phi_n}{\KSg}{\left[ f\PHP_n \vlor \bigvee\limits_{a\in A} \left( \vlderd{\Psi_{n}^{a} }{\aSKSg}{f\PHP_n }{a\vlan\bar{a}} \right) \right]}
%}
%}
%\]
\end{proof}

%\begin{obs}\label{CocMult}
%The $n$ $\acu$-nodes above multiply the number of $a_{00}$ paths by $n$.
%\end{obs}

\begin{thm}\label{KSfPHP}
There are polynomial-size $\KS$ proofs of $\FPHP_{n}$, $\OPHP_n$ and so also $\OFPHP_n$.
\end{thm}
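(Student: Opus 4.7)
The plan is to apply Thm.~\ref{NormaSKStoKS} to the polynomial-size $\aSKS$ proofs of $\FPHP_n$ and $\OPHP_n$ provided by Lemma~\ref{JerFPHPProofs}. Writing $\Pi$ for such a proof and $\phi=\flow(\Pi)$ for its atomic flow, that theorem gives a $\KS$ proof of the same conclusion in time polynomial in $|\Pi| + \ulcorner\phi\urcorner$. Since $|\Pi|$ is already polynomial in $n$, it suffices to bound $\ulcorner\phi\urcorner$ by a polynomial in $n$, and I would do so by showing that $\phi$ has no contraction loops and then invoking Lemma~\ref{NoConLoopsQuad}.

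To check loop-freeness I would observe that $\phi$ stratifies into three vertical layers dictated by Je\v{r}\'abek's construction: a top layer, the flow of the $\KS$ proof of $\PHP_n \vee \bigvee_{i,j}(a_{ij}\wedge\bar a_{ij})$, containing only $\aid$, $\awd$ and $\acd$ nodes; a middle layer, consisting of one copy of the flow of $\Phi_n^{a_{ij}}$ per disjunct, containing $\aid$ nodes from the unfolded $\gidown$, $\awd$ nodes from weakenings and a cascade of $n-2$ $\acu$ nodes duplicating $a_{ij}$; and a bottom layer, the $\acd$ nodes unfolding the final generic contraction $\gcdown$. Since every $\acu$ node lies in the middle layer while every $\acd$ node lies strictly above or strictly below it, any putative contraction loop must pair a middle $\acu$ with a bottom $\acd$.

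For such a pair I would argue as follows. By inspection of $\Phi_n^{a_{ij}}$, the two output edges of any $\acu$ in the cascade are routed by the subsequent switches to two \emph{distinct} atomic positions of the conclusion $\FPHP_n$ (respectively $\OPHP_n$); every other occurrence of $a_{ij}$ in that conclusion is introduced from $\fff$ by $\gwd$ and is therefore path-disconnected from the $\acu$. The generic $\gcdown$ at the bottom unfolds, position-by-position, into one disjoint binary tree of $\acd$ nodes per atomic position of the conclusion formula, so the two outputs of the $\acu$ enter two such disjoint trees and cannot re-meet at any common $\acd$. This rules out contraction loops, whence Lemma~\ref{NoConLoopsQuad} yields $\ulcorner\phi\urcorner \leq |\phi|^3 = \mathrm{poly}(n)$ and Thm.~\ref{NormaSKStoKS} delivers polynomial-size $\KS$ proofs of both $\FPHP_n$ and $\OPHP_n$.

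Finally, since $\OFPHP_n \equiv \mathsf{O}_n \vee \mathsf{F}_n \vee \PHP_n$ is obtained from $\FPHP_n$ by prepending the disjunct $\mathsf{O}_n$, a single application of the generic weakening $\gwd$ (polynomially derivable in $\KS$ by Prop.~\ref{GenRules}) extends our $\KS$ proof of $\FPHP_n$ to a polynomial-size $\KS$ proof of $\OFPHP_n$. The main obstacle in this plan is precisely the loop-freeness verification: one must trace through the cascade of $\acu$ nodes and check that the two outputs of each $\acu$ really do land in disjoint atomic positions of the conclusion of $\Phi_n^{a_{ij}}$, so that the sizeable final $\gcdown$ tree cannot accidentally fuse them back together along two disjoint paths.
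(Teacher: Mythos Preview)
Your argument is correct but works harder than necessary. The paper also applies Thm.~\ref{NormaSKStoKS} to Je\v{r}\'abek's $\aSKS$ proofs, but bounds $\ulcorner\phi\urcorner$ via Prop.~\ref{CocConSizePath} rather than Lemma~\ref{NoConLoopsQuad}: from the three-layer stratification you already identified (all $\acu$ nodes in the middle layer, all $\acd$ nodes strictly above or below it), it follows immediately that the flow has \emph{bounded length}, since any $\ai$-path changes node type only a constant number of times. This yields $\ulcorner\phi\urcorner = |\phi|^{O(1)}$ directly, without any need to trace where the two outputs of each $\acu$ land or to verify that the final $\gcdown$ unfolding keeps them in disjoint subtrees.

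Your loop-freeness verification does go through---the cascade of $\acu$ nodes is a tree, so the position sets reachable from the two lower edges of any $\acu$ are disjoint, and the generic $\gcdown$ indeed unfolds position by position into disjoint $\acd$ trees---but that verification is precisely the ``main obstacle'' you flag, whereas the bounded-length route sidesteps it entirely using only the coarse layering information you had already established.
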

\begin{proof}
In the proofs of $\FPHP_n$ constructed above, Lemma~\ref{JerFPHPProofs}, notice that the only $\acu$ steps occur in $\Phi^{a_{st}}_n$ where there are also no $\acd$ steps, and similarly for $\OPHP_n$. It follows that there are only two alternations between $\acd$ and $\acu$ steps in the path of any atom from an $\aid$ step, and so the atomic flows of these proofs will have bounded length. The result follows by Thm.~\ref{NormaSKStoKS} and Prop.~\ref{CocConSizePath}.
\end{proof}

\begin{cor}\label{GenResBDFNoP}
Cut-free sequent calculi, Resolution and bounded-depth Frege systems are exponentially separated from $\KS$.
\end{cor}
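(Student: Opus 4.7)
The plan is to combine the upper bound from Thm.~\ref{KSfPHP} with the lower bound from Cor.~\ref{GentzNfPHP} directly. Since the statement is an immediate consequence of results already established, there is no real obstacle, and the argument amounts simply to choosing a variant of the pigeonhole principle that witnesses both bounds simultaneously.

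Concretely, I would pick any variant among $\FPHP_n$, $\OPHP_n$ or $\OFPHP_n$ — say $\OFPHP_n$, since it is the weakest variant and so the lower bound is strongest there. By Thm.~\ref{KSfPHP} there is a sequence of $\KS$ proofs of $\OFPHP_n$ whose size grows polynomially in $n$, hence polynomially in $|\OFPHP_n|$. By Cor.~\ref{GentzNfPHP}, any proof of $\OFPHP_n$ in a bounded-depth Frege system, in Resolution, or in a cut-free sequent calculus has size $2^{n^{\Omega(1)}}$, hence exponential in $|\OFPHP_n|$. Putting these two facts together, none of those three systems can polynomially simulate $\KS$, and indeed the separation is exponential in the length of the conclusion.

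Since Resolution and cut-free sequent calculi are special cases of bounded-depth Frege (or can be simulated by it on these formulae), it would suffice to note that the $\OFPHP_n$ lower bound of Thm.~\ref{BdFNfPHP} transfers to them via Cor.~\ref{GentzNfPHP}, so a single family of tautologies exhibits the separation uniformly for all three systems. No further calculation is required.
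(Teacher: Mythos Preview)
Your proposal is correct and matches the paper's own proof, which simply states that the result is immediate from Cor.~\ref{GentzNfPHP} and Thm.~\ref{KSfPHP}. Your choice of $\OFPHP_n$ as the witnessing family and the accompanying explanation are just an elaboration of that one-line argument.
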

\begin{proof}
Immediate from Cor.~\ref{GentzNfPHP} and Thm.~\ref{KSfPHP}.
\end{proof}

\section{Polynomial simulations of versions of Resolution}\label{VerResSim}
In this section we present a polynomial simulation in $\KS$ of tree-like and multiset Resolution systems. We point out that, in previous work \cite{Das:12:Complexi:kx}, a more general result was claimed, namely a simulation of unrestricted Resolution operating with sets, however that proof contained errors that we could not amend for this version of the article. The status of those results is unresolved.

We define the Resolution system below in both its set and multiset formulations. 
%We point out that the formulation with sets is stronger than the formulation with multisets.

\begin{defi}[Resolution]\label{Dfn:Resolution}
%A \emph{clause} is a (multi)set of literals, denoted $\Gamma, \Delta,$ etc. 
We use symbols $\Gamma, \Delta$ etc.\ to range over (multi)sets of literals and write `$\cup$' to denote (multi)set union.
We define the system \emph{(mulitset-)Resolution} by the following rules:
\[
\vlinf{\weak}{}{\Gamma\cup\Delta}{\Gamma}
\quad , \quad
\vliinf{\res}{}{\Gamma\cup\Delta}{\Gamma\cup\{a\}}{\Delta\cup\{\bar{a}\}}
\]
A \emph{derivation} from (multi)sets $\Gamma_1 , \dots , \Gamma_s $ is a list\footnote{We do not mind how the list is delimited. In the translations below they are written vertically for presentation reasons.} $ \pi = (\Delta_1 , \dots , \Delta_n)$ where each $\Delta_i$ is some $\Gamma_j$ or follows by one of the rules above whose premisses have occurred previously in the list. If we further have that $\Delta_n = \emptyset$ then we call $\pi$ a \emph{refutation} of $\Gamma_1 , \dots , \Gamma_s$.

We call a derivation $(\Delta_1 , \dots , \Delta_n)$ \emph{tree-like} if each $\Delta_i$ is used at most once as the premiss of any inference step concluding some $\Delta_j$.
%
%If variables $\Gamma,\Delta$ etc.\ are instead interpreted as multisets with and set union `$\cup$' is replaced by multiset union `$,$' in the above rules, then the induced proof system is called `multiset resolution'. The same notion of refutation applies.
\end{defi}

To simplify the treatment of (multiset-)Resolution derivations we address certain rather pathological situations below. These are also the reason why we opt to include weakening in our formulation.\footnote{It is straightforward to show that the formulation without weakening is polynomially equivalent by a rule permutation argument.}

\begin{rem}[Assumptions on derivation format]
In a (multiset-)Resolution derivation, we can assume that neither premiss of a $\res$ step contains both a resolved atom and its dual. Any such step would have the following format,
\[
\vliinf{\res}{}{\Gamma\cup\Delta\cup\{a\}}{ \Gamma\cup\{a\} }{\Delta\cup\{a , \bar a \}}
\]
which can be simulated by an application of $\weak$ to the left premiss, if necessary at all.

We consequently have that no $\res$ step has identical premisses, and so there is no need to consider this case in the translations that follow.

Finally, in the case of sets, we further assume that neither the resolved atom nor its dual appear in the conclusion of a $\res$ step. I.e., in the definition of the $\res$ above, we assume that $\Gamma$ and $\Delta$ contain neither $a$ nor $\bar{a}$. Aside from the above situation this phenomenon might also occur if, say, $a\in\Gamma$ and so $\Gamma = \Gamma\cup \{a \}$, whence the step may be simulated by an application of $\weak$ in a similar way.
\end{rem}

Before presenting our simulations in $\KS$ of versions of Resolution, we set some notational conventions below in order to easily switch between the two settings.

\begin{nota}[Set symbols in deep inference]\label{SetSymbDI}
To reduce the amount of syntax in our deep inference derivations, we will simply write $\Gamma$ instead of $\bigvee \Gamma$ for the disjunction of the members of $\Gamma$, as an abuse of notation.

We similarly use other (multi)set-theoretic notation. In particular, in light of the above remark, we will always have that $\Gamma\cup \{a\} = \Gamma \vlor a$ in both the set and multiset settings whenever it occurs. In the multiset setting we further have that $\Gamma\cup\Delta = \Gamma \vlor \Delta$ for all $\Gamma, \Delta$.
\end{nota}

%Before introducing 

\begin{rem}
Throughout this section, when we say that a proof system polynomially simulates a refutation system, we mean that every refutation of $A$ in the latter can be polynomially transformed to a proof of $\bar A$ in the former.
\end{rem}

\begin{defi}[Dual systems]
The dual of a deep inference rule $\vlinf{\rho}{}{B}{A}$ is $\vlinf{\bar\rho}{}{\bar{A}}{\bar{B}}$. E.g.\ in $\SKS$ a rule $x\downarrow$ is dual to $x\uparrow$, while $\swi$ and $\med$ are self-dual. The set of duals of a system $\mathcal{S}$ is denoted $\overline{\mathcal{S}}$. Notice that a rule is sound if and only if its dual is, by the law of contraposition.

The dual of a derivation $\vlder{\Phi}{\mathcal S}{B}{A}$ is the derivation $\vlder{\bar\Phi}{\bar{\mathcal S}}{\bar A}{\bar B}$ obtained by flipping $\Phi$ upside down and replacing each inference step with its dual.

Similarly we define the dual of a flow-rewriting rule to be the rule flipped upside-down, replacing each node with its dual, and the dual of a rewriting system is just the set of its duals. A rule/system is sound, terminating and/or confluent if and only if its dual is.
\end{defi}
%
%For a set of literals $\Gamma = \{a_1 , \dots , a_n \}$ let $\bar{\Gamma}$ denote the set $\{\bar{a}_1 , \dots , \bar{a}_n\}$, i.e.\ the set obtained by negating all the literals in $\Gamma$.

We are now ready to define our basic translation from Resolution derivations to deep inference, on which our simulation results in later sections will be based.

\begin{defi}[Translation of Resolution derivations]\label{TransResSKS}
We give the following translation $R$ of a Resolution step to a $\overline{\aSKS}$ derivation,
\[
\begin{array}{rrcl}
R: &\quad \vlinf{\weak}{}{\Gamma\cup\Delta}{\Gamma} \quad & \mapsto & \quad \vlderivation{
\vlin{=}{}{ \Gamma \cup \Delta}{
\vlin{=}{}{  \Gamma \vlor \vlinf{\gwd}{}{ \Delta \setminus \Gamma }{\fff} }{ \vlhy{ \Gamma} }
}
}
\\
\noalign{\bigskip}
R:& \quad \vliinf{\res}{}{\Gamma\cup\Delta}{\Gamma\cup\{a\}}{\Delta\cup\{\bar{a}\}} \quad & \mapsto & \quad 
\vlderivation{
\vliq{(\gcdown }{\scriptstyle{)}}{  \Gamma\cup \Delta }{
\vlin{=}{}{ \Gamma \vlor  \Delta  }{
\vlin{2\cdot\swi}{}{  \Gamma \vlor  \Delta \vlor \vlinf{\aiu}{}{\fff}{ a \vlan \bar a  }}{\vlhy{
\vlinf{=}{}{\Gamma\vlor a}{\Gamma\cup\{ a \}} \vlan \vlinf{=}{}{\Delta \vlor \bar a}{\Delta \cup\{\bar a\}}
}}
}
}
}
\end{array}
\]
where the parenthesised $\gcdown $ steps are present only when $\Gamma$ and $\Delta $ are sets (not multisets) that have nonempty intersection.

We extend the definition of $R$ to any (multiset-)Resolution derivation $\pi = (\Delta_1 , \dots , \Delta_n)$ from $\Gamma_1 , \dots , \Gamma_s$, such that $R$ has the following format:
\[
R \quad :\quad \begin{array}{c}
\Delta_1 \\
\vdots \\
\Delta_n
\end{array}
\quad\mapsto\quad
\vlderd{}{\overline{\aSKS}}{\bigwedge\limits_{m=1}^n \Delta_m  }{\bigwedge\limits_{r=1}^s \Gamma_r }
\]
% $(\Delta_1 , \dots , \Delta_n)$ from $\Gamma_1 , \dots , \Gamma_k$ to ones of the format $\vlder{}{\overline{\aSKS}}{\bigwedge\limits_{i=1}^n \Delta_n  }{\bigwedge\limits_{i=1}^k \Gamma_k  }$ by induction on $n$.
The definition is by induction on the length $n$ of the Resolution derivation $\pi$.

If $n=0$, i.e.\ $\pi$ is an empty list, then $\bigwedge_m \Delta_m = \ttt $ and we define:\footnote{Throughout this translation we shall omit certain $=$ rules required to handle units, e.g.\ we associate the empty disjunction with $\fff$ and the empty conjunction with $\ttt$. By now this is a routine consideration.}
\[
R: \quad\pi \quad \mapsto\quad \vlinf{\gwu}{}{\ttt}{\bigwedge\limits_{r=1}^s \Gamma_r }
\]

If $\pi$ is extended by a (multi)set $\Gamma_i$ then we define,
	\[
	R:\quad
	\begin{array}{c}
	\Delta_1 \\
	\vdots \\
	\Delta_n \\
	\Gamma_i
	\end{array}
	\quad\mapsto\quad
	\vlderivation{
	\vlin{=}{}{ \vlderd{\it{ID}}{}{\bigwedge\limits^{n}_{m=1}  \Delta_m  }{ \bigwedge\limits^s_{r=1}  \Gamma_r } \vlan  \Gamma_i }{
	\vlin{=}{}{ \bigwedge\limits_{r \neq i}  \Gamma_r  \vlan \vlinf{\gcu}{}{ \Gamma_i \vlan  \Gamma_i }{\Gamma_i}}{\vlhy{\bigwedge\limits^s_{r=1}  \Gamma_r}  }
	}
	}
	\]
where the derivation marked \emph{ID} is already defined by induction.

If $\pi$ is extended by a $\weak$ step then we define,
	\[
	R:\quad
	\begin{array}{c}
	\Delta_1 \\
	\vdots \\
	\Delta_i \\
	\vdots \\
	\Delta_n \\
	\Delta_i \cup \Sigma
	\end{array}
	\quad\mapsto\quad
	\vlderivation{
			\vlin{=}{}{ \bigwedge\limits^{n+1}_{m=1}  \Delta_m }{
			\vlin{=}{}{ \bigwedge\limits_{m \neq i}  \Delta_m \vlan \vlinf{\gcu}{}{  \Delta_i  \vlan \vlderd{R(\weak)}{}{  \Delta_i \cup \Sigma }{ \Delta_i  } }{  \Delta_i \ } }{
			\vldd{\it{ID}}{}{ \bigwedge\limits^n_{m=1}  \Delta_m }{ \vlhy{\bigwedge\limits^s_{r=1}  \Gamma_r} }
			}
			}
			}
	\]
where $\Delta_{n+1}= \Delta_i \cup \Sigma$ and the derivation marked \emph{ID} is already defined by induction.

If $\pi$ is extended by a $\res$ step then we define,
	\[
	R: \quad\begin{array}{c}
			\Delta_1 \\
			\vdots \\
			\Delta_i' \cup \{a\} \\
			\vdots \\
			\Delta_j' \cup \{\bar a\} \\
			\vdots \\
			\Delta_n \\
			\Delta_i' \cup \Delta_j'
		\end{array}
		\quad\mapsto\quad
		\vlderivation{
		\vlin{=}{}{ \bigwedge\limits^{n+1}_{m=1}  \Delta_m }{
		\vlin{=}{}{ \bigwedge\limits_{m \neq i,j}  \Delta_k \vlan \vlinf{=}{}{  \Delta_i \vlan  \Delta_j \vlan \vlderd{R(\res)}{}{  \Delta_i' \cup \Delta_j' }{ \Delta_i \vlan  \Delta_j } }{  \vlinf{\gcu}{}{\Delta_i \vlan \Delta_i}{\Delta_i} \vlan \vlinf{\gcu}{}{\Delta_j \vlan \Delta_j}{\Delta_j } } }{
		\vldd{\it{ID}}{}{ \bigwedge\limits^n_{m=1}  \Delta_m }{ \vlhy{\bigwedge\limits^s_{r=1}  \Gamma_r} }
		}
		}
		}
	\]
where $\Delta_i = \Delta_i' \cup\{a \}$, $\Delta_j = \Delta_j' \cup\{\bar{a}\}$, $\Delta_{n+1} = \Delta_i' \cup \Delta_j'$ and the derivation marked \emph{ID} is already defined by induction.
\end{defi}

%\begin{defi}
%	[Translation of resolution refutations]
%We define a translation of inference steps $R$. We later extend this to a full translation on derivations such that: TODO
%
%	Below, we give a mapping $R$ from resolution refutations to $\overline{\aSKS}$, defined by induction on the length of a refutation. For space reasons we write lists vertically, delimiting each entry with a new line.
%	\[
%		\begin{array}{c}
%			\Delta_1 \\
%			\vdots \\
%			\Delta_i' \cup \{a\} \\
%			\vdots \\
%			\Delta_j' \cup \{\bar a\} \\
%			\vdots \\
%			\Delta_n \\
%			\Delta_i' \cup \Delta_j'
%		\end{array}
%		\quad\to\quad
%		\vlderivation{
%			\vldd{}{ \{ \gcu \} }{\bigwedge\limits_{k=1}^n \bigvee \Delta_k \vlan \vlderivation{
%	\vlin{=}{}{ \bigvee \Delta_{n+1} }{
%	\vlin{2\cdot\swi}{}{
%		\vliqf{\gcu}{}{ \bigvee \Delta_i' \cup \Delta_j' }{ \bigvee \Delta_i' \vlor \bigvee \Delta_j' }
%		\vlor
%		\vlinf{\aiu}{}{\fff}{a\vlan\bar a}
%	}{\vlhy{ \vliqf{\gwd}{}{\bigvee \Delta_i' \vlor a}{\bigvee \Delta_i' \cup \{ a \}  } \vlan \vliqf{\gwd}{}{ \bigvee \Delta_j' \vlor \bar a }{ \bigvee \Delta_j' \cup \{\bar a \}  }   } }  }
%} }{
%	\vldd{}{IH}{ \bigwedge\limits_{k=1}n \bigvee \Delta_k  }{ \vlhy{ \bigwedge\limits_{r=1}^n \bigvee \Gamma_r  } }
%}
%}
%	\]$\Gamma$ and $\Delta $ 
%\end{defi}

\subsection{Multiset Resolution}
We consider the multiset setting, for arbitrary derivations (i.e.\ not necessarily tree-like). Here we achieve a polynomial simulation in $\KS$ rather easily by using the identities pointed out in Not.~\ref{SetSymbDI}.

\begin{obs}\label{ConFreeMultiRes}
The image of a multiset-Resolution derivation under $R$ has no $\acd $ steps.
\end{obs}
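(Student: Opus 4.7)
The plan is to check the image of $R$ rule by rule. Under Not.~\ref{SetSymbDI}, in the multiset setting we always have $\Gamma\cup\Delta = \Gamma\vlor\Delta$; in particular the parenthesised $(\gcdown)$ steps appearing in the translation of the $\res$ step occur only when $\Gamma$ and $\Delta$ are sets with nonempty intersection, and hence are \emph{absent} in the multiset case. This is the key observation: were this not so, the derivation of $\gcdown$ via Prop.~\ref{GenRules} would introduce $\acd$ steps into the image.

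Next, I would inspect the remaining rules appearing in the translation $R$ from Dfn.~\ref{TransResSKS}. The translation of an individual $\weak$ step uses only $=$ and $\gwd$; the translation of a $\res$ step (in the multiset case) uses only $=$, $\swi$ and $\aiu$. The inductive clauses extending a derivation introduce only $=$, $\gwu$ and $\gcu$. Thus the complete set of rules appearing in the image of $R$ is $\{\,=, \gwd, \gwu, \aiu, \swi, \gcu\,\}$, and by Prop.~\ref{GenRules} the generic rules $\gwd$, $\gwu$ and $\gcu$ are derivable using only $\swi$, $\med$ and their respective atomic structural rules $\awd$, $\awu$ and $\acu$.

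A straightforward induction on the length $n$ of the multiset-Resolution derivation $\pi$ then concludes the argument. The base case $n=0$ gives a single $\gwu$ step, which contains no $\acd$. Each inductive step appends to the existing derivation a combination of the rules listed above, none of which is $\acd$, so the conclusion is preserved. The only real subtlety to watch out for is the $\res$ case, where one must confirm that the parenthesised $\gcdown$ genuinely vanishes under the multiset reading of $\Gamma\cup\Delta$; this is immediate from the remark on derivation format, which ensures $\Gamma$ and $\Delta$ do not both contain the resolved atom or its dual.
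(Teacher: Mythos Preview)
Your proposal is correct and follows essentially the same approach as the paper: both identify that the only possible source of $\acd$ in the image of $R$ is the parenthesised $\gcdown$ in the translation of a $\res$ step, and that this step is absent in the multiset case because $\Gamma\cup\Delta$ and $\Gamma\vlor\Delta$ coincide (Not.~\ref{SetSymbDI}). Your treatment is simply more thorough, spelling out the full rule inventory and the induction; one minor quibble is that the final appeal to the remark on derivation format is unnecessary, since Dfn.~\ref{TransResSKS} already stipulates that the $\gcdown$ steps are present \emph{only} in the set case.
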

\begin{proof}
The only $\acd $ steps in the definition of $R$ occur in the translation of individual $\res$ steps, where the format of the premiss is $\Gamma \vlor \Delta$ and the conclusion $\Gamma \cup \Delta$. However, as already pointed out in Not.~\ref{SetSymbDI}, these are equivalent in the multiset setting, and so no $\acd $ steps are necessary at all.
\end{proof}

\begin{thm}\label{KSSimMultRes}
$\KS$ polynomially simulates multiset resolution.
\end{thm}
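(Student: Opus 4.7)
The plan is to apply the translation $R$ of Dfn.~\ref{TransResSKS} to a multiset-Resolution refutation, dualise the result, and then invoke the normalisation machinery of Sect.~\ref{prelim}.

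Given a multiset-Resolution refutation $\pi = (\Delta_1, \dots, \Delta_n)$ of $\Gamma_1, \dots, \Gamma_s$ with $\Delta_n = \emptyset$, applying $R$ produces a $\overline{\aSKS}$ derivation from $\bigwedge_r \Gamma_r$ to $\bigwedge_m \Delta_m$. Appending a few $\gwu$ steps on the inessential conjuncts $\Delta_1, \dots, \Delta_{n-1}$ and using the unit equation $(\ttt,\fff) = \fff$ reduces this conclusion to $\fff$. Dualising then yields a $\aSKS$ proof $\Phi$ of $\bigvee_r \bar{\Gamma_r}$, whose size is polynomial in the size of $\pi$ by inspection of $R$.

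The key observation is Obs.~\ref{ConFreeMultiRes}, which tells us that $R$ introduces no $\acd$ steps in the multiset setting. Its dual $\Phi$ therefore contains no $\acu$ steps, and so the atomic flow $\phi = \flow(\Phi)$ has no $\acu$ nodes at all. In particular $\phi$ contains no contraction loops in the sense of Dfn.~\ref{Dfn:ConLoop}, and Lemma~\ref{NoConLoopsQuad} gives $\ulcorner\phi\urcorner \leq |\phi|^3$, which is polynomial in $|\Phi|$.

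Applying Thm.~\ref{NormaSKStoKS} then transforms $\Phi$ into a $\KS$ proof of $\bigvee_r \bar{\Gamma_r}$ in time polynomial in $|\Phi| + \ulcorner\phi\urcorner$, hence polynomial in the size of $\pi$, giving the desired simulation. The only mild subtleties are the handling of the empty clause at the conclusion of $R$ and the dualisation step, both of which are routine; all the substantive content lies in the absence of $\acu$ nodes guaranteed by Obs.~\ref{ConFreeMultiRes}, which is precisely what prevents the exponential blowup that $\GNorm$-normalisation could otherwise induce.
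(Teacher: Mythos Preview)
Your proof is correct and follows essentially the same route as the paper: translate via $R$, use Obs.~\ref{ConFreeMultiRes} to see there are no $\acd$ steps, dualise to an $\aSKS$ proof with no $\acu$ steps, and normalise via Thm.~\ref{NormaSKStoKS}. Your handling of the conclusion of $R\pi$ (reducing $\bigwedge_m \Delta_m$ to $\fff$ by $\gwu$ and unit equations) is in fact more explicit than the paper, which simply says the conclusion is $\fff$ ``by definition''.

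The one genuine difference is in the final estimate of $\ulcorner\phi\urcorner$. The paper observes that, with no $\acu$ nodes, the flow has \emph{bounded length} and invokes Prop.~\ref{CocConSizePath}; you instead note that no $\acu$ nodes trivially means no contraction loops and invoke Lemma~\ref{NoConLoopsQuad}. Both are valid and give polynomial bounds; the paper's route yields a slightly sharper constant exponent, while yours reuses exactly the same lemma that will later drive the tree-like case, giving a more uniform treatment across the two simulations.
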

\begin{proof}
For a multiset resolution refutation $\pi$ of $\Gamma_1, \dots , \Gamma_s$ we have that $R\pi$ is a $\overline{\aSKS}$ derivation from $\bigwedge\limits^s_{r=1} \Gamma_r$ to $\fff$, by definition, not containing $\acd $ steps, by the observation above. Consequently we have that $\overline{R\pi}$ has the following format,
\[
\vlderd{\overline{R\pi}}{\KS\cup\{\awu \}}{\bigvee\limits^s_{r=1} \bar \Gamma_r}{\ttt}
\]
whose flow has bounded length (due to the absence of $\acu $ nodes) and so can be transformed to a $\KS$ proof of the same conclusion in polynomial time by Prop.~\ref{CocConSizePath} and Thm.~\ref{NormaSKStoKS}.
\end{proof}

\subsection{Tree-like Resolution}\label{TreeResSim}
We now consider Resolution derivations over sets and show a polynomial simulation of tree-like refutations in $\KS$. Here the argument is only slightly more involved: there are both $\acd $ and $\acu $ steps occurring, but they do not interact in any complex way.

We point out that the proof of this result could perhaps be carried out more simply by writing tree-like derivations as trees and simulating steps locally, discarding premisses once they are used. However the current presentation allows us to use the same translation $R$ for this and the multiset setting, as well as the extensions introduced in the next section. This uniform treatment is possible due to Lemma~\ref{NoConLoopsQuad} on the number of open $\ai  $-paths in the absence of contraction loops.

%\begin{rem}
%We point out that this result could perhaps be stated more simply by simply writing tree-like resolution derivations as trees, but we have opted for this formulation so that the results are more uniform and...
%\end{rem}
%
%CHOOSE ONE UP DOWN
%
%\begin{rem}
%The proof for the tree-like case could have been carried out more simply by writing the resolution derivation as a tree and discarding premisses once they are used. However the current presentation allows us to use a single translation for both that and the multiset case, thanks to Lemma[CONLOOPS].
%\end{rem}

\begin{prop}\label{PendingEdgeTreeResTrans}
If $\pi$ is a tree-like Resolution derivation then every $\acu $ node in $\flow(R\pi)$ has one edge pending.
\end{prop}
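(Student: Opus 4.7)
The plan is to proceed by induction on the length of the tree-like derivation $\pi$. Every $\acu$ node in $\flow(R\pi)$ arises from some $\gcu$ step of $R$, and these come in two flavours: (Type A) a $\gcu$ on a derivation line $\Delta_i$ introduced precisely when $\Delta_i$ is used as a premiss of an inference step, and (Type B) a $\gcu$ on an initial clause $\Gamma_i$ introduced when a new line $\Delta_{k+1} \equiv \Gamma_i$ is added by an extension. I plan to handle the two types by exhibiting a specific pending edge for each.

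The easy case is Type A. Tree-likeness is used crucially here: since $\Delta_i$ is used exactly once in $\pi$, namely at the current step, after this $\gcu$ the preserved copy of $\Delta_i$ simply continues unchanged through the state $\bigwedge_m \Delta_m$ and reaches the bottom of $R(\pi)$ without meeting any further atomic node. Hence the lower edge corresponding to the preserved copy is pending at the bottom of $\flow(R\pi)$, and this is the required pending edge for every $\acu$ node that originates from this $\gcu$.

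For Type B the preserved copy becomes the new line $\Delta_{k+1}$, and again by tree-likeness it is used at most once further. If it is never used, then the preserved lower edge of the $\acu$ is pending at the bottom and we are done. If it is used, I will follow the chain of subsequent $\gcu$s: by the Type A argument the chain must terminate at a $\gcu$ whose preserved lower is pending, and for the intermediate $\acu$s one identifies the pending edge either on the upper side (for the topmost $\gcu$ on $\Gamma_i$, whose upper edge comes straight from the pending premiss) or by tracing the consumed side into the inner derivation $R(\pi_{k-1})$, where the induction hypothesis applies and the consumed copy eventually reaches the base coweakening.

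The hard part will be the sub-case of Type B where $\Gamma_i$ is extended several times (producing a stack of $\gcu$s on $\Gamma_i$) and each of the resulting new lines is subsequently used. Here the immediate upper, preserved-lower, and consumed-lower edges of a middle $\gcu$ in the stack are all connected to other nodes, so the pending edge must be located by a careful downstream/upstream trace through the recursive structure of $R$. Making the inductive hypothesis strong enough to survive embedding into a larger derivation (in particular to ensure that pending bottom edges of $R(\pi_{k-1})$ that get plugged into the new $\gcu$s in $R(\pi_k)$ are replaced by new pending edges introduced by the fresh $\gcu$s) is the delicate point, and is where the proof really earns its keep.
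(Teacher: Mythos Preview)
Your division into Type~A and Type~B is apt, and your instinct that Type~B is where the difficulty lies is correct---but the difficulty is fatal: the proposition as stated is \emph{false}. Take $\Gamma_1=\{a\}$ and let $\pi=(\Delta_1,\Delta_2,\Delta_3)$ be the tree-like derivation with $\Delta_1=\Gamma_1$, $\Delta_2=\Gamma_1$ (two axiom lines) and $\Delta_3$ obtained from $\Delta_1$ by $\weak$. In $\flow(R\pi)$ the $\acu$ node $A_1$ coming from the \emph{first} extension by $\Gamma_1$ has its upper edge attached to the $\acu$ from the second extension by $\Gamma_1$ (which the translation places above it), its left lower edge attached to an $\awu$ node from the base coweakening $R\pi_0$, and its right lower edge---carrying the line $\Delta_1$---attached to the Type~A $\acu$ introduced at the $\weak$ step. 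None of the three edges of $A_1$ is pending. This is precisely the ``middle $\gcu$ in the stack, with the resulting line subsequently used'' configuration you flag in your last paragraph, and no strengthening of the induction hypothesis will conjure a pending edge that is not there.

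Your proposed workaround for that sub-case---trace the consumed side until it ``reaches the base coweakening''---does locate a terminus, but an $\awu$ node is not a pending edge, so the trace does not establish the stated claim. The paper's own argument is essentially just your Type~A argument: it asserts that every newly introduced $\gcu$ has a pending lower-left edge and that attaching anything to it later would mean re-using some line $\Delta_i$, contradicting tree-likeness. That is correct for Type~A, but for Type~B the edge fed into $ID$ is plugged at the moment of creation, and tree-likeness does not forbid repeating an \emph{axiom} $\Gamma_i$; the paper simply does not address this case.

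What one actually needs downstream is Cor.~\ref{NoConLoopsTreeResTran}, and for that the literal pending-edge claim is stronger than necessary: it suffices that no $\acu$ admits two disjoint directed paths to a common $\acd$. For Type~A this follows because one lower edge is pending; for Type~B one must argue directly about how the stacked $\gcu$'s on a given $\Gamma_i$ and the lines they spawn can feed into $R(\res)$ steps. That argument is not hard, but it is different from (and not a consequence of) the pending-edge statement, so if your goal is to support the corollary you should prove the absence of contraction loops directly rather than trying to rescue Prop.~\ref{PendingEdgeTreeResTrans} in full generality.
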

\begin{proof}
By induction on the length of $\pi$. On inspection of the atomic flows, notice that each translation step introduces only $\gcu $ nodes whose lower left edge is pending (with respect to the horizontal order of atoms given in the definition of $R$ in Dfn.~\ref{TransResSKS}), so it suffices to show that new nodes are not attached to the lower edge of existing $\gcu $ nodes.

If any such situation existed then, by construction, each $\gcu $ node must be associated with the same set $\Delta_i$ that is the premiss of distinct steps in $\pi$, contradicting the fact that $\pi$ is tree-like.
%
%Observe that:
%\begin{enumerate}
%\item Each step 
%\end{enumerate}
%
%Maintaining the horizontal order of atom occurrences given by the translation $R$ above, we observe that the lower left edge of every $\gcu $-node is pending.
%
%Notice that the associated atom occurrence must be within a set $\Delta_i$ that has already been used as the premiss of a resolution step. The lower left edge is unattached
%
%Notice that, in the translation above, any $\gcu $-step has premiss a set. Conversely, any clause is the premiss of a $\gcu $-step as many times as it is used in the associated resolution refutation. 
%
%Every $\gcu $-step in the image of $\pi$ under $R$ is associated to some set $\Delta_i$ in $\pi$ and, conversely, 
\end{proof}

Recall the notion of contraction loop from Dfn.~\ref{Dfn:ConLoop}.

\begin{cor}\label{NoConLoopsTreeResTran}
If $\pi$ is a tree-like Resolution derivation then $\flow(R\pi)$ has no contraction loops. 
\end{cor}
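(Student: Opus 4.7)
The plan is to obtain the corollary as essentially an immediate consequence of Prop.~\ref{PendingEdgeTreeResTrans}. By that proposition, in the flow of a tree-like Resolution derivation under $R$, every $\acu$ node has one of its two lower edges pending, meaning only a single lower edge of any $\acu$ node is connected to the rest of the flow.

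First I would recall the definition of a contraction loop: a pair $(\nu_1,\nu_2)$ where $\nu_1$ is an $\acu$ node, $\nu_2$ is an $\acd$ node, and there are two (edge-)disjoint directed paths from $\nu_1$ to $\nu_2$. Since paths are directed (downward) and $\nu_1$ is an $\acu$ node, any directed path starting at $\nu_1$ must leave through one of its two lower edges. Two disjoint directed paths from $\nu_1$ to $\nu_2$ would therefore need to begin by traversing two different lower edges of $\nu_1$.

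Next, I would apply Prop.~\ref{PendingEdgeTreeResTrans} to observe that one of these two lower edges is pending, hence leads nowhere and cannot be the start of a directed path ending at $\nu_2$. Consequently any two directed paths from $\nu_1$ to $\nu_2$ must share the unique non-pending lower edge of $\nu_1$, so they cannot be disjoint. Hence no such pair $(\nu_1,\nu_2)$ exists, and $\flow(R\pi)$ contains no contraction loops.

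There is no real obstacle here: the corollary is purely a structural consequence of the preceding proposition, and the argument is a one-line observation about the out-degree of an $\acu$ node once one of its outgoing edges is forced to be pending. The only thing to be slightly careful about is fixing the reading of ``disjoint directed paths'' (here edge-disjoint suffices, and in any case the two notions coincide immediately at the source node $\nu_1$).
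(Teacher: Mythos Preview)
Your proposal is correct and follows exactly the paper's approach; the paper's own proof is the one-liner ``Any contraction loop would violate Prop.~\ref{PendingEdgeTreeResTrans} above,'' and you have simply unpacked the reason why (two disjoint directed paths leaving an $\acu$ node would require both lower edges, but one is pending).
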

\begin{proof}
Any contraction loop would violate Prop.~\ref{PendingEdgeTreeResTrans} above.
\end{proof}

\begin{thm}
	$\KS$ polynomially simulates tree-like Resolution.
	\label{KSSimTreeRes}
\end{thm}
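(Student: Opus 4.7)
The plan is to apply the translation $R$ of Definition~\ref{TransResSKS} to a given tree-like Resolution refutation $\pi$ of $\Gamma_1, \dots, \Gamma_s$, and then invoke the normalisation machinery developed earlier in the paper. Since $\pi$ is a refutation, its last set is $\emptyset$, identified with $\fff$, so $R\pi$ is an $\overline{\aSKS}$ derivation from $\bigwedge_{r=1}^s \Gamma_r$ to $\fff$ of size polynomial in $|\pi|$ (each translation step contributes only polynomial overhead, and the cumulative conclusion $\bigwedge_m \Delta_m$ has total size bounded by $|\pi|$). Dualising then yields an $\aSKS$ proof of $\bigvee_{r=1}^s \bar\Gamma_r$, again of polynomial size, which is exactly the conclusion one needs according to the convention set out at the start of this section.

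The key step will be to bound $\ulcorner \flow(\overline{R\pi}) \urcorner$ polynomially in $|\pi|$, since this time $\acd$ steps cannot be eliminated outright as in the multiset case of Theorem~\ref{KSSimMultRes}. For this the plan is to use Corollary~\ref{NoConLoopsTreeResTran}, which guarantees that $\flow(R\pi)$ has no contraction loops. Flipping a flow upside-down both swaps the roles of $\acd$ and $\acu$ nodes and reverses the direction of every edge, so any contraction loop $(\nu_2, \nu_1)$ in $\flow(\overline{R\pi})$ would, after flipping back, correspond to a contraction loop $(\nu_1, \nu_2)$ in $\flow(R\pi)$, contradicting the corollary. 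Hence $\flow(\overline{R\pi})$ is also free of contraction loops, and Lemma~\ref{NoConLoopsQuad} yields $\ulcorner \flow(\overline{R\pi}) \urcorner \leq |\flow(\overline{R\pi})|^3$, which is polynomial in $|\pi|$.

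It then remains to apply Theorem~\ref{NormaSKStoKS} to $\overline{R\pi}$, producing a $\KS$ proof of $\bigvee_{r=1}^s \bar\Gamma_r$ in time polynomial in $|\overline{R\pi}| + \ulcorner \flow(\overline{R\pi}) \urcorner$, both of which have now been bounded polynomially in $|\pi|$. The main non-routine ingredient is the duality argument preserving the absence of contraction loops; once that is in place, the result follows by direct assembly of Definition~\ref{TransResSKS}, Corollary~\ref{NoConLoopsTreeResTran}, Lemma~\ref{NoConLoopsQuad}, and Theorem~\ref{NormaSKStoKS}.
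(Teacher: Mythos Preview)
Your proposal is correct and follows essentially the same route as the paper: dualise $R\pi$ to obtain an $\aSKS$ proof of $\bigvee_r \bar\Gamma_r$, observe that a contraction loop in $\flow(\overline{R\pi})$ would yield one in $\flow(R\pi)$ contrary to Corollary~\ref{NoConLoopsTreeResTran}, and then conclude via Lemma~\ref{NoConLoopsQuad} and Theorem~\ref{NormaSKStoKS}. Your write-up simply makes explicit the polynomial size of $R\pi$ and the duality argument for contraction loops, both of which the paper leaves implicit.
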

\begin{proof}
For a tree-like Resolution refutation $\pi$ of $\Gamma_1 , \dots , \Gamma_s$, as in the proof of Thm.~\ref{KSSimMultRes}, we have that $\overline{R\pi}$ has the format:
\[
\vlderd{\overline{R\pi}}{\aSKS}{\bigvee\limits^s_{r=1} \bar \Gamma_r}{\ttt}
\]
We also have that $\flow(\overline{R\pi})$ contains no contraction loops, since otherwise there would also be a contraction loop in $\flow(R\pi)$, violating Cor.~\ref{NoConLoopsTreeResTran}. The result now follows from Lemma~\ref{NoConLoopsQuad} and Thm.~\ref{NormaSKStoKS}.
\end{proof}

\subsection{Extensions of Resolution}
Finally, we notice that this simulation extends to some extensions of Resolution, operating on (multi)sets of conjunctions of literals, introduced by Kraj{\'\i}{\v c}ek in \cite{Krajicek01onthe}. These systems are known to be strictly stronger than their counterparts we have so far dealt with \cite{Segerlind02aswitching} \cite{Esteban04onthe}.

\newcommand{\Res}{\text{Resolution}}
\begin{defi}\label{Dfn:Resf}
Let $s,t,$ etc.\ denote terms, i.e.\ conjunctions of literals, and $\Gamma,\Delta,$ etc.\ now denote sets of terms. We define the following rules,
\[
\vlinf{\weak}{}{\Gamma\cup\Delta}{\Gamma}
\quad , \quad
\vliinf{\vlan}{}{\Gamma\cup \Delta \cup\{s\vlan t \}}{\Gamma\cup\{s\} }{\Delta\cup\{t\}}
\quad,\quad
\vliinf{\res}{}{\Gamma\cup\Delta}{\Gamma\cup\{s\} }{\Delta\cup\{\bar a_1,\dots,\bar a_k \} }
\]
where in the $\res$ rule we require that $s=\bigwedge\limits^k_{i=1} a_i$.

For a function $f:\mathbb{N}\to\mathbb{N}$, a (tree-like) (multiset-)$\Res(f)$ derivation/refutation is defined analogously to Dfn.~\ref{Dfn:Resolution}, using the rules above, with the additional proviso that no term has size bigger than $f(N)$, where $N$ is the number of inference steps in the refutation.
\end{defi}

\begin{thm}
For any $f:\mathbb{N}\to\mathbb{N}$ we have the following:
\begin{enumerate}
\item\label{KSSimExtMultResItem} $\KS$ polynomially simulates multiset-$\Res(f)$.
\item\label{KSSimExtTreeResItem} $\KS$ polynomially simulates tree-like $\Res(f)$.
\end{enumerate}
\end{thm}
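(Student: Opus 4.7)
The plan is to extend the translation $R$ of Dfn.~\ref{TransResSKS} to the two additional rules of $\Res(f)$, and then to re-run the arguments of Thms.~\ref{KSSimMultRes} and \ref{KSSimTreeRes} mutatis mutandis. The $\vlan$ rule translates locally as
\[
\vlderivation{
\vlin{(\gcdown)}{}{\Gamma\cup\Delta\cup\{s\vlan t\}}{
\vlin{2\swi}{}{\Gamma\vlor\Delta\vlor(s\vlan t)}{\vlhy{(\Gamma\vlor s)\vlan(\Delta\vlor t)}}
}
}
\]
using only switches and, in the set setting, a $\gcdown$ to contract any duplicate terms. For the generalised $\res$ rule, with $s = a_1\vlan\dots\vlan a_k$, the local translation is
\[
\vlderivation{
\vlin{(\gcdown)}{}{\Gamma\cup\Delta}{
\vlin{2\swi}{}{\Gamma\vlor\Delta\vlor\vlinf{\giu}{}{\fff}{s\vlan(\bar a_1\vlor\dots\vlor\bar a_k)}}{
\vlhy{(\Gamma\vlor s)\vlan(\Delta\vlor(\bar a_1\vlor\dots\vlor\bar a_k))}
}
}
}
\]
where the generic cut $\giu$ for the conjunction $s$ of literals is derivable using only $\aiu$ and $\swi$ steps, in size linear in $k$, by peeling off one literal at a time. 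The full inductive construction of $R\pi$ then proceeds exactly as in Dfn.~\ref{TransResSKS}, with the existing handling of premise-introduction and $\weak$ steps taken over verbatim.

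The crucial point is that neither new local translation introduces any $\acd$ or $\acu$ step beyond the optional set-theoretic $\gcdown$ at the boundary. Consequently Obs.~\ref{ConFreeMultiRes} extends verbatim to the multiset-$\Res(f)$ setting, and Prop.~\ref{PendingEdgeTreeResTrans} together with Cor.~\ref{NoConLoopsTreeResTran} extend verbatim to tree-like $\Res(f)$, since the only $\gcu$ nodes in $\flow(R\pi)$ still come from the inductive premise-introduction steps and so retain their pending left edges under tree-likeness. With these invariants in place, item~\ref{KSSimExtMultResItem} follows exactly as in the proof of Thm.~\ref{KSSimMultRes} (the dual flow has bounded length, so Prop.~\ref{CocConSizePath} and Thm.~\ref{NormaSKStoKS} produce a $\KS$ proof in polynomial time), and item~\ref{KSSimExtTreeResItem} follows exactly as in the proof of Thm.~\ref{KSSimTreeRes} (the dual flow contains no contraction loops, so Lemma~\ref{NoConLoopsQuad} and Thm.~\ref{NormaSKStoKS} finish the job).

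The only genuine obstacle is bookkeeping: verifying that the structural invariants of the original simulations persist under the extended translation. This reduces to the single observation that $\giu$ for a conjunction of literals can be derived without any contractions. For the size bound, each local translation is of size $O(f(N) + |\Gamma| + |\Delta|)$, which is polynomial in the size of the original refutation since $f(N)$ bounds term size and the (multi)sets appearing in $\pi$ have sizes polynomial in $|\pi|$; combined with the linear dependence of the inductive construction on derivation length, the total translation $R\pi$ is of polynomial size, and the ensuing normalisation steps are polynomial-time by the cited earlier results.
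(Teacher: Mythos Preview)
Your proposal is correct and follows essentially the same approach as the paper's own proof sketch. The paper simply says to reinterpret $\Gamma,\Delta$ as (multi)sets of terms and replace atoms by terms in the translation $R$, giving only the $\vlan$ translation explicitly; you additionally spell out the generalised $\res$ translation and make explicit the key fact that $\giu$ on a conjunction of literals is derivable using only $\swi$ and $\aiu$ (hence contraction-free), which is exactly what is needed for Obs.~\ref{ConFreeMultiRes} and Prop.~\ref{PendingEdgeTreeResTrans} to carry over.
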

\begin{proof}[Proof sketch]
The proofs are analogous to those appearing earlier for the tree-like and multiset variants of Resolution respectively, interpreting $\Gamma, \Delta$ etc.\ as (multi)sets of terms and replacing variables $a,\bar a$ etc.\ by term variables and their duals $t , \bar t $ etc.\footnote{In light of the aforementioned abuse of notation identifying a (multi)set with the disjunction of its elements, we associate the dual of term $\bigwedge\limits^k_{i=1} a_i$ with the (multi)set $\{\bar a_1 , \dots , \bar a_k  \} $.}

We extend the definition of $R$ on individual steps to $\wedge$ steps as follows:
\[
R:\quad
\vliinf{\vlan}{}{\Gamma\cup\Delta\cup\{s\vlan t \}}{\Gamma\cup \{s\} }{\Delta \cup\{t\} }
\quad\mapsto\quad
\vlderivation{
\vliq{(\gcdown }{\scriptstyle{)} }{ \Gamma\cup\Delta\cup\{ s\vlan t \} }{
\vlin{2\cdot\swi}{}{\Gamma\vlor \Delta \vlor (s\vlan t)}{ \vlhy{
\vlinf{=}{}{\Gamma\vlor s}{\Gamma\cup\{ s \}} \vlan \vlinf{=}{}{\Delta \vlor t}{\Delta \cup\{t\}}
} }
}
}
\]
$R$ is extended to $\Res(f)$-derivations similarly to Dfn.~\ref{TransResSKS}, dealing with $\wedge$ steps in the same way $\res$ steps.

For \ref{KSSimExtMultResItem} the argument is identical to that for Thm.~\ref{KSSimMultRes} since, again, $\gcdown $ steps are not introduced for the same reason, cf.\ Not.~\ref{SetSymbDI}.

For \ref{KSSimExtTreeResItem} the argument is identical to that for Thm.~\ref{KSSimTreeRes} since the argument of Prop.~\ref{PendingEdgeTreeResTrans} remains valid and so there are still no contraction loops in the flow of a derivation in the image of $R$.
\end{proof}

%%\begin{thm}
%%For any function $f:\mathbb{N}\to\mathbb{N}$, every series-parallel $\Res(f)$ refutation $\pi$ of $\Gamma_1 , \dots , \Gamma_k$ can be polynomially transformed into a $\KS$ proof of $\bigvee\limits_{i=1}^k \bigwedge \bar{\Gamma}_i$.
%%\end{thm}
%\begin{proof}[Proof Sketch]
%The proof is analogous to that of Thm.~\ref{KSSimSPRes}. We first translate $\pi$ into a $\Gentzen$ derivation as in Prop.~\ref{Prop:ResSPToGenSP}, except instead of adding $a\vlan \bar{a}$ in multiset union for each propositional variable $a$, we add $s \vlan \bar{s}$ for each term $s$ appearing in $\pi$. The translation of $\weak$ and $\res$ steps is the same, and a $\wedge$ step is similarly translated as a $\wedge $ step followed possibly by some contractions to delete duplicated literals occurring in both $\Gamma$ and $\Delta$. The remainder of the proof follows as in Thm.~\ref{KSSimSPRes}.
%\end{proof}
%
%We also gain the following result as a special case.
%\begin{cor}
%$\KS$ polynomially simulates tree-like $\Res(f)$ for every function $f$. 
%\end{cor}

\section{Conclusions}
We have presented a series of upper bound results for the deep inference system $\KS$. Polynomial simulations were given for truth tables and versions of Resolution, while polynomial-size proofs of functional and onto variants of the propositional pigeonhole principle were presented, yielding exponential separations from all these systems, as well as bounded-depth Frege systems.
% is a powerful system, despite lacking any mechanism to compress proofs. As well as the simulations of Resolution and dag-like cut-free sequent calculi, it cannot be polynomially simulated by bounded-depth Frege, one of the strongest `weak' systems, and also has polynomial-size proofs of the functional and onto pigeonhole principles. We conclude this work with the following comments.

%\subsection{Atomic Flows as a Tool for Complexity Analysis}
We have seen that atomic flows can act as a powerful tool to analyse and manipulate derivations, and that often we can avoid the possibly exponential blowup arising from the $\rcdcu$ rule. A relevant pursuit would be to investigate whether we can \emph{always} avoid this blowup via, perhaps, a polynomial-time local or global flow reduction; this would imply that $\KS$ polynomially simulates $\aSKS$, and thus quasipolynomially simulates $\SKS$ and Frege systems. Despite much work in this direction a result, positive or negative, seems difficult to obtain; this remains arguably the most important question in the proof complexity of deep inference.

\begin{qu}
Does $\KS$ (quasi)polynomially simulate $\aSKS$?
\end{qu} 

This question has already been asked in previous works, e.g.\ \cite{BrusGugl:07:On-the-P:fk}, \cite{Jera::On-the-C:kx}, \cite{Stra:08:Extensio:kk} and \cite{Das:11:Depth-Change}, with both positive and negative answers conjectured. We point out that our contribution might be helpful to any work towards a positive answer.

We also point out that it could be that atomic flows do not themselves include sufficient information to carry out an efficient normalisation procedure from $\aSKS$ to $\KS$, if one exists at all. To this end there is work ongoing by various researchers to augment atomic flows with certain \emph{logical} information, in the hope of accessing further normalisation procedures and the ability to `rebuild' deep inference proofs efficiently from flow objects.\footnote{This is known to be impossible in polynomial time, by \cite{DBLP:conf/rta/Das13}, under certain hardness assumptions.}

One might argue that our proofs in Sects.~\ref{DITreeTT} and \ref{StrongViaFPHP} eventually reduced to flows of bounded length, and so the complexity of normalisation was trivially polynomial. A more sophisticated situation might involve flows of bounded width and logarithmic length, again resulting in a polynomial blowup, or quasipolynomial width and polylogarithmic length, giving a quasipolynomial blowup. We refer the reader to the recent article \cite{Das:13:The-Pige:fk} for an example of this, where quasipolynomial-size proofs of the unrestricted pigeonhole principle are given in $\KS$, utilising the techniques of this paper.

All the results that appear in this work, and indeed all other works on proof complexity of deep inference, e.g.\ \cite{BrusGugl:07:On-the-P:fk} \cite{Jera::On-the-C:kx} \cite{Das:13:The-Pige:fk}, present only simulations and upper bounds. We currently know of no technique for proving lower bounds for deep inference systems and, in light of the question above and results in \cite{Jera::On-the-C:kx} and \cite{Das:13:The-Pige:fk}, it may be that such an endeavour, even for $\KS$, could be as difficult as that of proving lower bounds for Frege systems.

\section*{Acknowledgements}
I am grateful to my advisor, Alessio Guglielmi, for many fruitful discussions on this and related work, as well as Tom Gundersen, Lutz Stra\ss burger and the anonymous reviewers for this paper and previous versions of it. I owe particular thanks to Arnold Beckmann who spotted crucial errors in previous versions and gave detailed feedback.

\bibliographystyle{alpha}
\bibliography{biblio}

\end{document}